%% file: letter.tex
\documentclass[aps,prl,twocolumn,floatfix,nofootinbib]{revtex4-2}

\usepackage{amsmath,amsthm,amssymb}
\usepackage{graphicx} 
\usepackage{mathtools}
\usepackage{physics}
\usepackage{bbold} 
\usepackage[colorlinks = true, allcolors = purple]{hyperref}
\usepackage{tikz}
\usetikzlibrary{arrows.meta, shapes.geometric, calc}

\definecolor{topo}{HTML}{1B6E6B}
\definecolor{topobar}{HTML}{9DD8C9}
\definecolor{triv}{HTML}{B45626}
\definecolor{trivbar}{HTML}{F4C2A6}
\definecolor{cor}{HTML}{4A3F9E}
\definecolor{ccgray}{HTML}{888888}
\definecolor{crit}{HTML}{000000}
\definecolor{avcol}{HTML}{A03C2C}
\definecolor{bpcol}{HTML}{2F5586}
\definecolor{xlcol}{HTML}{2F5586}
\definecolor{zlcol}{HTML}{A03C2C}

\theoremstyle{definition}
\newtheorem{theorem}{Theorem}
\newtheorem{lemma}{Lemma}

\newtheorem{defn}{Definition}

\newcommand{\term}[1]{\left( #1 \right)}
\renewcommand{\comm}[1]{\left[ #1 \right]}
\newcommand{\curly}[1]{\left\{ #1 \right\}}

\newcommand{\figref}[1]{Fig.~\ref{#1}}
\newcommand{\sfigref}[2]{Fig.~\hyperref[#1]{\ref{#1}#2}}

\usepackage{xcolor}
\definecolor{mydarkred}{RGB}{100,0,0}
\definecolor{mydarkblue}{RGB}{0,0,100}
\definecolor{mydarkgreen}{RGB}{30,120,30}

\DeclareMathOperator{\diam}{diam}
\DeclareMathOperator{\artanh}{artanh}

\newcommand{\CC}{\mathcal{C}}
\newcommand{\LL}{\mathcal{L}}

\newcommand{\TC}{\text{TC}}
\newcommand{\cL}{\mathcal{L}}
\newcommand{\Cclus}{C_{\mathrm{clus}}}

\newcommand{\Caltech}{Department of Physics and Institute for Quantum Information and Matter, California Institute of Technology, Pasadena, CA 91125, USA}
\begin{document}
\title{Gapped Parent Hamiltonians for the Strongly Deformed Toric Code}
\author{Nandagopal Manoj}\email{nmanoj@caltech.edu}
\author{Zack Weinstein}
\author{Jason Alicea}
\affiliation{\Caltech}
\date{\today}

\begin{abstract}
Local non-unitary deformations of topologically ordered wavefunctions can drive transitions into peculiar states that challenge modern perspectives on gapped quantum matter.
The strongly deformed toric code offers a curious case, hosting $m$ anyon condensation alongside perimeter-law scaling of Wilson loops charged under an exact 1-form symmetry---properties that typically do not coexist in gapped ground states. Nevertheless, we rigorously construct local gapped parent Hamiltonians for these strongly deformed toric code states. The Hamiltonians we construct are not strictly finite-range, but contain sums of Wilson loop operators whose coefficients decay exponentially in their diameter. If one adopts standard locality bounds used to define gapped phases---which allow for such exponentially decaying terms---our construction shows that these states realize a trivial gapped phase. Within this locality class, we demonstrate that perimeter-law scaling of Wilson loops does not imply a spontaneously broken 1-form symmetry, and from a dual perspective, that long-range ferromagnetic order and perimeter-law disorder parameter correlations can coexist in a 2D gapped ground state. We evade a recent no-go theorem [Sahay et al., arXiv:2503.01977] by relaxing its assumptions in a manner that we quantify as benign in the thermodynamic limit. More broadly, our results highlight that stronger notions of locality are necessary for prohibiting these counterintuitive properties within a gapped phase.
\end{abstract}

\maketitle

Ground states of local gapped Hamiltonians have been rigorously proven to exhibit exponentially decaying correlations~\cite{HastingsKoma2006,hastings_locality_2010}. The nature of the converse relationship---the sufficient conditions for a quantum state to be the gapped ground state of a local Hamiltonian---remains a fundamental open problem outside of one spatial dimension. In one dimension, the identification of matrix product states as the relevant `corner of Hilbert space' enables an analytical classification of gapped phases~\cite{Haldane1983a,Haldane1983b,Chen_Gu_Wen_2010,Chen_Gu_Wen_2011,SchuchGarciaCirac2011} and underlies powerful variational techniques for their numerical study \cite{white_density_1992,ostlund_thermodynamic_1995,vidal_efficient_2003,verstaete_matrix_2006,cirac_matrix_2021}. Attaining an analogous understanding in two (and higher) dimensions potentially paves the way for similar applications.

Deformed toric code states [see Eq.~\eqref{eqn:defnofdefTC}] provide an interesting edge case that subtly challenges modern methods of classifying phases. These states arise from deforming toric code ground states by a non-unitary operator that creates $m$-anyon pairs, driving a quantum phase transition out of the topological phase beyond a critical deformation strength \cite{CastelnovoChamon2007DefToricCode}. In the resulting strongly deformed phase, a peculiar combination of properties emerges: (i) condensed $m$ anyons; (ii) an exact `electric' 1-form symmetry; and (iii) \emph{perimeter-law} scaling of Wilson loops \cite{HuxfordNguyenKim2023}. The final property is especially puzzling, given the first two: such a perimeter law is typically interpreted as a signature of deconfined $e$ anyons and a spontaneously broken 1-form symmetry~\cite{Gaiotto2015, McGreevyGLP2023}, both of which appear incompatible with a conventional $m$-condensed gapped phase~\cite{Burnell_anyon_2018}. Based on this observation, Ref.~\onlinecite{Sahay2025FunkyIsing} proved a rigorous no-go theorem showing---subject to several technical assumptions---that the strongly deformed toric code states are \emph{intrinsically gapless}; that is, they do not admit a local gapped parent Hamiltonian with a finite ground-state degeneracy.

\begin{figure}
\centering
\includegraphics[width = \columnwidth]{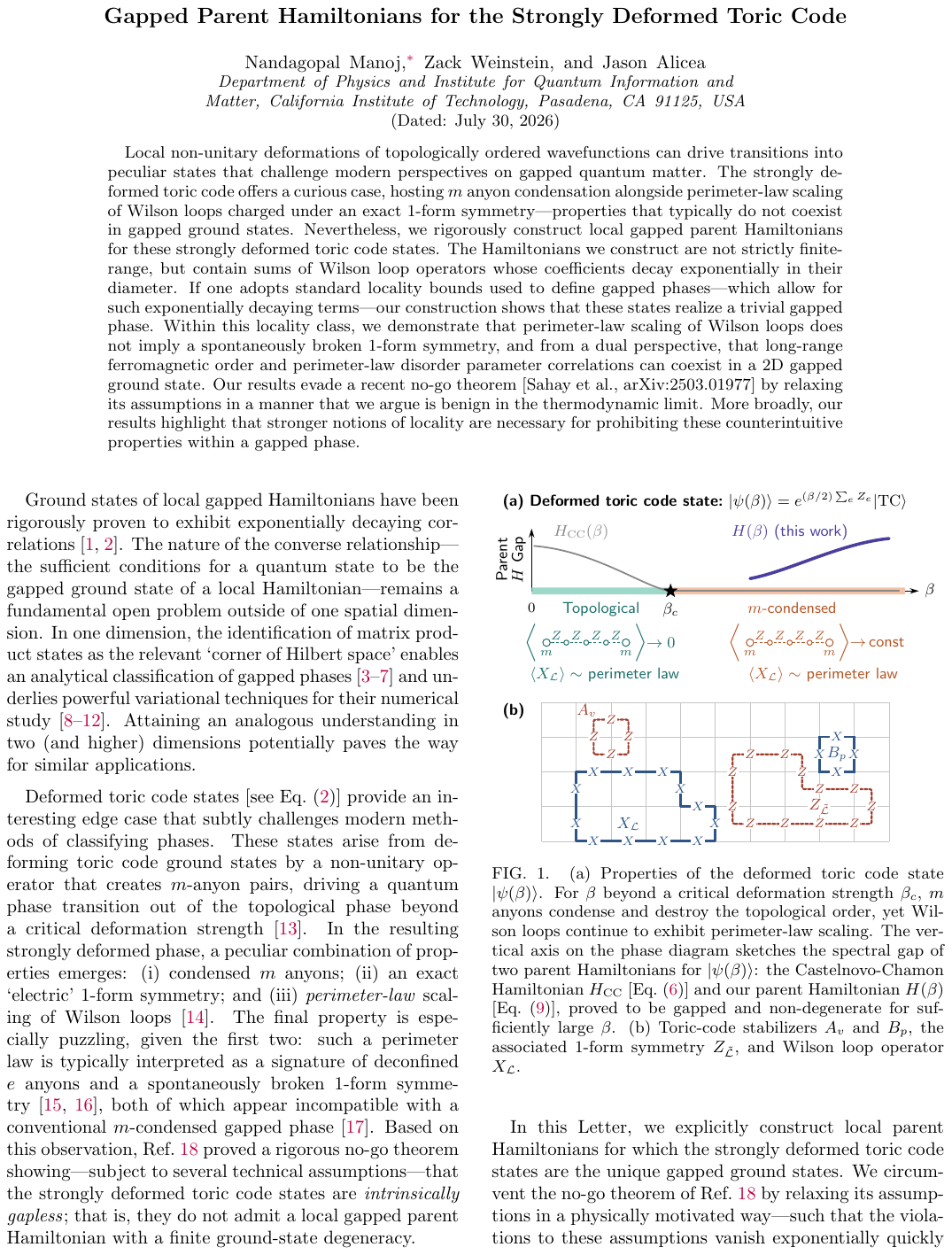}
\caption{(a) Properties of the deformed toric code state $\ket{\psi(\beta)}$.
For $\beta$ beyond a critical deformation strength $\beta_c$, $m$ anyons condense and destroy the topological order, yet Wilson loops continue to exhibit perimeter-law scaling. The vertical axis on the phase diagram sketches the spectral gap of two parent Hamiltonians for $\ket{\psi(\beta)}$: the Castelnovo-Chamon Hamiltonian $H_{\rm CC}$ [Eq.~\eqref{eq:HCC}] and our parent Hamiltonian $H(\beta)$ [Eq.~\eqref{eqn:Hbeta}], proved to be gapped and
non-degenerate for sufficiently large $\beta$.
(b) Toric-code stabilizers $A_v$ and $B_p$, the associated $1$-form symmetry
$Z_{\tilde{\mathcal{L}}}$, and Wilson loop operator $X_{\mathcal{L}}$.}
\label{fig:overview}
\end{figure}

\begin{table}[t]
\caption{Corollaries of the parent Hamiltonian construction. Gapped ground state is in reference to an exponential-in-diameter-local Hamiltonian [cf. Eq.~(\ref{eq:locality_bound})].}
\label{tab:corollaries}
\begin{ruledtabular}
\begin{tabular}
{p{0.95\columnwidth}}

\hangindent=1.2em
\hangafter=1
1.\ Perimeter-law Wilson loops in a gapped ground state do not imply $1$-form SSB. \\

\hangindent=1.2em
\hangafter=1
2.\ Long-range order parameter correlations and perimeter-law disorder correlations can coexist in a 2D gapped ground state. \\

\hangindent=1.2em
\hangafter=1
3.\ Toric code under strong Pauli-$Z$ decoherence is separable as a convex sum of trivial gapped ground states.   \\

\hangindent=1.2em
\hangafter=1
4.\ 2D classical Ising ferromagnet at low temperature is dual to a $\mathbb{Z}_2$ gauge theory with exponentially decaying interactions at high temperature. \\
\end{tabular}
\end{ruledtabular}
\end{table}

In this Letter, we explicitly construct local parent Hamiltonians for which the strongly deformed toric code states are the unique gapped ground states. We circumvent the no-go theorem of Ref.~\onlinecite{Sahay2025FunkyIsing} by relaxing its assumptions in a physically motivated way---such that the violations to these assumptions vanish exponentially quickly in the thermodynamic limit. {Our parent Hamiltonians naturally incorporate the perimeter law by containing sums of Wilson loop operators with coefficients decaying exponentially in their support; although the resulting Hamiltonians are not strictly finite-range, they nevertheless satisfy standard locality bounds commonly used to define and analyze gapped phases of matter~\cite{HastingsKoma2006,hastings_locality_2010}.} More broadly, we demonstrate that within the standard definition of locality, perimeter-law decay of Wilson loops is compatible with a trivial symmetric phase and is therefore not a reliable probe of 1-form spontaneous symmetry breaking (SSB). Our construction yields a number of nontrivial corollaries, highlighted in Table \ref{tab:corollaries}.

{\bf \emph{Deformed toric code.}}~Consider an $L_x \times L_y$ square lattice with periodic boundary conditions, where each edge $e$ hosts a qubit equipped with Pauli operators $X_e,Z_e$. The code space of the toric code is the simultaneous +1 eigenspace of the vertex stabilizers $A_v = \prod_{e \ni v} Z_e$ and the plaquette stabilizers $B_p = \prod_{e \in p} X_e$ [see Fig.~\ref{fig:overview}(b)]. In other words, it is the four-dimensional ground-state subspace of the Hamiltonian \cite{kitaev_fault-tolerant_2003}
\begin{equation} \label{eqn:H0TC}
    H_{\TC} = \sum_v(1-A_v) +\sum_p(1- B_p).
\end{equation}
We will initially focus on the logical `++' toric code state $\ket{\TC} = \sum_{\mathcal{L}} X_{\mathcal{L}} \ket{0}$, where $X_{\mathcal{L}} = \prod_{e \in \mathcal{L}} X_e$ is a product over a collection of (possibly disconnected or non-contractible) closed loops $\mathcal{L}$ in the direct lattice, and $\ket{0}$ is the simultaneous +1 eigenstate of each $Z_e$. This state satisfies $X_{\mathcal{L}} \ket{\TC} =  \ket{\TC}$ for any loop $\mathcal{L}$, and is therefore the +1 eigenstate of the two logical-$X$ operators.

Given $\ket{\TC}$, the (unnormalized) deformed toric code state is defined by \cite{CastelnovoChamon2007DefToricCode}
\begin{align} \label{eqn:defnofdefTC}
    \ket{\psi(\beta)} &=  e^{\frac{\beta}{2}\sum_e Z_e}\ket{\TC} \propto 
    \sum_{\cL} e^{-\beta \abs{\cL}} X_{\cL}\ket{0}
\end{align}
with $\abs{\cL}$ the number of edges in $\cL$. This state is symmetric under the 1-form symmetry $Z_{\tilde{\mathcal{L}}}$ [Fig.~\ref{fig:overview}(b)] for any \emph{contractible} loop $\tilde{\mathcal{L}}$ in the dual lattice, and thus differs slightly from the deformed state considered in Ref.~\onlinecite{Sahay2025FunkyIsing}; we address the latter as well in later sections. 

The norm of $\ket{\psi(\beta)}$, 
\begin{align} \label{eqn:partition_function}
    \braket{\psi(\beta)}{\psi(\beta)} \propto \sum_{\cL} e^{-2 \beta \abs{\cL}},
\end{align}
is proportional to the partition function of the two-dimensional (2D) classical Ising model, where each $\mathcal{L}$ represents a configuration of magnetic domain walls~\cite{fn1}. This connection suggests that $\ket{\psi(\beta)}$ exhibits a phase transition in the 2D classical Ising$^\star$ universality class at a critical deformation strength $\beta_c = \frac{1}{2}\ln(1 + \sqrt{2}) \approx 0.441$. 
Within the strongly deformed phase $\beta > \beta_c$, $\ket{\psi(\beta)}$ exhibits long-range order in the string operator $Z_{\tilde P} =\prod_{e \in {\tilde P}} Z_e$ that creates $m$ anyons at the endpoints of the dual-lattice path ${\tilde P}$,
indicating that $m$ anyons are condensed~\cite{Burnell_anyon_2018}. The resulting state is topologically trivial, as indicated by a vanishing topological entanglement entropy~\cite{CastelnovoChamon2007DefToricCode, LevinWen2006, KitaevPreskill2006}. 

In typical Hamiltonian deformations of $H_{\TC}$ that condense the $m$ anyons while preserving the exact 1-form symmetry \cite{Wegner_duality_1971,KogutRMP1979}, the resulting ground states exhibit area-law scaling of Wilson loop expectation values: $\expval{X_{\mathcal{L}}} \sim e^{-\alpha \text{Area}(\mathcal{L})}$, for a large contractible loop $\mathcal{L}$. In contrast, Ref.~\onlinecite{HuxfordNguyenKim2023} emphasized that $\ket{\psi(\beta)}$ exhibits perimeter-law scaling $\expval{X_{\mathcal{L}}} \sim e^{-\mu \abs{\mathcal{L}}}$ for any finite $\beta$, even within the strongly deformed phase. These authors interpreted the perimeter-law scaling of Wilson loops in the $\beta > \beta_c$ phase as indicative of spontaneous 1-form symmetry-breaking \emph{without} topological order, in apparent violation of the generalized Landau paradigm~\cite{McGreevyGLP2023}. Subsequently, Ref.~\onlinecite{Sahay2025FunkyIsing} suggested that the combination of exact 1-form symmetry under $Z_{\tilde{\mathcal{L}}}$, long-range order in $\expval{Z_{\tilde P}}$, and perimeter-law scaling in $\expval{X_{\mathcal{L}}}$ cannot arise in the ground state of a gapped Hamiltonian with a finite degeneracy. Here we will offer an alternative perspective on the state $\ket{\psi(\beta)}$ by constructing, at large deformation strengths $\beta$, an exact local parent Hamiltonian for which $\ket{\psi(\beta)}$ is the unique gapped ground state.

{\bf \emph{Parent Hamiltonian constructions.}}~Equation~\eqref{eqn:defnofdefTC} reveals two complementary perspectives on $\ket{\psi(\beta)}$. At small $\beta$, where the topological order persists, $\ket{\psi(\beta)}$ is most naturally viewed as a deformation of the $H_{\rm TC}$ ground state $\ket{\rm{TC}}$.  Conversely, large $\beta$ massacres the topological order and exponentially suppresses large loops in the representation on the right side of Eq.~\eqref{eqn:defnofdefTC}.  There it is more natural to view $\ket{\psi(\beta)}$ as a deformation of the trivial product state $\ket{0}$, which is the unique zero-energy ground state of
\begin{equation}\label{eqn:H0}
  H_0 = \sum_e (1-Z_e).
\end{equation}
We will show that these two viewpoints lead to distinct parent Hamiltonians---adiabatically connected to $H_{\TC}$ and $H_0$, respectively---that are local and gapped in their `natural' regimes.  

Let us start with the former viewpoint and define manifestly
Hermitian, positive-semidefinite operators $Q_p$ by the following deformation of the $B_p$ projectors in Eq.~\eqref{eqn:H0TC}:
\begin{equation}
    \begin{split}
        Q_p &= e^{-\frac{\beta}{2} \sum_{e \in p} Z_e} (1 - B_p) e^{- \frac{\beta}{2} \sum_{e \in p} Z_e} \\
            &= e^{-\beta \sum_{e \in p} Z_e} - B_p.
    \end{split}
\end{equation}
Notice that the exponentials in the first line contain only pieces of the non-unitary deformation $e^{ \frac{\beta}{2} \sum_e Z_e}$ from Eq.~\eqref{eqn:defnofdefTC} that do not commute with a particular $B_p$. All of the operators $(1-A_v)$ and $Q_p$ clearly annihilate $\ket{\psi(\beta)}$, which accordingly admits the finite-range, frustration-free parent Hamiltonian
\begin{equation}
    H_{\rm CC}(\beta) = \sum_v(1-A_v) + \sum_p Q_p
    \label{eq:HCC}
\end{equation}
for any $\beta$. One can readily show that $H_{\text{CC}}(\beta)$ is an exact parent Hamiltonian for any toric code ground state deformed by $e^{\frac{\beta}{2}\sum_e Z_e}$.  
Equation~\eqref{eq:HCC} is precisely the Castelnovo-Chamon Hamiltonian \cite{CastelnovoChamon2007DefToricCode} and is indeed gapped for $\beta<\beta_c$ but, interestingly, gapless for $\beta > \beta_c$~\cite{Sahay2025FunkyIsing}.

Next we apply similar logic at large $\beta$, where $\ket{\psi(\beta)}$ appears `close' to $\ket{0}$. First note that the operator 
\begin{equation}\label{eq:Vdef1}
\mathcal{Z} = \sum_{\cL} e^{-\beta \abs{\cL}} X_{\cL} \propto \sum_{\qty{s_v = \pm 1}} e^{ K \sum_{\expval*{v v'}} X_{vv'} s_v s_{v'}  }
\end{equation}
acting on $\ket{0}$ in Eq.~\eqref{eqn:defnofdefTC} is positive-definite for all $\beta > 0$. This observation follows from the right side of Eq.~\eqref{eq:Vdef1}, which identifies $\mathcal{Z}$ with the partition function of a bond-disordered Ising model at inverse temperature $K = \artanh(e^{-\beta}) >0$.  There, $s_v$'s represent auxiliary classical spins on the vertices $v$ of the lattice, and the Pauli operators $X_{v v'} \equiv X_e$ on the links $e$ connecting vertices $v,v'$ encode signs of the Ising couplings. Positive-definiteness of $\mathcal{Z}$ guarantees that the operator
\begin{equation}
    W = \log \mathcal{Z} \equiv \sum_{\cL} J_{\cL} X_{\cL}
    \label{eq:Vdef}
\end{equation}
is well-defined and Hermitian. Furthermore, $W$ commutes with each vertex stabilizer $A_v$ and hence can be expanded as a sum over loop operators $X_{\mathcal{L}}$ with real-valued coefficients $J_{\mathcal{L}}$, as shown above.

At this point we have expressed $\ket{\psi(\beta)}$ as the trivial state $\ket{0}$ deformed by a non-unitary operator $\mathcal{Z} = e^{W}$, where $W$ is a sum of commuting operators---paralleling the middle representation in Eq.~\eqref{eqn:defnofdefTC}.  Defining operators $W_e = \sum_{\mathcal{L} \ni e} J_{\mathcal{L}} X_{\mathcal{L}}$ that contain only the components of $W$ that anticommute with $Z_e$, we analogously obtain our alternate
frustration-free parent Hamiltonian for $\ket{\psi(\beta)}$,
\begin{equation} \label{eqn:Hbeta}
    \begin{split}
        H(\beta) &= \sum_e e^{-W_e}\left(1 - Z_e\right)e^{-W_e} \\
        &= \sum_e \qty( e^{- 2W_e} - Z_e ).
    \end{split}
\end{equation}
Indeed, similar to the Castelnovo-Chamon Hamiltonian~(\ref{eq:HCC}), $H(\beta)$ sums Hermitian, positive-semidefinite terms that individually annihilate $\ket{\psi(\beta)}$.

Contrary to the Castelnovo-Chamon Hamiltonian, locality of $H(\beta)$ is not manifest but can be argued heuristically at large $\beta$ as follows.  In the extreme limit $\beta \to \infty$, each $J_{\mathcal{L}}$ for non-empty loop configurations $\mathcal{L}$ vanishes, and $H(\beta \to \infty)$ reduces to the trivial local parent Hamiltonian \eqref{eqn:H0}. More generally, each $J_{\mathcal{L}}$ decays asymptotically in $\abs{\mathcal{L}}$ as $e^{-\beta \abs{\mathcal{L}}}$, exponentially suppressing large loop contributions to $W_e$. Furthermore, as we carefully explain below, the logarithm in Eq.~\eqref{eq:Vdef} ensures that potentially problematic nonlocal loop configurations consisting of far-separated disconnected loops are effectively absent in $W$.
This observation---which one can readily verify to low orders by Taylor expanding the logarithm---closely relates to the linked-cluster theorem from diagrammatic perturbation theory \cite{zinn-justin_quantum_2021}, and crucially ensures extensivity of the ``free energy'' $W$ for the disordered Ising model $\mathcal{Z}$ \cite{kardarStatisticalPhysicsFields2007}. Consequently, it is natural to expect that each operator $e^{-2W_e}$ is locally supported near the edge $e$, and can be organized as a sum of terms that decay exponentially in their diameter about $e$ in the large-$\beta$ regime. 

{\bf \emph{Proving locality and spectral gap.}}~We now outline a rigorous proof that, for sufficiently large $\beta$, our parent Hamiltonian $H(\beta)$ is local in the sense of Refs.~\onlinecite{HastingsKoma2006,hastings_locality_2010, Sahay2025FunkyIsing} and consequently exhibits a finite spectral gap. Technical details are deferred to the Supplemental Material~\cite{SOM}. Our approach uses the Mayer cluster expansion~\cite{Friedli_Velenik_2017} to explicitly write Eq.~\eqref{eq:Vdef} as a sum of exponentially localized terms, bolstering the intuition developed above. Locality of $H(\beta)$ arises directly from properties of the cluster expansion, and the spectral gap at large $\beta$ then follows from the gap stability results of Refs.~\onlinecite{BravyiHastingsMichalakis2010,BravyiHastings2011}.

To this end, we interpret $\mathcal{Z}$ in Eq.~(\ref{eq:Vdef1}) as the (operator-valued) partition function for a hard-core \emph{polymer model}. Each loop configuration $\mathcal{L}$ is first decomposed into a collection of connected loops $\gamma$ (`polymers'), each carrying the operator-valued weight $w(\gamma) = e^{-\beta \abs{\gamma}} X_{\gamma}$. Each pair of polymers $\gamma, \gamma'$ is then assigned a hard-core interaction $\delta(\gamma, \gamma') = 0$ if they intersect and $1$ otherwise. With this notation, Eq.~\eqref{eq:Vdef1} can be rewritten as
\begin{equation}
    \mathcal{Z} = \sum_{\Gamma' \subseteq \Gamma } \left[\prod_{\gamma \in \Gamma'} w(\gamma) \right] \left[\prod_{\curly{\gamma, \gamma'} \subseteq \Gamma'}  \delta(\gamma, \gamma')\right] ,
    \label{eqn:polymer_model}
\end{equation}
where $\Gamma$ denotes the set of all possible polymers $\gamma$, and $\Gamma'$ sums over all subsets of $\Gamma$. The hard-core interactions ensure that the contributing subsets $\Gamma'$ are in one-to-one correspondence with the loop configurations $\mathcal{L}$ in Eq.~\eqref{eq:Vdef1}.

The Mayer cluster expansion expresses the `polymer free energy' $W$ as a sum over \emph{connected clusters} $\mathcal{C}$, each defined as a (possibly repeating) finite sequence of polymers $\term{\gamma_{1}, \gamma_{2}, \ldots, \gamma_{n}}$ whose union is connected~\cite{Friedli_Velenik_2017}.  
Eq.~\eqref{eq:Vdef} accordingly admits the more constrained form
\begin{equation}
    W = {\sum_{\mathcal{C}} f(\mathcal{C}) X_{\mathcal{C}}} ,
    \label{eq:V_Mayer}
\end{equation}
where $X_{\mathcal{C} = (\gamma_1,\dots,\gamma_n)} \equiv  X_{\gamma_1}X_{\gamma_2} \dots X_{\gamma_n}$, and the coefficients $f(\mathcal{C})$ are defined precisely in the End Matter.  We can then identify $W_e  = \sum_{\mathcal{C} \in \mathcal{S}_e} f(\CC) X_{\CC}$, where $\mathcal{S}_e \equiv \curly{\CC: \text{support}(X_{\CC}) \ni e}$ defines the set of connected clusters for which $X_\mathcal{C}$ anticommutes with $Z_e$. Our parent Hamiltonian then admits the useful representation $H(\beta) = H_0 + V(\beta)$, where
\begin{equation}
    V(\beta) = \sum_e \term{\exp\comm{-2\sum_{\mathcal{C} \in \mathcal{S}_e} f(\CC) X_{\CC}}-1}.
    \label{eq:HbetafC}
\end{equation}

In the Supplemental Material \cite{SOM}, we derive several properties of the coefficients $f(\mathcal{C})$ relevant for the locality of $H(\beta)$. We first show that the coefficients decay exponentially with the size of the constituent polymers defining $\mathcal{C}$---i.e., $\abs{f(\mathcal{C})} \lesssim \prod_{\gamma \in \mathcal{C}} e^{-\beta \abs{\gamma}}$. This fact alone does not suffice for locality, since each $W_e$ sums over combinatorially many connected clusters. To prove that $H(\beta)$ is bounded and local, we further derive the crucial inequality (see also Ref. \onlinecite{FernandezProcacci2007} for a tighter bound)
\begin{equation} \label{eqn:fCbound}
    \sum_{\mathcal{C} \in \mathcal{S}_e} \abs{f(\CC)} \leq \mathrm{const} \times e^{-4\beta}
\end{equation}
which involves the sum of \emph{all} $f(\mathcal{C})$'s contributing to a given $W_e$, and holds for $\beta > \beta^* = 2 + \log 3$.  These bounds are proved using graph-theoretic techniques that are standard to the cluster expansion~\cite{Friedli_Velenik_2017,Penrose1967,KoteckyPreiss1986}; the nontrivial step involves bounding the $\sim n!$ growth in the number of ways the same set of $n$ polymers can form a cluster. 
Equation~\eqref{eqn:fCbound}, together with the exponential decay of $f(\CC)$ in cluster size, ensures that  $H(\beta)$ is a sum of exponentially localized operators for sufficiently large $\beta$, in harmony with the intuition developed earlier.

Finally, since $H(\beta)$ in Eq.~\eqref{eqn:Hbeta} is obtained from a local perturbation $V(\beta)$ to the trivial gapped Hamiltonian $H_{0}$ in Eq.~\eqref{eqn:H0}, we can leverage the results of Bravyi et al.~\cite{BravyiHastingsMichalakis2010,BravyiHastings2011} to prove that $H(\beta)$ is also gapped for sufficiently large $\beta$. Explicitly, we first decompose $V(\beta) = \sum_{r \geq 1} \sum_{A \in S(r)} V_{r,A}$ into a sum of local terms, where $S(r)$ is the set of all $r \times r$ squares $A$ on the lattice, and $V_{r,A}$ exhibits trivial support outside of $A$. Next, we prove in the Supplemental Material~\cite{SOM} that the operator norm $\norm{V_{r,A}}$ of each local term can be (conservatively) upper-bounded as
\begin{equation}
\label{eq:VrA}
\norm{V_{r,A}} \leq J(\beta) e^{-r}, \quad J(\beta) = e^{-4 \beta} \times (4.11 \times 10^5) ,
\end{equation}
which shows that $\norm{V_{r,A}}$ decays (at least) exponentially in the diameter of $A$. With this precise definition of a local perturbation $V(\beta)$, Ref.~\onlinecite{BravyiHastingsMichalakis2010} proved that the gap $\Delta(\beta)$ of $H(\beta)$ is lower-bounded as $\Delta(\beta) \geq [1 - c_{1} J(\beta)] \Delta_{0}$ up to corrections that vanish super-polynomially in system size~\cite{fn2}, where $c_{1}$ is a constant and $\Delta_{0} = 2$ is the gap of $H_{0}$. Since $J(\beta)$ decays exponentially with $\beta$, we have shown that for sufficiently large $\beta$, our local parent Hamiltonian $H(\beta)$ admits the unique ground state $\ket{\psi(\beta)}$ with a finite spectral gap $\Delta(\beta)$~\cite{fn3}.

\textbf{\textit{Consistency with the no-go theorem.}}~Let us describe the relationship between our construction and the rigorous no-go theorem from Ref.~\onlinecite{Sahay2025FunkyIsing}. Thus far we have focused on the deformed logical `++' state---Eq.~\eqref{eqn:defnofdefTC}---which we now denote more explicitly by $\ket{\psi_{++}(\beta)}$. Our parent Hamiltonian $H(\beta)$ for this state does not violate the no-go theorem, since $\ket{\psi_{++}(\beta)}$ does not satisfy all of its assumptions: namely, the theorem requires an exact 1-form symmetry $Z_{\tilde{\mathcal{L}}}$ about both contractible \emph{and} non-contractible loops $\tilde{\mathcal{L}}$ in the dual lattice. To meet these assumptions, we should instead consider the logical `00' state $\ket{\psi_{00}(\beta)} \propto \sum_{\mathcal{L}}' e^{-\beta \abs{\mathcal{L}}} X_{\mathcal{L}} \ket{0}$, where the prime indicates that the sum is performed over only contractible loops $\mathcal{L}$. 

Since non-contractible loops in $\ket{\psi_{++}(\beta)}$ are exponentially suppressed in the linear system size $L \equiv \min(L_{x}, L_{y})$ throughout the strongly deformed phase $\beta > \beta_{c}$, the (normalized) fidelity between $\ket{\psi_{00}(\beta)}$ and $\ket{\psi_{++}(\beta)}$ is lower-bounded by $1 - e^{-\mathcal{O}(L)}$ at large $\beta$. Consequently, these two states are indistinguishable from each other in the thermodynamic limit; in particular, $H(\beta)$ is an excellent \emph{approximate} parent Hamiltonian for $\ket{\psi_{00}(\beta)}$, with a variational energy $e^{-\mathcal{O}(L)}$. Since phases of matter are strictly well-defined in the thermodynamic limit $L \to \infty$, we should regard $\ket{\psi_{00}(\beta)}$ as belonging to the same phase of matter as $\ket{\psi_{++}(\beta)}$, namely, a trivial gapped phase. This observation is again consistent with the results of Ref.~\onlinecite{Sahay2025FunkyIsing}, which focuses on the existence of \emph{exact} parent Hamiltonians.

Nevertheless, by employing identical cluster expansion techniques as in the previous section, we can indeed construct an exact 1-form symmetric parent Hamiltonian $H^{(00)}(\beta)$ for which $\ket{\psi_{00}(\beta)}$ is the unique gapped ground state \cite{SOM}. This Hamiltonian differs from Eq.~\eqref{eq:HbetafC} by terms of total spectral norm $e^{-\mathcal{O}(L)}$ in the large-$\beta$ regime, and satisfies a locality criterion analogous to Eq.~\eqref{eq:VrA} for any \emph{fixed} aspect ratio $a_{xy} \equiv L_{x} / L_{y}$. Once more, this result is consistent with the no-go theorem, which forbids the existence of a parent Hamiltonian obeying an \emph{aspect-ratio-independent} locality bound. Specifically, Ref.~\onlinecite{Sahay2025FunkyIsing} begins from the standard definition of local Hamiltonians $H = \sum_{\mathcal{R} \subseteq \Lambda} h_{\mathcal{R}}$ with exponentially decaying tails \cite{HastingsKoma2006,hastings_locality_2010}, which are required to satisfy the bound \cite{fn4}
\begin{equation}
    \label{eq:locality_bound}
    \sup_{i \in \Lambda} \sum_{\mathcal{R} \ni i} \norm{h_{\mathcal{R}}} \abs{\mathcal{R}} e^{\mu \diam (\mathcal{R})} \leq s
\end{equation}
for finite constants $\mu, s$, but additionally demands that $\mu, s$ can be chosen independently of the aspect ratio $a_{xy}$. On the other hand, our Hamiltonian $H^{(00)}(\beta)$ contains terms which create \emph{pairs} of far-separated non-contractible closed loops. Although the total norm of such terms is exponentially suppressed in $L$, the aspect ratio can always be chosen sufficiently small that Eq.~\eqref{eq:locality_bound} is violated for any constants $\mu, s$ fixed in advance. Our parent Hamiltonian construction therefore demonstrates that the aspect-ratio-independent locality criterion is not just a technical assumption, but physically necessary for proving enforced gaplessness of $\ket{\psi_{00}(\beta)}$.

\textbf{\textit{Discussion.}}~
We have demonstrated that the strongly deformed toric code state $\ket{\psi(\beta)}$ admits a local and gapped parent Hamiltonian $H(\beta)$ with $\ket{\psi(\beta)}$ as the unique ground state. Since $H(\beta)$ is connected to the trivial Hamiltonian $H_0$ by a gap-preserving deformation, this result demonstrates that $\ket{\psi(\beta)}$ lies within the trivial phase for sufficiently large $\beta$. 
Although our cluster expansion techniques only prove that $H(\beta)$ is local and gapped at large $\beta$, it is natural to conjecture that the entire $\beta > \beta_c$ regime realizes a trivial phase with a corresponding gapped parent Hamiltonian.

An immediate corollary of our result is that, within the space of Hamiltonians exhibiting exponentially decaying interactions [see Eq.~\eqref{eq:locality_bound}], perimeter-law decay of Wilson loops is not a reliable diagnostic of 1-form SSB. In the presence of an exact 1-form symmetry, it is commonly argued \cite{Gaiotto2015,McGreevyGLP2023} that a perimeter-law Wilson loop can be ``locally dressed'' into an additional topological operator that braids nontrivially with the 1-form symmetry, resulting in topological order and ground-state degeneracy on the torus. Our construction demonstrates that this intuition fails in the deformed toric code states. A natural question is whether there might be more robust diagnostics of 1-form SSB at the level of individual states, i.e., without reference to a particular parent Hamiltonian. 

The techniques developed in this work also shed light on the mixed-state separability of the strongly \emph{decohered} toric code. Previously, Refs.~\onlinecite{ChenGrover_separability_2024,WangSongMengGrover_analog_2025} demonstrated that the toric code under strong Pauli-$Z$ decoherence can be expressed as an incoherent mixture of topologically trivial pure states with condensed $m$ anyons. These states are structurally similar to the deformed toric code states, and given the results of Ref.~\onlinecite{Sahay2025FunkyIsing}, one might wonder whether they are intrinsically gapless. As described in the End Matter, and elaborated in the Supplemental Material \cite{SOM}, one can construct exact gapped parent Hamiltonians for these states at large decoherence strengths by an identical procedure as for the deformed toric code. This result completes the argument of Refs.~\onlinecite{ChenGrover_separability_2024,WangSongMengGrover_analog_2025} into a rigorous proof that the strongly $Z$-decohered toric code is separable, i.e., can be written as a convex sum of short-range entangled pure states.

{From a dual perspective, our result also demonstrates that gapped Ising-symmetric wavefunctions can simultaneously exhibit long-range ferromagnetic correlations and perimeter-law disorder parameter correlations. Conventionally, when a gapped Hamiltonian exhibits a spontaneously broken global (0-form) Ising symmetry, its symmetric ground state exhibits long-range order parameter correlations and \emph{area-law} disorder parameter correlations. Indeed, a no-go theorem~\cite{Levin_constraints_2020} rules out the coexistence of long-range order parameter and long-range disorder parameter correlations in 1D, and a recent result~\cite{McDonough_Zhang_lieb-robinson_2025} demonstrates that \emph{exponential-in-volume} local perturbations to the 2D ferromagnetic Ising fixed point result in area-law disorder parameters. In contrast, the strongly deformed toric code state $\ket{\psi_{00}(\beta)}$ is dual to a ferromagnetic ground state $\ket{\varphi(\beta)}$ with perimeter-law disorder parameters, and our gapped parent Hamiltonian $H^{(00)}(\beta)$ is correspondingly dual to a gapped Ising-symmetric parent Hamiltonian $H_{\text{dual}}(\beta)$ for $\ket{\varphi(\beta)}$; see the End Matter for details. Interestingly, although $H_{\text{dual}}(\beta)$ satisfies the exponential-in-diameter locality bound of Eq.~\eqref{eq:locality_bound}, it \emph{fails} to satisfy the more stringent exponential-in-volume locality criterion. We expect that with a restriction to exponential-in-volume local Hamiltonians, long-range order parameters and area-law disorder parameters cannot coexist in a gapped ground state.}

{Our results highlight that the specific notion of locality imposed on the space of allowed Hamiltonians is a subtle but physically consequential choice in the classification of gapped quantum phases. It is particularly interesting to consider whether a more restrictive notion of locality can resurrect the correspondence between perimeter-law Wilson loops, 1-form SSB, and topological order. Crucially, the exponential-in-volume locality criterion previously discussed is insufficient for this purpose: in the Supplemental Material~\cite{SOM}, we show that our parent Hamiltonians $H(\beta)$ and $H^{(00)}(\beta)$ satisfy even these locality bounds. Inspired by the dual Ising parent Hamiltonian, perhaps the physically correct notion of locality in 1-form symmetric systems requires the suppression of Wilson loop operators $X_{\mathcal{L}}$ not by the volume $\abs{\mathcal{L}}$ of their support, but by the area of the region they bound. Alternatively, it may be preferable to focus on Hamiltonian-agnostic criteria for gapped phases, such as the (approximate) entanglement bootstrap axioms~\cite{Shi_fusion_2020,Shi_Kim_2021,Kim_Lin_Ranard_Shi_2024}. Given that the deformed toric code states satisfy these axioms as well~\cite{SOM}, it is interesting to ask whether there might exist a stronger Hamiltonian-agnostic definition of a gapped state which restores the intrinsic gaplessness of these funky states.}

\textit{Note added:} During the completion of this work, we became aware of a forthcoming work by Sahay, Zhang, von Keyserlingk, and Verresen that overlaps in part with the mixed state results of this work. Where our results overlap they agree.

\textbf{\textit{Acknowledgements.}}~
We thank Rahul Sahay for inspiring discussions and introducing us to this problem. We also thank Tarun Grover, Ethan Lake, Anton Kapustin, John McGreevy, Spyridon Michalakis, Olexei Motrunich, Zohar Nussinov, Daniel Ranard, Ruben Verresen, Curt von Keyserlingk, and Carolyn Zhang for helpful discussions and comments. This work was primarily supported by the U.S. Department of Energy, Office of Science, National Quantum Information Science Research Centers, Quantum Science Center. We also acknowledge funding provided by the Institute for Quantum Information and Matter, an NSF Physics Frontiers Center (NSF Grant PHY-2317110).

\makeatletter
\let\old@addcontentsline\addcontentsline
\renewcommand{\addcontentsline}[3]{}
\makeatother

\bibliography{biblio}
\addcontentsline{toc}{section}{}

\clearpage
\newpage
\onecolumngrid
\section{End Matter}
\twocolumngrid

\textbf{\textit{Exact expression for the parent Hamiltonian.}}~We now provide exact expressions for the coefficients $f(\mathcal{C})$ appearing in Eq.~\eqref{eq:HbetafC}, thereby fully specifying the parent Hamiltonian $H(\beta)$ for the strongly deformed toric code $\ket{\psi(\beta)} = e^{W} \ket{0}$ at large $\beta$.  In our derivation we wrote 
\begin{equation}
     W = \log \mathcal{Z} = \sum_\mathcal{C} f(\mathcal{C})X_\mathcal{C} ,
     \label{eq:W_EM}
\end{equation} 
where $\mathcal{Z}$ is the polymer-model partition function in Eq.~\eqref{eqn:polymer_model}. The weights and hard-core interaction parameters in $\mathcal{Z}$ respectively read
\begin{equation}
    w(\gamma) = e^{-\beta \abs{\gamma}} X_\gamma, \,\,\, \delta(\gamma, \gamma') = \begin{cases}
        0 & \gamma, \gamma'\text{ intersect} \\
        1 & \text{otherwise}
    \end{cases},
\end{equation}
where $X_{\gamma} = \prod_{e \in \gamma} X_e$, and two polymers $\gamma, \gamma'$ intersect if there exists edges $e \in \gamma$ and $e' \in \gamma'$ such that $e, e'$ share a vertex on the square lattice.  (For example, this convention counts a `figure 8' as a single polymer, rather than two intersecting polymers.) 

The Mayer cluster expansion~\cite{Friedli_Velenik_2017} rewrites the polymer free energy $W$ as a sum over connected clusters $\mathcal{C}$, allowing us to extract $f(\mathcal{C})$ through the right side of Eq.~\eqref{eq:W_EM}. Specifically,
the cluster expansion provides the decomposition
\begin{equation}
    \log \mathcal{Z} = \sum_{n = 1}^\infty \sum_{\gamma_1, \dots, \gamma_n \in \Gamma} \varphi(\gamma_1, \dots,\gamma_n) \prod_{i = 1}^n w(\gamma_i).
    \label{eq:MayerExplicit}
\end{equation}
Here $\Gamma$ denotes the set of all possible polymers $\gamma$, while $\varphi$ are \emph{Ursell functions}, defined as $\varphi(\gamma) = 1$ and 
\begin{equation}
    \varphi(\gamma_1,  \dots, \gamma_n) = \frac{1}{n!} \sum_{G \in \mathcal{G}_n^c} \prod_{(i,j) \in {E(G)}} \comm{\delta(\gamma_i,\gamma_j) - 1}
\end{equation}
for $n>1$. The sum is performed over all \emph{connected} graphs $G$ on $n$ vertices $\{1, \ldots , n \}$, with $\mathcal{G}^c_n$ the set of all such connected graphs, and $E(G)$ the edge set of $G$. Note that if the polymers $\gamma_1, \dots, \gamma_n$ have any mutually non-intersecting bipartition, then for any connected graph $G \in \mathcal{G}^c_n$, there will always be an edge $(i,j) \in E(G)$ such that $\delta(\gamma_i,\gamma_j) - 1 = 0$, ensuring that $\varphi(\gamma_1, \dots, \gamma_n) = 0$. Therefore, we can restrict the sum over polymers in Eq.~\eqref{eq:MayerExplicit} to \emph{connected} sets of polymers---i.e., connected clusters---that we compactly label by $\mathcal{C} = (\gamma_1,\ldots,\gamma_n)$, where $n$ can range from $1$ to $\infty$.
We can then write
\begin{equation}
    \log \mathcal{Z} = \sum_{\CC} \varphi(\CC) \prod_{i = 1}^{\text{length}(\CC)} w(\gamma_i),
\end{equation}
and read off, with the aid of Eq.~\eqref{eq:W_EM}, the coefficients
\begin{equation}
\label{eq:fc_endmatter}
  f(\CC) = \varphi(\CC) \prod_{i = 1}^{\text{length}(\CC)} e^{-\beta\abs{\gamma_i}}
\end{equation}
that define our parent Hamiltonian $H(\beta)$ [see Eq.~\eqref{eq:HbetafC}].

\textbf{\textit{Dual Ising Wavefunction.}}~From a dual description, our construction demonstrates that long-range ferromagnetic order and perimeter-law disorder parameter correlations can coexist in a gapped ground state. Explicitly, the no-go state $\ket{\psi_{00}(\beta)}$ is related by Wegner duality to the Ising wavefunction
\begin{equation}
    \ket{\varphi(\beta)} = \exp \qty{ \frac{\beta}{2} \sum_{\expval{pq}} Z_{p} Z_{q} } \ket{+} ,
\end{equation}
where $Z_{p}, X_{p}$ are a new set of Pauli operators defined on the dual lattice (i.e., at the centers of plaquettes $p,q$ of the original lattice), and $\ket{+}$ is the simultaneous $+1$ eigenstate of each $X_{p}$. This state is symmetric under the $\mathbb{Z}_{2}$ symmetry $\prod_{p} X_{p}$, and exhibits long-range ferromagnetic correlations $\expval{Z_{p} Z_{p'}} \sim \mathcal{O}(1)$ for $\beta > \beta_{c}$; i.e., the $\mathbb{Z}_{2}$ symmetry is spontaneously broken at large $\beta$. 

{Conventional quantum ferromagnets, such as the 2D transverse-field Ising model, exhibit \emph{area-law} scaling of the disorder parameter, $\expval*{\prod_{p \in \mathcal{R}} X_p} \sim e^{-\alpha \abs{\mathcal{R}}}$ for a large 2D region $\mathcal{R}$. This scaling is especially natural from a ``mean-field'' perspective, where the ferromagnetic ground state is approximated as a symmetry-broken product state. In contrast,} $\ket{\varphi(\beta)}$ exhibits \emph{perimeter-law} correlations of the disorder parameter, $\expval*{\prod_{p \in \mathcal{R}} X_{p}} \sim e^{-\mu \abs{\partial \mathcal{R}}}$. Indeed, the latter operators are directly related to the Wilson loops $X_{\mathcal{L}}$ of the deformed toric code via Wegner duality~\cite{Wegner_duality_1971}. 

In 1D gapped ground states, a rigorous theorem \cite{Levin_constraints_2020} forbids the coexistence of long-range order parameter correlations and perimeter-law (i.e., long-ranged) disorder parameter correlations, and a natural question is whether a similar result can hold in two dimensions. Our parent Hamiltonian $H^{(00)}(\beta)$ for $\ket{\psi_{00}(\beta)}$ is dual to a $\mathbb{Z}_{2}$-symmetric gapped parent Hamiltonian $H_{\text{dual}}(\beta)$ with an exact two-fold degeneracy at large $\beta$. This Hamiltonian demonstrates via counterexample that a result analogous to Ref.~\onlinecite{Levin_constraints_2020} cannot hold in 2D under the standard definition of local Hamiltonians, which allows for interactions decaying exponentially in their \emph{diameter} [see Eq.~\eqref{eq:locality_bound}]. 

Recently, Ref.~\onlinecite{McDonough_Zhang_lieb-robinson_2025} proved that weak $\mathbb{Z}_{2}$-symmetric perturbations of the 2D ferromagnetic fixed point yield ferromagnetic ground states exhibiting area-law disorder parameters, $\expval*{\prod_{p \in \mathcal{R}} X_{p}} \sim e^{-\alpha \abs{\mathcal{R}}}$, so long as these perturbations exhibit interactions which decay exponentially with their \emph{volume}. This result is consistent with our parent Hamiltonian $H_{\text{dual}}(\beta)$, which violates this exponential-in-volume locality constraint. Indeed, $H_{\text{dual}}(\beta)$ contains interaction terms of the form $\prod_{p \in \mathcal{R}} X_{p}$ which are only suppressed in their \emph{perimeter} $\abs{\partial \mathcal{R}}$, rather than their volume $\abs{\mathcal{R}}$. It is natural to conjecture that long-range ferromagnetic order and perimeter-law disorder parameters cannot coexist in 2D gapped ground states of exponential-in-volume local Hamiltonians.

\textbf{\textit{The Strongly Decohered Toric Code is Separable.}}~Let $\rho_{p} = [\prod_{e} \mathcal{E}_{e}]( \dyad{\TC} )$ be the mixed state obtained by applying a phase-flip channel $\mathcal{E}_{e}(\rho) = (1 - p) \rho + p Z_{e} \rho Z_{e}$ of strength $p$ to each qubit of a toric code state $\ket{\TC}$. Such a mixed state is well-known to undergo a mixed-state \emph{decodability} phase transition \cite{dennis_topological_2002,fan_diagnostics_2024} at a critical threshold $p_{c} \approx 0.109$, such that the initial state $\ket{\TC}$ can no longer be reliably recovered from $\rho_{p}$ for $p > p_{c}$. Reference~\onlinecite{ChenGrover_separability_2024} argued that this transition coincides with a \emph{separability} phase transition, whereupon $\rho_{p}$ can be expressed as a convex sum of short-range entangled states for $p > p_{c}$. Specifically, letting the initial state $\ket{\TC}$ be the `++' logical state, Ref.~\onlinecite{ChenGrover_separability_2024} demonstrated that $\rho_{p}$ admits a decomposition into a mixture of the states
\begin{equation}
    \ket{\psi_{\mathcal{L}}(K)} = X_{\mathcal{L}} \sum_{\qty{x_{e} = \pm 1}} \sqrt{ \mathcal{Z}^{\text{Ising}}_{K}(\qty{x_{e}}) } \ket{\qty{x_{e}}}_x ,
\end{equation}
where $K = - \frac{1}{2} \log[p / (1-p)]$, $\ket{\qty{x_{e}}}_{x}$ are the Pauli-$X$ basis states, and $\mathcal{Z}^{\text{Ising}}_{K}(\qty{x_{e}}) = \sum_{\qty{s_{v}}} e^{K \sum_{\expval{v v'}} x_{v v'} s_{v} s_{v'}}$ is the partition function of a bond-disordered Ising model with disorder realization $\qty{x_{e}}$. 

Similar to the deformed toric code state~\eqref{eqn:defnofdefTC}, the states $\ket{\psi_{\mathcal{L}}(K)}$ are 1-form symmetric under contractible loop operators $Z_{\tilde{\mathcal{L}}}$, exhibit condensed $m$ anyons for $p > p_{c}$, but maintain perimeter-law Wilson loops for all $p < \frac{1}{2}$. Therefore, although each $\ket{\psi_{\mathcal{L}}(K)}$ is topologically trivial for $p > p_{c}$, the results of Ref.~\onlinecite{Sahay2025FunkyIsing} call into question whether these states are indeed short-range entangled. Using identical cluster expansion techniques to those described in the main text, we answer this question in the affirmative for sufficiently large $p< \frac{1}{2}$~\cite{SOM}.

Explicitly, recalling the expression for the operator $\mathcal{Z}$ in the right-hand side of Eq.~\eqref{eq:Vdef1}, we see that $\ket{\psi_{\mathcal{L}}(K)}$ can be rewritten as
\begin{equation}
    \ket{\psi_{\mathcal{L}}(K)} \propto X_{\mathcal{L}} \sqrt{\mathcal{Z}} \ket{0} = X_{\mathcal{L}} e^{W / 2} \ket{0} ,
\end{equation}
where we have used the simple identity $\mathcal{Z} \ket{\qty{x_{e}}}_{x} = \mathcal{Z}^{\text{Ising}}_{K}(\qty{x_{e}}) \ket{\qty{x_{e}}}_{x}$ [see Eq.~\eqref{eq:Vdef1}]. Using an identical approach as for the deformed toric code state $\ket{\psi(\beta)}$, one can construct a local parent Hamiltonian $H_{\mathcal{L}}(K)$ for which $\ket{\psi_{\mathcal{L}}(K)}$ is the unique gapped ground state at sufficiently large $p$. Applying the results of Ref.~\cite{Yin_Lucas_low-density_2025,DeRoeck_Khemani_low-density_2025}, $\ket{\psi_{\mathcal{L}}(K)}$ can be prepared via finite-time unitary evolution with a local generator, and is therefore short-range entangled.

In the Supplemental Material \cite{SOM}, we note that there is a minor subtlety when the `++' initial state is replaced by a general toric code state $\ket{\TC}$. For certain choices of $\ket{\TC}$, the pure-state decomposition of $\rho_{p}$ is not completely trivial: instead, it contains an exponentially small (in trace norm) sector of gapless states. Such mixed states $\rho_p$ are therefore \emph{approximately} separable at large $p$; in particular, they are indistinguishable from a corresponding separable state in the thermodynamic limit.

\textbf{\textit{Classical High-Low Temperature Duality.}}~Our rewriting of the deformed toric code state $\ket{\psi(\beta)} = e^{W} \ket{0}$ as a deformation of the trivial paramagnet leads to a novel high/low temperature duality between the 2D classical Ising model and a 2D gauge theory with exponentially decaying interactions. This duality is valid at large $\beta$, where the Ising model lies in a ferromagnetic phase and the dual gauge theory lies within a confined phase.

Recall first from Eq.~\eqref{eqn:partition_function} that the norm of $\ket{\psi(\beta)}$ can be expressed as the partition function of a 2D Ising model, where the loop configurations $\mathcal{L}$ are understood as magnetic domain walls. On the other hand, by Eq.~\eqref{eq:V_Mayer}, the same norm can be written as
\begin{equation}
    \bra{\psi(\beta)} \ket{\psi(\beta)} = \bra{0} e^{2W} \ket{0} = \bra{0} e^{2 \sum_{\mathcal{C}} f(\mathcal{C}) X_{\mathcal{C}}} \ket{0} .
\end{equation}
By expanding $\ket{0} \propto \sum_{\qty{x_{e}}} \ket{\qty{x_{e}}}_x$ in the Pauli-$X$ basis, we find that the 2D Ising partition function is proportional to the following dual partition function with Ising spins $x_{e} = \pm 1$ on the edges of the square lattice:
\begin{equation}
    \bra{\psi(\beta)} \ket{\psi(\beta)} \propto \sum_{\qty{x_{e} = \pm 1}} e^{2 \sum_{\mathcal{C}} f(\mathcal{C}) x_{\mathcal{C}}}, \quad x_{\mathcal{C}} = \prod_{\gamma \in \mathcal{C}} \prod_{e \in \gamma} x_{e} .
\end{equation}
This classical partition function is invariant under the local $\mathbb{Z}_{2}$ symmetry transformation $x_{v v'} \mapsto s_{v} x_{v v'} s_{v'}$, where we've written $x_{v v'} \equiv x_{e}$ for $e = \expval{v v'}$, and $\qty{s_{v} = \pm 1}$. Consequently, the above partition function describes a pure $\mathbb{Z}_{2}$ gauge theory.  It is unclear whether this dual model can be extended to (or past) the critical point, since the cluster expansion generally fails to converge at sufficiently small $\beta$.

\makeatletter
\let\addcontentsline\old@addcontentsline
\makeatother
\include{SM.tex}

\end{document}

%% file: SM.tex
\makeatletter 
    
\renewcommand\onecolumngrid{
\do@columngrid{one}{\@ne}%
\def\set@footnotewidth{\onecolumngrid}
\def\footnoterule{\kern-6pt\hrule width 1.5in\kern6pt}%
}

\renewcommand\twocolumngrid{
        \def\footnoterule{
        \dimen@\skip\footins\divide\dimen@\thr@@
        \kern-\dimen@\hrule width.5in\kern\dimen@}
        \do@columngrid{mlt}{\tw@}
}%

\makeatother

\onecolumngrid
\newpage

\renewcommand{\thefigure}{S\arabic{figure}}
\renewcommand{\theequation}{S\arabic{equation}}
\renewcommand{\thetable}{S\Roman{table}}
\renewcommand{\thesection}{S\Roman{section}}

\setcounter{secnumdepth}{2}
\setcounter{equation}{0}
\setcounter{section}{0}
\setcounter{figure}{0}
\setcounter{table}{0}
\setcounter{page}{1}

\renewcommand*{\thefootnote}{\fnsymbol{footnote}}
\setcounter{footnote}{0}
\footnotetext{
\setlength{\parindent}{0pt}
\hypertarget{equalcontrib}{}
\href{mailto:nmanoj@caltech.edu}{nmanoj@caltech.edu} \\
}
\setcounter{footnote}{0}
\renewcommand*{\thefootnote}{\arabic{footnote}}
\begin{center}
\textbf{\large Supplemental Material For:
Gapped Parent Hamiltonians for the Strongly Deformed Toric Code}

\vspace{5mm}
Nandagopal Manoj,\textsuperscript{\hyperlink{equalcontrib}{\textasteriskcentered}} Zack~Weinstein, and Jason Alicea


\textit{\small Department of Physics and Institute for Quantum Information and  \\
\vspace{-0.5mm}
Matter, California Institute of Technology, Pasadena, California 91125, USA} \\
\vspace{-0.5mm}
{\small (Dated: \today)}
\end{center}

\thispagestyle{empty}

\noindent This supplemental material describes the technical statements in the Letter in more detail and, wherever applicable, provides rigorous proofs and bounds. 

\tableofcontents

\section{Setup and Notation}
We work on an $L_x \times L_y$ square lattice $\Lambda$ with periodic boundary conditions, where we assume $L_x\geq L_y$ without loss of generality. The qubits live on the edges, labeled by $e$. We write Pauli operators acting on these qubits as $X_e, Z_e$. We use $\mathcal{L}$ to denote closed (not necessarily connected) loop configurations in the direct lattice---that is, $\mathcal{L}$ is a subset of edges $e$ such that each vertex $v \in \Lambda$ is incident to an even number of edges in $\mathcal{L}$. For each $\mathcal{L}$, we define the operator
\begin{equation}
    X_{\mathcal{L}} \equiv \prod_{e\in {\mathcal{L}}} X_e,
\end{equation}
and denote by $\abs{\mathcal{L}}$ the number of edges in ${\mathcal{L}}$. 

The deformed toric code states are defined as
\begin{equation}
    \ket{\psi(\beta)} \propto e^{\frac{\beta}{2}\sum_e Z_e} \ket{\TC} ,
\end{equation}
where $\ket{\TC}$ is a simultaneous +1 eigenstate of the stabilizers $A_v \equiv \prod_{e \ni v} Z_e$ and $B_p \equiv \prod_{e \in p} B_p$. In the main text, we primarily focused on the state obtained by deforming the `++' logical state $\ket{\TC_{++}}$, given explicitly by
\begin{equation}
\label{eq:psi_++}
    \ket{\psi_{++}(\beta)} \propto e^{\frac{\beta}{2}\sum_e Z_e} \ket{\TC_{++}} \propto \sum_{\mathcal{L}} e^{-\beta \abs{\mathcal{L}}}X_{\mathcal{L}} \ket{0},
\end{equation}
where $\ket{0}$ is the simultaneous $+1$ eigenstate of each $Z_e$. In other words, $\ket{\psi_{++}(\beta)}$ contains a superposition of all closed loops of $Z_e = -1$ in the direct lattice, including both contractible and non-contractible loops.

Similarly, the ``no-go state'' is obtained by deforming the `00' logical state, and is given by
\begin{equation}
\label{eq:psi_00}
    \ket{\psi_{00}(\beta)} \propto e^{\frac{\beta}{2}\sum_e Z_e} \ket{\TC_{00}} \propto \sum_{{\mathcal{L}} = \partial \mathcal{R} } e^{-\beta \abs{\mathcal{L}}}X_{\mathcal{L}} \ket{0} .
\end{equation}
This state contains only contractible loops in the superposition, and consequently satisfies the assumptions of Theorem~1 in Ref.~\onlinecite{Sahay2025FunkyIsing}. We shall initially focus on the simpler state $\ket{\psi_{++}(\beta)}$, which we denote by $\ket{\psi(\beta)}$ to lighten the notation; we return to the no-go state $\ket{\psi_{00}(\beta)}$ in Sec.~\ref{sec:no_go}.

Finally, we define the gapped parent Hamiltonian for the trivial product state $\ket{0}$ by
\begin{equation}
    H_0 = \sum_e (1 - Z_e) ,
\end{equation}
which hosts a unique ground state with the spectral gap $\Delta_0 = 2$. 

\section{Polymer models and Mayer Cluster Expansion}
\label{sec:cluster_expansion}
In order to construct parent a Hamiltonian for the state $\ket{\psi(\beta)}$ at large $\beta$, it is insightful to view it as non-unitary deformation of the trivial state $\ket{0}$. From Eqs.~\eqref{eq:psi_++} we have $\ket{\psi(\beta)} = \hat{\mathcal{Z}}(\beta) \ket{0}$, where the operator $\hat{\mathcal{Z}}(\beta)$ is defined by
\begin{equation}
\label{eq:operator_partition_fns}
    \hat{\mathcal{Z}}(\beta) \equiv \sum_{\mathcal{L}} e^{-\beta \abs{\mathcal{L}}} X_{\mathcal{L}} .
\end{equation}
As described in the main text [see Eq.~\eqref{eq:Vdef1}], this operator can be expressed as the high-temperature expansion of a bond-disordered Ising model with coupling constant $K = \artanh(e^{-\beta})$, and is therefore a positive-definite operator. Its logarithm,
\begin{equation}
    \hat{W}(\beta) \equiv \log \hat{\mathcal{Z}}(\beta) ,
\end{equation}
is therefore a well-defined Hermitian operator. 

The locality of the parent Hamiltonian $H(\beta)$ we will construct depends sensitively on the locality properties of $\hat{W}(\beta)$. Heuristically, since $\hat{W}$ is the free energy of a bond-disordered Ising model, we expect it to be an extensive sum of terms which are exponentially localized at large $\beta$. We will formally demonstrate this property using the \emph{Mayer cluster expansion}, described extensively in Ref.~\onlinecite{Friedli_Velenik_2017}. Here we review the basics of this series expansion.

\subsection{Polymer model definition}
A \emph{polymer}, labeled by $\gamma$, is a collection of edges that form a \emph{connected} closed loop. Note that polymers can be self-intersecting, like a figure of eight. We define $\Gamma$ to be the set of all polymers on the $L_x \times L_y$ square lattice. A \emph{polymer model} is defined by associating a complex weight $w: \Gamma \to \mathbb{C}$ to each polymer and a real symmetric interaction $\delta: \Gamma \times \Gamma \to [-1, 1]$ to each pair of polymers. Given $(\Gamma, w,\delta)$, the polymer partition function is defined as 
\begin{equation}
    \mathcal{Z} = \sum_{\Gamma'\subseteq \Gamma} \qty[ \prod_{\gamma \in \Gamma'} w(\gamma) ]  \qty[ \prod_{\curly{\gamma,\gamma'} \subseteq \Gamma'} \delta(\gamma, \gamma') ] .
    \label{eqn:PolymerPartitionFunction}
\end{equation}
Throughout this work, we will largely focus on the weight function $w(\gamma) = e^{-\beta |\gamma|}$. To describe the operator-valued partition function \eqref{eq:operator_partition_fns}, we will later promote $w(\gamma)$ to the operator-valued weight function
\begin{equation}
\label{eq:app_operator_weights}
    \hat{w}(\gamma) \equiv e^{-\beta \abs{\gamma}} X_\gamma ,
\end{equation}
where $X_{\gamma} \equiv \prod_{e \in \gamma} X_e$. These operators mutually commute, so we can regard the operator-valued partition function $\hat{\mathcal{Z}}(\beta)$ obtained from these weights as an ordinary polymer partition function within each simultaneous Pauli-$X$ eigenstate.

Finally, it will be sufficient to work with a hard-core interaction where $\delta(\gamma, \gamma') \in \curly{0,1}$, defined presently\footnote{A slightly more intricate definition will be necessary to describe the no-go state; see Sec.~\ref{sec:no_go}.} as follows:
\begin{equation}
\label{eq:app_delta_fn}
    \delta(\gamma, \gamma') = \begin{cases}
        0, & \gamma, \gamma' \text{ intersect} \\
        1, & \text{otherwise}
    \end{cases} .
\end{equation}
Two polymers $\gamma, \gamma'$ are defined as intersecting if there exists edges $e \in \gamma$ and $e' \in \gamma'$ such that $e, e'$ share a common vertex in the lattice. Note that by this definition, two diagonally adjacent polymers which touch at a single corner are considered intersecting. Loop configurations such as `figure-eights' are therefore considered as a single polymer.

We say that two intersecting polymers $\gamma, \gamma'$ are \emph{incompatible}, as if the subset $\Gamma'$ in the partition sum contains two such polymers, the summand will be zero and will not contribute. We also say that $\gamma, \gamma'$ are \emph{compatible}, or \emph{non-interacting}, if $\delta(\gamma, \gamma') = 1$.

\subsection{Cluster Expansion and Ursell Functions}

A \textit{cluster} $\mathcal{C}$ is a finite non-empty sequence of polymers $(\gamma_1, \gamma_2, \dots, \gamma_n)$, with repetitions allowed. A cluster is called \emph{disconnected} if there exists a partition of the integers $\{1, \ldots , n \} = I_1 \sqcup I_2$ such that, for all $i \in I_1$ and $j \in I_2$, $\gamma_i$ is compatible with $\gamma_j$ (i.e., $\delta(\gamma_i, \gamma_j) = 1$); in other words, a disconnected cluster can be divided into two sub-clusters $\mathcal{C}_1$ and $\mathcal{C}_2$ such that each $\gamma_i \in \mathcal{C}_1$ is compatible with each $\gamma_j \in \mathcal{C}_2$. Finally, a cluster is \emph{connected} if it is not disconnected.

The Mayer cluster expansion \cite{Friedli_Velenik_2017,KoteckyPreiss1986,Penrose1967} provides an explicit series expansion for the polymer free energy $W = \log \mathcal{Z}$ as a sum over connected clusters:
\begin{theorem}[Mayer Cluster Expansion \cite{Friedli_Velenik_2017}] \label{theorem:Mayer}=

Given the polymer model $(\Gamma, w, \delta)$ and the corresponding polymer partition function $\mathcal{Z}$ [see Eq.~\eqref{eqn:PolymerPartitionFunction}], the polymer free energy $W \equiv \log \mathcal{Z}$ can be expressed as the following infinite series over clusters $\mathcal{C}$:
\begin{equation}
\label{eq:mayer_cluster}
    W \equiv \log \mathcal{Z} = \sum_{\mathcal{C}} f(\mathcal{C}), \quad f(\mathcal{C}) \equiv \varphi(\gamma_1 , \ldots , \gamma_n ) \prod_{i = 1}^n w(\gamma_i) .
\end{equation}
Here $\varphi(\gamma_1 , \ldots , \gamma_n)$ is the \emph{Ursell function}, defined for each cluster $\mathcal{C} = (\gamma_1 , \ldots , \gamma_n)$ as
\begin{equation}
\label{eq:ursell}
    \varphi(\gamma_1, \dots, \gamma_n) \equiv \frac{1}{n!} \sum_{G \in \mathcal{G}^c_n} \prod_{(i,j) \in E(G)} \zeta(\gamma_i,\gamma_j) ,
\end{equation}
where $\mathcal{G}^c_n$ is the set of all connected graphs $G$ with vertex set $V(G) = \{ 1, \ldots , n \}$, $E(G)$ is the edge set of $G$, and $\zeta(\gamma, \gamma') \equiv \delta(\gamma, \gamma') - 1$, so that $\zeta(\gamma, \gamma')$ vanishes if the polymers are compatible. For $n=1$ we define $\varphi(\gamma_1) = 1$. Note that the Ursell functions vanish on disconnected clusters, and so we can restrict the sum in Eq.~\eqref{eq:mayer_cluster} to connected clusters $\mathcal{C}$. For a derivation of Eq.~\eqref{eq:mayer_cluster} and details on its convergence, we refer the reader to Ref.~\onlinecite{Friedli_Velenik_2017}.
\end{theorem}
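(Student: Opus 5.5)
The plan is the standard formal-power-series derivation, followed by a separate check of convergence. Here $\mathcal{Z}$ is a polynomial in finitely many weights, since $\Gamma$ is finite on the finite torus. I will treat the weights $w(\gamma)$ as independent complex variables and show that $\log \mathcal{Z}$ and the right-hand side of Eq.~\eqref{eq:mayer_cluster} have identical Taylor coefficients around $w \equiv 0$. The identity then holds on any polydisc where the cluster series converges absolutely. That region is supplied by the Koteck\'y--Preiss criterion, which the paper cites and which we may take from Ref.~\onlinecite{Friedli_Velenik_2017}.

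\textbf{Step 1.} Rewrite $\mathcal{Z}$ as a sum over ordered sequences. Since $\delta(\gamma,\gamma)=0$ (a polymer intersects itself), subsets can be replaced by sequences with a $1/n!$:
\[
\mathcal{Z}=\sum_{n\ge0}\frac{1}{n!}\sum_{\gamma_1,\dots,\gamma_n}\prod_i w(\gamma_i)\prod_{i<j}\bigl(1+\zeta(\gamma_i,\gamma_j)\bigr).
\]
Repeated polymers automatically contribute zero.

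\textbf{Step 2.} Expand the product as $\prod_{i<j}(1+\zeta_{ij})=\sum_{G}\prod_{(i,j)\in E(G)}\zeta_{ij}$, summed over all graphs $G$ on $\{1,\dots,n\}$. Each graph decomposes uniquely into connected components on a set partition of $\{1,\dots,n\}$. Using the exponential formula for labelled structures (equivalently, the multinomial reorganization of the sum over set partitions into blocks), this gives the formal identity
\[
\mathcal{Z}=\exp\Bigl(\sum_{m\ge1}\frac{1}{m!}\sum_{\gamma_1,\dots,\gamma_m}\prod_i w(\gamma_i)\sum_{G\in\mathcal{G}^c_m}\prod_{E(G)}\zeta_{ij}\Bigr)=\exp\Bigl(\sum_{\mathcal{C}}f(\mathcal{C})\Bigr).
\]
This step yields exactly the Ursell functions $\varphi$ defined in Eq.~\eqref{eq:ursell}. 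The vanishing on disconnected clusters is immediate: any connected $G$ must contain an edge crossing the compatible bipartition, and that edge carries $\zeta=0$.

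\textbf{Step 3 (analytic justification).} This is the step I expect to be the main obstacle, because the formal rearrangement must be upgraded to a genuine equality. I would first invoke the tree-graph (Penrose) bound $|\sum_{G\in\mathcal{G}^c_n}\prod\zeta|\le\#\{\text{spanning trees of the incompatibility graph}\}$, which tames the $n!$ growth. I would then check the Koteck\'y--Preiss condition $\sum_{\gamma'\not\sim\gamma}|w(\gamma')|e^{a(\gamma')}\le a(\gamma)$ with $a(\gamma)=c|\gamma|$. For $w=e^{-\beta|\gamma|}$ this is a counting of loops through a vertex, roughly $3^{k}$ loops of length $k$, so it holds at large $\beta$. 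Together these give absolute convergence of $\sum_{\mathcal{C}}|f(\mathcal{C})|$ on a polydisc containing the relevant weights. On that polydisc both sides of the identity are analytic, they agree as power series near $0$, and $\mathcal{Z}\neq0$ there. Hence $\log\mathcal{Z}$ is defined (on the branch with $\log 1=0$ at $w=0$) and equals the cluster sum.

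For the operator-valued weights, the $X_\gamma$ commute, so the argument applies eigenvalue-by-eigenvalue in the $X$ basis, where $|\pm e^{-\beta|\gamma|}|=e^{-\beta|\gamma|}$. There the real positive branch agrees with the series because $\mathcal{Z}>0$ along the path $t\,w$, $t\in[0,1]$.
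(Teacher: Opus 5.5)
Your proposal is correct and follows essentially the route the paper relies on. Steps~1--2 are the standard formal exponential-formula derivation that the paper defers to Friedli--Velenik, including your correct observation that $\delta(\gamma,\gamma)=0$ is what licenses passing from subsets to sequences; Step~3 uses the same convergence machinery the paper assembles in its appendix, namely the Penrose tree-graph identity and a Koteck\'y--Preiss bound with $a(\gamma)\propto|\gamma|$ from $\sim 3^{s}$ loop counting. The one thing to tighten is the order of logic at the end of Step~3: compare $e^{F}$, with $F$ the absolutely convergent cluster sum, to the polynomial $\mathcal{Z}$ and apply the identity theorem on the polydisc, which then gives $\mathcal{Z}=e^{F}\neq 0$ rather than presupposing it.
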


We can apply the cluster expansion to Eq.~(\ref{eq:operator_partition_fns}), using the weights \eqref{eq:app_operator_weights} and interactions \eqref{eq:app_delta_fn}. As the weights $\hat{w}(\gamma)$ are composed only of $X_e$ operators, we can ignore the operator nature of the weights by working in the $X_e$ eigenbasis. Then, we have 
\begin{equation}
\label{eq:cluster_expansion_op}
    \hat{W}(\beta) \equiv \log \hat{\mathcal{Z}}(\beta) = \sum_{n \geq 1} \sum_{\gamma_1, \dots, \gamma_n \in \Gamma} \varphi(\gamma_1, \dots, \gamma_n) \prod_{i=1}^n e^{-\beta \abs{\gamma_i}} X_{\gamma_i} \equiv \sum_{\mathcal{C}} f(\CC) X_{\CC},
\end{equation}
where $X_{\CC} \equiv\prod_{i=1}^n X_{\gamma_i}$ for any cluster $\mathcal{C} = (\gamma_1 , \ldots , \gamma_n)$. A crucial bound satisfied by these coefficients $f(\CC)$ is Eq.~\eqref{eqn:fCbound} in the main text (Lemma~\ref{lem:Cbound}). Since the proof of this bound is rather technical, we postpone its proof to Appendix~\ref{sec:proof_of_eq13}.

\section{Proof of Locality and Spectral Gap}
\label{app:locality_proof}
In this Appendix, we provide explicit technical details to prove that our parent Hamiltonian $H(\beta)$ is local and exhibits a nonzero spectral gap at large values of $\beta$. Our essential strategy is to use properties of the cluster expansion---specifically, the bound in Eq.~\eqref{eqn:fCbound} of the main text---to show that $H(\beta)$ satisfies the locality properties demanded by the assumptions of Refs.~\onlinecite{BravyiHastingsMichalakis2010,BravyiHastings2011}. The gap stability results of these works then immediately imply that $H(\beta)$ is gapped at sufficiently large $\beta$.

Let us first recall the derivation of our parent Hamiltonian $H(\beta)$. We begin by using the cluster expansion to express the deformed toric code state $\ket{\psi(\beta)}$ as an exponential deformation of the trivial state $\ket{0}$:
\begin{equation}
\label{eq:psi_deformed_app}
    \ket{\psi(\beta)} = \hat{\mathcal{Z}}(\beta) \ket{0} = e^{\hat{W}(\beta)} \ket{0} = \exp \qty{ \sum_{\mathcal{C}} f(\mathcal{C}) X_{\mathcal{C}} } \ket{0} ,
\end{equation}
c.f. Eqs.~\eqref{eq:operator_partition_fns} and \eqref{eq:cluster_expansion_op}. As described in the main text, this form naturally motivates the construction of a frustration-free parent Hamiltonian as follows: we first define the manifestly positive-semidefinite Hermitian operators $P_e(\beta)$ via
\begin{equation}
    \begin{split}
        P_e(\beta) &\equiv e^{- \sum_{\mathcal{C} \in \mathcal{S}_e} f(\mathcal{C}) X_{\mathcal{C}}} (1 - Z_e) e^{- \sum_{\mathcal{C} \in \mathcal{S}_e} f(\mathcal{C}) X_{\mathcal{C}}} \\
        &= e^{- 2\sum_{\mathcal{C} \in \mathcal{S}_e} f(\mathcal{C}) X_{\mathcal{C}}} - Z_e ,
    \end{split}
\end{equation}
where $\mathcal{S}_e$ denotes the set of connected clusters $\mathcal{C}$ for which $X_{\mathcal{C}}$ anticommutes with $Z_e$. Each $P_e(\beta)$ clearly annihilates $\ket{\psi(\beta)}$, and therefore the Hamiltonian
\begin{equation}
\label{eq:sm_parent_ham}
    H(\beta) \equiv \sum_e P_e(\beta) = \sum_e \qty( e^{- 2\sum_{\mathcal{C} \in \mathcal{S}_e} f(\mathcal{C}) X_{\mathcal{C}}} - Z_e )
\end{equation}
admits $\ket{\psi(\beta)}$ as a frustration-free ground state. In the same way that the Castelnovo-Chamon Hamiltonian $H_{\text{CC}}(\beta)$ [see Eq.~\eqref{eq:HCC} of the main text] can be viewed as a deformation of the fixed-point toric code Hamiltonian $H_{\TC}$, our parent Hamiltonian $H(\beta)$ can be viewed as a deformation of the trivial Hamiltonian $H_0 = \sum_e (1 - Z_e)$.

To prove a spectral gap, it will suffice to demonstrate that $H(\beta)$ is exponentially localized at sufficiently large $\beta$. More precisely, our goal is to write $H(\beta)$ in the form
\begin{equation}
\label{eq:hastsings_1}
    H(\beta) = H_0 + V ,
\end{equation}
where $H_0 = \sum_e (1 - Z_e)$ is the trivial Hamiltonian, while $V$ is a perturbation of the form
\begin{equation}
\label{eq:hastings_2}
    V = \sum_{r \geq 1} \sum_{A \in S(r)} V_{r,A} ,
\end{equation}
where $S(r)$ is the set of all $r \times r$ squares of lattice sites\footnote{For simplicity, we group the edges $e = (v, v + \hat{x})$ and $e = (v, v + \hat{y})$ together on the vertex $v$, so that each vertex is regarded as hosting two qubits.}, and $V_{r, A}$ is an operator supported on the $r \times r$ square $A$ with a spectral norm bounded as
\begin{equation}
\label{eq:hastings_3}
    \norm{V_{r, A}} \leq J e^{-\mu r} \quad \forall A \in S(r)
\end{equation}
for constants $J, \mu > 0$ to be determined. Note that this locality criterion is equivalent to that of Eq.~\eqref{eq:locality_bound} of the main text (see Appendix~\ref{app:locality_equivalence}). As we shall see, we will be able to choose $\mu = 1$ and $J = J(\beta) \propto e^{-4\beta}$, so that $J(\beta)$ becomes exponentially small at large $\beta$. The following theorem then guarantees that $\ket{\psi(\beta)}$ is the unique gapped ground state of $H(\beta)$ with a nonvanishing spectral gap:
\begin{theorem}[Stability of Gap~\cite{BravyiHastingsMichalakis2010}]\label{theorem:BHM}
    There exist constants $J_0, c_1 > 0$ depending only on $\mu$ and the spatial dimension $D$, such that for all $J\leq J_0$, the spectrum of $H_0 + V$ is contained (up to an overall energy shift) in the union of intervals $\cup_{k\geq 0} I_k$, where $k$ runs over the spectrum of $H_0$ and $I_k$ is the closed interval
\begin{equation}
    I_k = \comm{k(1-c_1 J) - \delta, k(1+c_1 J) + \delta} 
\end{equation}
for some $\delta$ bounded by $J$ times a quantity decaying faster than any power of the system size $L$.
\end{theorem}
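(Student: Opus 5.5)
Since this is the stability theorem of Ref.~\onlinecite{BravyiHastingsMichalakis2010}, I would follow their strategy and specialize it to the unperturbed Hamiltonian $H_0=\sum_e(1-Z_e)$, for which the hypotheses are essentially trivial. Three structural properties of $H_0$ are needed. First, $H_0$ is a sum of commuting projectors (here with spectrum $\{0,2,4,\dots\}$, so $\Delta_0=2$). Second, it is frustration free. Third, it satisfies local topological quantum order (TQO-1 and TQO-2): for any region $A$ and any operator $O_A$ supported on a slightly enlarged region, the projector $P_{A}$ onto $\ket{0}$ restricted to $A$ satisfies $P_{A}O_AP_{A}\propto P_{A}$. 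Because the ground state is a product state, these hold with no error. So the first step is to record these facts, including the local gap condition, which is automatic for commuting single-site projectors.

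The main construction is a quasi-local unitary $U$ that block-diagonalizes $H_0+V$ with respect to the spectral decomposition of $H_0$, up to a small remainder. I would build it by the BHM iterative Schrieffer--Wolff or quasi-adiabatic scheme. The anti-Hermitian generator $S$ solves $[H_0,S]=V_{\text{offdiag}}$ order by order, and each term $V_{r,A}$ is dressed by a filter-function integral over Heisenberg-evolved operators. The exponential tail $\norm{V_{r,A}}\le Je^{-\mu r}$ combined with Lieb--Robinson bounds ensures that $U^\dagger(H_0+V)U=H_0+V'+\text{(remainder)}$. Here $V'$ commutes with $H_0$ and is again a sum of quasi-local terms $V'_{r,A}$ with norms $\lesssim Je^{-\mu' r}$. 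The remainder has norm $\delta$, which decays faster than any power of $L$; this is the origin of the super-polynomial $\delta$ in the statement.

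Next, I would show that the block-diagonal local terms are relatively bounded by $H_0$. For each local term, TQO-1 gives $P_{A}V'_{r,A}P_{A}=c_{r,A}P_{A}$, where $P_A$ projects onto $\ket{0}$ on the enlarged support of $V'_{r,A}$. Subtracting these constants (the overall energy shift), each corrected term $\tilde V_{r,A}$ vanishes on the local ground space. It therefore satisfies $\norm{\tilde V_{r,A}\ket{\phi}}\lesssim \norm{V'_{r,A}}\,\norm{Q_{A}\ket{\phi}}$, where $Q_{A}=1-P_{A}$. Summing and using $\sum_A Q_A\lesssim r^{D}H_0$ (each excitation lies in $\sim r^D$ squares of size $r$), I would get
\[
\bigl|\bra{\phi}\textstyle\sum_{r,A}\tilde V_{r,A}\ket{\phi}\bigr|\le c_1J\bra{\phi}H_0\ket{\phi}.
\]
The sum over $r$ converges because $r^{D}e^{-\mu' r}$ is summable. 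Since $V'$ commutes with $H_0$, this relative bound applies within each eigenspace of $H_0$ and confines that block's spectrum to $[k(1-c_1J),k(1+c_1J)]$. Adding back $\delta$ gives the intervals $I_k$.

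I expect the main obstacle to be the second step: controlling the locality of the rotated perturbation $V'$ uniformly. Nested commutators generate many terms with growing supports, and one must show that their norms still decay as $Je^{-\mu' r}$ with a constant loss in $\mu$. This requires $J\le J_0$ small enough that the combinatorial growth is beaten, which is where $J_0(\mu,D)$ comes from. I would first try the BHM bookkeeping based on a finite-order Schrieffer--Wolff expansion plus quasi-adiabatic continuation, which is tailored to exactly this.
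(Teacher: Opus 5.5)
The paper does not prove this statement; it imports it from Ref.~\onlinecite{BravyiHastingsMichalakis2010}. So the question is whether your reconstruction of BHM holds up. Its central step does not. You ask for a quasi-local $U$ with $U^\dagger(H_0+V)U=H_0+V'+R$, $[V',H_0]=0$, and $\norm{R}\le\delta$ superpolynomially small in $L$. That is block-diagonalization with respect to \emph{every} eigenspace of $H_0$.

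A Schrieffer--Wolff expansion carried to order $n$ leaves an off-diagonal remainder that is a sum of $\sim L^D$ local terms of size $\sim (cJ)^{n+1}$. (The generator is built from $e^{iH_0t}(\cdot)e^{-iH_0t}$, which is strictly local here.) Because nested commutators grow factorially, optimal truncation gives at best $e^{-c/J}$ per site. So $\norm{R}$ is extensive, not small in $L$. This is not a bookkeeping weakness. A remainder of superpolynomially small total norm would make $UH_0U^\dagger$ an almost exactly conserved quasi-local excitation number for $H$. For generic $V$, such a quantity is conserved only up to times $e^{O(1/J)}$. Hence your account of where $\delta$ comes from is wrong, and the later appeal to $[V',H_0]=0$ is unsupported.

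There is a second, independent hole. TQO-1 only gives $P_A\tilde V_{r,A}P_A=0$. To conclude that $\tilde V_{r,A}$ annihilates the local ground state, you also need $Q_AV'_{r,A}P_A=0$ for each local piece. Commutation of the sum $V'$ with $H_0$ does not give this unless it holds term by term. (TQO-1 should also be stated with $O$ supported in $A$ and the projector on the enlarged region, not the reverse.)

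The missing idea is that full block-diagonalization is never needed, because the relative bound is an operator inequality. Suppose each local piece, after subtracting its TQO-1 constant, satisfies $\tilde V_{r,A}=Q_B\tilde V_{r,A}Q_B$ on a neighbourhood $B$ of its support. Then $\pm\tilde V_{r,A}\le\norm{\tilde V_{r,A}}\,Q_B\le\norm{\tilde V_{r,A}}\sum_{e\in B}(1-Z_e)/2$. Summing gives $-c_1JH_0\le\sum_{r,A}\tilde V_{r,A}\le c_1JH_0$. Courant--Fischer then puts the $j$-th eigenvalue of $H_0+\tilde V$ in $[(1-c_1J)\lambda_j,(1+c_1J)\lambda_j]$ for all $j$ at once, with no commutation with $H_0$.

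The rotation therefore only has to decouple the ground space of $H_0$ from its complement, term by term, which is a one-gap problem. BHM do this with Hastings' quasi-adiabatic continuation along $H(s)=H_0+sV$. Each $V_{r,A}$ is filtered as $\int dt\,w(t)\,e^{iH(s)t}V_{r,A}e^{-iH(s)t}$ with $\hat w$ supported inside the gap, so it commutes with the low-energy projector of $H(s)$. The flow carries this back to the ground-space projector of $H_0$. TQO-1 and TQO-2, after a local decomposition, then give the annihilation property above. The subexponential tails of the filter are what make $\delta$ superpolynomially rather than exponentially small.

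This construction uses the gap of $H(s)$ along the path, which is exactly what is being proved, so it must be closed by a continuity argument in $s$. Your plan also invokes quasi-adiabatic continuation but never addresses this circularity.
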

We can ignore $\delta$ as we are ultimately interested in the thermodynamic limit. This theorem tells us that the spectral gap of $H(\beta)$ is lower-bounded by $2 - 2 c_1 J$, which is positive for sufficiently small $J$.

\subsection{Proving that \texorpdfstring{$H(\beta)$}{Hbeta} is local}
\label{subsec:locality_proof}
We now set out to prove that $H(\beta)$ is local at sufficiently large $\beta$, in the sense of Eqs.~\eqref{eq:hastsings_1}, \eqref{eq:hastings_2}, and \eqref{eq:hastings_3}. Theorem~\ref{theorem:BHM} will then immediately imply the existence of a spectral gap at sufficiently large values of $\beta$. The results of this section will crucially rely on the bound in Eq.~\eqref{eqn:fCbound} of the main text, which we repeat here for convenience: for $\beta > \beta^* = 2 + \log 3$, we have the inequality
\begin{equation}
\label{eq:app_fc_bound}
    \sum_{\CC \in \mathcal{S}_e} \abs{f(\CC)} \leq \Cclus e^{-4\beta} ,
\end{equation}
where $\Cclus \approx 5.115 \times 10^3$ is a constant. Since the proof of this bound is rather technical, we defer it to Appendix~\ref{sec:proof_of_eq13}.

We begin by Taylor expanding the exponential in Eq.~\eqref{eq:sm_parent_ham} [justified by the bound in Eq.~\eqref{eq:app_fc_bound}] to write
\begin{equation}
    \begin{split}
        H(\beta) &= H_0 + \sum_e \sum_{n \geq 1} \frac{(-2)^n}{n!} \qty[ \sum_{\mathcal{C} \in \mathcal{S}_e} f(\mathcal{C}) X_{\mathcal{C}} ]^n  \\
        &= H_0 + \sum_e \sum_{n \geq 1} \frac{(-2)^n}{n!} \sum_{\mathcal{C}_1 \ldots \mathcal{C}_n \in \mathcal{S}_e} \prod_{j = 1}^n \qty[ f(\mathcal{C}_j) X_{\mathcal{C}_j} ] .
    \end{split}
\end{equation}
Our goal is to express the second term above in the form of Eq.~\eqref{eq:hastings_2}. For each vertex $v$ and odd values of $r$, let $A_v(r) \in S(r)$ denote the set of vertices contained within the $r \times r$ square centered on $v$. We then define the operator $V_{r,v}$ supported on $A_v(r)$ via
\begin{equation}
\label{eq:V_rv_app}
    V_{r, v} \equiv \sum_{e = (v, v + \hat{x}), (v, v + \hat{y})} \sum_{n \geq 1} \frac{(-2)^n}{n!} \sum_{\substack{\mathcal{C}_1 \ldots \mathcal{C}_n \in \mathcal{S}_e \\ \max_j \norm{\CC_j} = r+1}} \prod_{j = 1}^n [f(\mathcal{C}_j) X_{\mathcal{C}_j}] ,
\end{equation}
where $\norm{\mathcal{C}} \equiv \sum_{\gamma \in \mathcal{C}} \abs{\gamma}$ denotes the total length of all polymers (including repetitions) contained in the cluster $\mathcal{C}$. In the above, we have restricted the sum over clusters $\mathcal{C}_1 \ldots \mathcal{C}_n$ such that the maximum value of any $\norm{ \mathcal{C}_j }$ is precisely $r + 1$. It is straightforward to show\footnote{Indeed, the ``worst-case scenario'' arises when a cluster $\mathcal{C}_j$ consists of a single closed-loop polymer of width 1 and length $\lfloor r/2 \rfloor$, whose perimeter is precisely $r+1$.} that with this restriction, $V_{r, v}$ is supported on the $r \times r$ square $A_v(r)$. We then have the decomposition\footnote{For $r \geq L_y = \min(L_x, L_y)$, one may worry that many different squares $A_v(r) \in S(r)$ (namely, shifts of $v$ in the $y$ direction) actually describe the same set of lattice points. This is not an issue, since the bound we will prove demonstrates that the spectral norms of resulting operators $V_{r,v}$ are exponentially suppressed in $r \geq L_y$. By Weyl's inequality, such terms cannot shift the eigenvalues of $H(\beta)$ by more than an $e^{-\mathcal{O}(L_y)}$ amount, and we can therefore neglect them in proving the stability of the gap for sufficiently large system sizes.}
\begin{equation}
    H(\beta) = H_0 + \sum_{v} \sum_{r \geq 3, \text{odd}} V_{r, v} ,
\end{equation}
which matches the desired form in Eq.~\eqref{eq:hastings_2}.

Next, we want to bound the spectral norm of each $V_{r,v}$ as in Eq.~\eqref{eq:hastings_3}. Since each $\prod_j X_{\mathcal{C}_j}$ has a spectral norm of 1, the triangle inequality immediately gives
\begin{subequations}
\label{eq:vbound_app}
\begin{align}
    \norm{V_{r,v}} &\leq  \sum_{e = \term{v,v+\hat{x}},\term{v, v+\hat{y}}} \sum_{n \geq 1}  \frac{2^n}{n!} \sum_{\substack{\CC_1 \dots \CC_n \in \mathcal{S}_e  \\ \max_j \norm{\CC_j} = (r+1)}} \prod_{j=1}^n { \abs{f(\CC_j)} } \\
    &\leq  \sum_{e = \term{v,v+\hat{x}},\term{v, v+\hat{y}}} \sum_{n \geq 1}  \frac{2^n}{n!} \,  \sum_{k=1}^n \, \sum_{\substack{\CC_1 \dots \CC_n \in \mathcal{S}_e \\ \norm{\CC_k} = (r+1)}} \prod_{j=1}^n {\abs{f(\CC_j)} } \label{eq:vbound_2} \\
    &=  \sum_{e = \term{v,v+\hat{x}},\term{v, v+\hat{y}}} \sum_{n \geq 1}  \frac{2^n}{(n-1)!} \term{ \sum_{\substack{\CC \in \mathcal{S}_e \\ \norm{\CC} = r+1}} \abs{f(\CC)}} \term{ \sum_{\CC \in \mathcal{S}_e}  \abs{f(\CC)}}^{n-1} . \label{eq:vbound_3}
\end{align}
\end{subequations}
To obtain Eq.~\eqref{eq:vbound_2}, we replace the maximization over $\CC_j$ with a sum over all possible $\CC_k$ for which $\norm{\CC_k} = r + 1$; Eq.~\eqref{eq:vbound_3} then follows immediately by noting that the summand for each $k$ is identical.

The latter sum over clusters is immediately bounded using Eq.~\eqref{eq:app_fc_bound}. To bound the former sum over clusters, we note from Eq.~\eqref{eq:fc_endmatter} that $\abs{f(\CC)} = e^{-\beta \norm{\CC} / 2} \abs{f_{\beta / 2}(\mathcal{C})}$, where $f_{\beta/2}(\mathcal{C})$ is the same function $f(\mathcal{C})$ evaluated at the deformation strength $\beta / 2$ instead of $\beta$. If we assume for simplicity that $\beta \geq 2 \beta^*$, we can once again use Eq.~\eqref{eq:app_fc_bound} to bound the former sum over clusters as follows:
\begin{subequations}
\label{eq:vbound_app_2}
    \begin{align}
        \norm{V_{r, v}} &\leq \sum_{e = (v, v + \hat{x}), (v, v + \hat{y})} \sum_{n \geq 1} \frac{2^n}{(n-1)!} \qty( e^{-\beta (r + 1) / 2} \sum_{\substack{C \in \mathcal{S}_e \\ \norm{\CC} = r + 1}} \abs{f_{\beta / 2}(\mathcal{C})} ) \qty( \sum_{C \in \mathcal{S}_e} \abs{f(\mathcal{C})} )^{n-1} \\
        &\leq 2e^{- \beta (r + 1) / 2} \qty( \Cclus e^{-2 \beta} ) \sum_{n \geq 1} \frac{2^n}{(n-1)!} \qty( \Cclus e^{-4\beta} )^{n-1} \\
        &= 4 e^{-\beta r / 2} \Cclus e^{-5\beta / 2} \exp[ 2 \Cclus e^{-4\beta} ] \\
        &= \qty( 4 \Cclus e^{-4\beta} \exp \qty[ 2 \Cclus e^{-4\beta} ] ) e^{-\beta(r - 3) / 2} \\
        &\leq \qty( 4 \Cclus e^{-4\beta} \exp \qty[ 2 \Cclus e^{-8\beta^*} ] ) e^{-(r-3)} ,
    \end{align}
\end{subequations}
where we've used $\beta / 2 \geq \beta^* \geq 1$ inside the $r$-dependent exponential, recalling that $r \geq 3$. Altogether, we have proven the bound
\begin{equation}
\label{eq:vbound_app_final}
    \norm{V_{r,v}} \leq \tilde{J} e^{-4\beta} e^{-r}, \quad \tilde{J} = 4\Cclus \exp \qty[ 3 + 2\Cclus e^{-8\beta^*} ]  \approx 4.11 \times 10^5 .
\end{equation}
We have therefore proven that our parent Hamiltonian $H(\beta) = H_0 + V$ satisfies the locality bound in Eq.~\eqref{eq:hastings_2}, with $\mu = 1$. Since $J(\beta) \equiv \tilde{J} e^{-4\beta}$ can be made arbitrarily small at large $\beta$ (eg. for $\beta = 5, J(\beta) \approx 10^{-3}$), it follows from Theorem~\ref{theorem:BHM} that $H(\beta)$ is gapped and nondegenerate for sufficiently large $\beta$. Moreover, since $H(\beta)$ is continuously connected to the trivial Hamiltonian $H_0$ by a gap-preserving deformation, we also find that $\ket{\psi(\beta)}$ lies in the trivial phase at large $\beta$.

\section{Proof of Cluster Expansion Bound}
\label{sec:proof_of_eq13}
Having proven the locality of our parent Hamiltonian $H(\beta)$ utilizing the bound in Eq.~\eqref{eqn:fCbound} of the main text, we now return and demonstrate this bound. We will prove this result in several steps, using techniques which are standard to the cluster expansion \cite{Friedli_Velenik_2017,KoteckyPreiss1986,Penrose1967,Kruskal1956,KleinbergTardos2006,FernandezProcacci2007}.

\begin{lemma}[Tree-graph identity for the Ursell function \cite{Penrose1967}]
    For any $n \geq 1$ and polymers $\gamma_1,\dots,\gamma_n$ (repetitions allowed), the Ursell functions \eqref{eq:ursell} can be expressed as a sum over tree graphs as follows:
    \begin{equation}
    \label{eq:ursell_tree_graph}
        \varphi(\gamma_1,\dots,\gamma_n) = \frac{1}{n!} \sum_{T \in \mathcal{T}_n} \term{\prod_{(i,j) \in E(T)}\zeta(\gamma_i,\gamma_j) } \term{ \prod_{(i,j) \in \mathcal{E}_{\text{extra}}(T)} \delta(\gamma_i,\gamma_j) } ,
    \end{equation}
    where $\mathcal{T}_n$ is the set of spanning trees over $n$ labeled vertices, and $\mathcal{E}_{\mathrm{extra}}$ is defined below.
\end{lemma}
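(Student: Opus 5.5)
The plan is to prove the identity by partitioning the connected graphs $\mathcal{G}^c_n$ into Boolean intervals indexed by spanning trees. This is Penrose's original strategy. We fix vertex $1$ as a root. Each connected graph $G$ is assigned a spanning tree $T(G)$ by a deterministic rule. The aim is to show that the fibres of this map are exactly intervals $\{G : E(T) \subseteq E(G) \subseteq E(T) \cup \mathcal{E}_{\mathrm{extra}}(T)\}$. Once that holds, the sum over the extra edges factorizes. Since $\sum_{S \subseteq \mathcal{E}_{\mathrm{extra}}} \prod_{(i,j)\in S} \zeta(\gamma_i,\gamma_j) = \prod_{(i,j) \in \mathcal{E}_{\mathrm{extra}}}(1+\zeta(\gamma_i,\gamma_j)) = \prod \delta(\gamma_i,\gamma_j)$, Eq.~\eqref{eq:ursell} becomes Eq.~\eqref{eq:ursell_tree_graph} directly. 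Only the definition of $\zeta = \delta - 1$ is used, so the identity holds pointwise for any fixed polymers.

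The map $G \mapsto T(G)$ is the Penrose map. Let $d_G(i)$ be the graph distance from $1$ to $i$ in $G$. Each vertex $j \neq 1$ gets the parent $p(j)$, defined as the smallest-labelled neighbour of $j$ in $G$ at distance $d_G(j)-1$. The tree $T(G)$ consists of the edges $(j,p(j))$. It is a spanning tree, and $d_T = d_G$ on vertices, because $T$ is a shortest-path (BFS) tree. We then set $\mathcal{E}_{\mathrm{extra}}(T)$ to be the set of non-tree edges $(i,j)$ of two kinds. The first kind has $d_T(i) = d_T(j)$, i.e. edges within a BFS layer. The second kind has $d_T(i) = d_T(j) - 1$ and $i > p(j)$, i.e. edges to the previous layer with a label larger than the parent.

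The key step, and the main obstacle, is verifying that $T(G) = T$ holds if and only if $E(T) \subseteq E(G) \subseteq E(T) \cup \mathcal{E}_{\mathrm{extra}}(T)$. I would prove the two directions separately.

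For the forward direction, take $T(G) = T$, so $d_G = d_T$. Any edge of $G$ joins vertices whose distances differ by at most one, so a non-tree edge lies either within a layer or between adjacent layers. In the latter case, with $i$ at depth $d_T(j)-1$, we get $i > p(j)$ by minimality of $p(j)$.

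For the converse, add to $T$ any subset of $\mathcal{E}_{\mathrm{extra}}(T)$. None of these edges connects layers differing by two or more, so they create no shortcuts and distances are unchanged. By induction on depth, $d_G = d_T$. Then for each $j$, the neighbours of $j$ at the previous layer are $p(j)$ together with possibly some larger labels, so the min-label rule again selects $p(j)$. Hence $T(G) = T$.

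This gives a partition $\mathcal{G}^c_n = \bigsqcup_{T \in \mathcal{T}_n} [T, T\cup \mathcal{E}_{\mathrm{extra}}(T)]$. Inserting it into Eq.~\eqref{eq:ursell} and summing over the extra-edge subsets as above completes the proof. The $1/n!$ prefactor is carried along unchanged.
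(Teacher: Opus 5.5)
Your proof is correct, but it partitions $\mathcal{G}^c_n$ into Boolean intervals using a different scheme than the paper. The paper assigns $G$ its tree via Kruskal's algorithm with the lexicographic edge order $(1,2),(1,3),\dots,(n-1,n)$. Its $\mathcal{E}_{\mathrm{extra}}(T)$ is therefore the set of non-tree edges that come later than every edge on the $T$-path joining their endpoints. You use Penrose's breadth-first map rooted at vertex $1$ with smallest-label parents. Your $\mathcal{E}_{\mathrm{extra}}(T)$ therefore consists of same-generation edges, plus edges to the previous generation whose label exceeds the parent's.

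The two sets genuinely differ. For the tree with edges $(1,3),(1,4),(2,3),(4,5)$, the edge $(2,5)$ is extra for you but not for Kruskal, since it precedes $(4,5)$. Because the statement leaves $\mathcal{E}_{\mathrm{extra}}$ to be fixed in the proof, either choice yields the identity.

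The Kruskal route is more flexible: any total order on the edges of the complete graph works, and the interval property is just the cycle property of minimum spanning trees. However, the paper only asserts the converse inclusion, that adding any subset of skipped edges to $T$ reproduces $T$. This holds because, for any input in the interval, Kruskal's partial forest just before it reaches an edge $e$ is exactly the set of edges of $T$ preceding $e$.

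Your route checks both inclusions explicitly, and its layered structure fits the rest of the appendix well:
\begin{itemize}
\item It is rooted at vertex $1$, as in the rooted-tree recursion of Lemma~\ref{lem:rooted}.
\item The containment $\mathcal{E}_{\mathrm{extra}}(T_r)\subseteq\mathcal{E}_{\mathrm{extra}}(T)$ used there is immediate, since depths shift uniformly on subtrees and parents are unchanged.
\item Its same-generation edges force siblings to be pairwise compatible, which is what underlies the tighter Fern\'andez--Procacci bound the paper mentions.
\end{itemize}
It is also the actual scheme of the Penrose paper that the lemma cites.
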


\begin{proof}
    This result was first derived in Ref.~\onlinecite{Penrose1967} to show convergence of the cluster expansion. For completeness, we include a proof here.

First, we describe Kruskal's algorithm~\cite{Kruskal1956,KleinbergTardos2006} , which constructs a unique spanning tree $T \in \mathcal{T}_n$ given any $G \in \mathcal{G}_n^c$. We start by defining an ordering over all possible edges in $G$ (we will call this set $E_n$). We choose the ordering 
\begin{equation}
    (1,2), \ (1,3), \ \dots,(1,n), \ (2,3), \ (2,4), \ \dots, \ (2,n), \ (3,4), \ \dots, \  (n-1,n).
\end{equation}
Kruskal's algorithm then defines a map $K: \mathcal{G}_n^c \to \mathcal{T}_n$ which goes through each link in $G \in \mathcal{G}_n^c$ in the above order, and adds the link to the resulting tree if it does not create a cycle, skipping the link otherwise. While $K$ assigns a unique $T \in \mathcal{T}_n$ to each $G \in \mathcal{G}^c_n$, note that $K$ is not injective; several different connected graphs $G$ are mapped to each tree $T$.

Using Kruskal's algorithm, we can reorder the sum as
\begin{subequations}
\begin{align}
    \varphi(\gamma_1, \dots, \gamma_n) &= \frac{1}{n!} \sum_{T \in \mathcal{T}_n} \sum_{G \in \mathcal{G}_n^c : K(G) = T} \prod_{(i,j) \in E(G)} \zeta(\gamma_i,\gamma_j) \\
    &= \frac{1}{n!} \sum_{T \in \mathcal{T}_n} \term{\prod_{(i,j) \in E(T)}\zeta(\gamma_i,\gamma_j) } \sum_{G \in \mathcal{G}_n^c : K(G) = T} \prod_{(i,j) \in E(G) \setminus E(T)} \zeta(\gamma_i,\gamma_j) .
\end{align}
\end{subequations}
Next, consider the set $E(G) \setminus E(T)$. By the construction of Kruskal's algorithm, an edge is present in $G$ but excluded from $T$ if and only if its addition would create a cycle with the edges already selected. Consequently, there exists a set of edges $\mathcal{E}_{\text{extra}}(T) \subseteq E_n$ that would be skipped over in Kruskal's algorithm for any input that resulted in $T$. Therefore, instead of summing over all graphs such that $K(G) = T$, we can equivalently sum over all subsets of $\mathcal{E}_{\text{extra}}(T)$:
\begin{subequations}
\begin{align}
    \sum_{G \in \mathcal{G}_n^c : K(G) = T} \prod_{(i,j) \in E(G) \setminus E(T)} \zeta(\gamma_i,\gamma_j) &= \sum_{\mathcal{E} \subseteq \mathcal{E}_{\text{extra}}(T)} \prod_{(i,j) \in \mathcal{E}} \zeta(\gamma_i, \gamma_j) \\
    &= \prod_{(i,j) \in \mathcal{E}_{\text{extra}}(T)} [1 + \zeta(\gamma_i,\gamma_j)] \\
    &= \prod_{(i, j) \in \mathcal{E}_{\text{extra}(T)}} \delta(\gamma_i, \gamma_j) ,
\end{align}
\end{subequations}
from which Eq.~\eqref{eq:ursell_tree_graph} follows.
\end{proof}

A useful feature of this rewriting is that every spanning tree $T \in \mathcal{T}_n$ has precisely $n-1$ edges; consequently, every term in Eq.~\eqref{eq:ursell_tree_graph} contributes with the same sign [unlike the original expression in Eq.~\eqref{eq:ursell}]. We therefore have the immediate corollary:
\begin{equation}
\label{eq:ursell_tree_graph_abs}
    \abs{\varphi(\gamma_1, \ldots , \gamma_n)} = \frac{1}{n!} \sum_{T \in \mathcal{T}_n} \qty( \prod_{(i, j) \in E(T)} \abs{\zeta(\gamma_i, \gamma_j)} ) \qty( \prod_{(i, j) \in \mathcal{E}_{\text{extra}}(T)} \delta(\gamma_i, \gamma_j) ) .
\end{equation}

\begin{lemma}[Kotecky-Preiss inequality~\cite{KoteckyPreiss1986}]
\label{lemma:KP} 
When $\beta \geq \beta^\ast = 2 +\log 3$, the weight $w(\gamma) = e^{-\beta \abs{\gamma}}$ satisfies
\begin{equation}
    \sum_{\gamma' \nsim \gamma} w(\gamma') e^{|\gamma'|} \leq \abs{\gamma},
\end{equation}
where $\gamma' \nsim \gamma$ indicates the sum is performed over all polymers $\gamma'$ which are incompatible with the fixed polymer $\gamma$ (i.e., such that $\delta(\gamma, \gamma') = 0$).
\end{lemma}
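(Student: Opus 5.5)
The plan is to bound the left-hand side by a sum over edges of $\gamma$, and then count polymers through a fixed vertex by a standard path-counting (Peierls-type) estimate. If $\gamma'$ is incompatible with $\gamma$, some edge $e'\in\gamma'$ shares a vertex with some edge of $\gamma$. So $\gamma'$ must visit one of the vertices of $\gamma$, and there are at most $\abs{\gamma}$ such vertices, since a closed connected loop of $\abs{\gamma}$ edges has at most $\abs{\gamma}$ vertices. A union bound then gives
\begin{equation*}
\sum_{\gamma'\nsim\gamma} e^{-(\beta-1)\abs{\gamma'}} \le \abs{\gamma}\,\sup_{v}\sum_{\gamma'\ni v} e^{-(\beta-1)\abs{\gamma'}} ,
\end{equation*}
and it suffices to show that the supremum on the right is at most $1$ for $\beta\ge\beta^*$.

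Next I will count polymers of length $k$ passing through a fixed vertex $v$. A polymer is a connected edge set in which every vertex has even degree, so it admits an Eulerian circuit. Hence every polymer of length $k$ through $v$ can be traced as a closed walk of $k$ steps starting at $v$ that uses each edge once. I will encode the walk by its first step (4 choices) and each subsequent step (at most 3 choices, since an edge is never immediately reused). This gives at most $4\cdot 3^{k-1}$ polymers of length $k$ through $v$. The minimal length is $k=4$, and closedness forces $k$ even, though I will not need parity.

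It then remains to sum a geometric series:
\begin{equation*}
\sum_{k\ge4}\tfrac43\,(3e^{-(\beta-1)})^{k} = \tfrac43\,\frac{x^4}{1-x}, \qquad x=3e^{1-\beta}.
\end{equation*}
At $\beta=\beta^*=2+\log 3$ we have $x=e^{-1}$, and the sum is roughly $\tfrac43\, e^{-4}/(1-e^{-1})\approx 0.04<1$. The sum is monotone decreasing in $\beta$, so the bound holds for all $\beta\ge\beta^*$.

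The main obstacle is the counting step. Polymers here may self-intersect, as in figure-eights, and the edge-sharing-vertex notion of incompatibility must be handled so that the factor $\abs{\gamma}$ really dominates the number of vertices at which $\gamma'$ can attach. I will handle both points with the Eulerian-circuit encoding and the vertex-count bound above, accepting some overcounting, since the margin at $\beta^*$ is large.
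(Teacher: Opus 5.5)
Your proposal is correct and follows the paper's proof essentially step by step: a union bound over the at most $\abs{\gamma}$ vertices where $\gamma'$ can touch $\gamma$, an exponential count of polymers through a fixed vertex, and a geometric series evaluated at $3e^{1-\beta^*}=e^{-1}$. Your Eulerian-circuit, non-backtracking count $4\cdot 3^{k-1}$ supplies the justification that the paper's asserted $3^s$ bound leaves implicit. Summing over all $k\geq 4$ rather than only even lengths is also harmless, and slightly more robust, since on a torus with an odd side length a non-contractible polymer can have odd length.
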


\begin{proof}
The number of polymers on the square lattice of length $s$ containing a given vertex $v$ is upper bounded by $3^s$. 
Therefore, the number of incompatible $\gamma'$ of fixed length $\abs{\gamma'} = s$ is upper-bounded by $\abs{\gamma} 3^{s}$, since $\gamma'$ has no more than $\abs{\gamma}$ options where it can intersect $\gamma$. Then,
\begin{subequations}
    \begin{align}
    \sum_{\gamma' \nsim \gamma} w(\gamma') e^{|\gamma'|} &\leq \sum_{s\geq 4, \mathrm{even}} \abs{\gamma} 3^s e^{-\beta s} e^{s}  \\
    &= \frac{(3e^{1-\beta})^4}{1 - (3e^{1-\beta})^2} \abs{\gamma} \\
    &\leq \frac{e^{-4}}{1 - e^{-2}} e^{-4(\beta - \beta^\ast)} \abs{\gamma} \\
    &\leq \abs{\gamma}
\end{align}
\end{subequations}
where the first sum converges as long as $ \beta > \beta^\ast -1 = 1 + \log 3$. The parameter $\beta^*$ is chosen for convenience, so that the numerical prefactor is smaller than unity.
\end{proof}

\begin{lemma}[Rooted tree recursion]\label{lem:rooted}
Define, for any polymer $\gamma$, the {absolute rooted cluster weight}
\begin{equation}
    g(\gamma) = \sum_{n \geq 1} n \sum_{\gamma_2,\dots,\gamma_n} \abs{\varphi(\gamma,\gamma_2, \dots,\gamma_n)} \prod_{i = 1}^n w(\gamma_i),\qquad \gamma_1 \equiv \gamma.
\end{equation}
Then $g(\gamma)$ satisfies the recursive inequality 
\begin{equation}
\label{eq:rooted_tree_recursion}
    g(\gamma) \leq w(\gamma) e^{\sum_{\gamma' \nsim \gamma} g(\gamma')}.
\end{equation}
\end{lemma}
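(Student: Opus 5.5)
The approach is to avoid the Kruskal tree representation and work directly from the graph definition \eqref{eq:ursell} of the Ursell function, using the \emph{root-deletion} decomposition of connected graphs. This decomposition keeps $\abs{\varphi}$, rather than a tree majorant, inside each block, so $g$ itself will reappear in the exponent. To sidestep convergence issues, I would first prove the inequality for truncations $g_N(\gamma)$, in which the sum over $n$ is restricted to $n\le N$. The aim is $g_N(\gamma)\le w(\gamma)\exp\bigl(\sum_{\gamma'\nsim\gamma}g_{N-1}(\gamma')\bigr)$; letting $N\to\infty$ by monotone convergence then gives \eqref{eq:rooted_tree_recursion}. If the right-hand side is infinite there is nothing to prove.

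\emph{Step 1 (root deletion).} Fix $\gamma_1=\gamma$ and a connected graph $G$ on $\{1,\dots,n\}$. Deleting vertex $1$ splits $\{2,\dots,n\}$ into connected components $B_1,\dots,B_k$ with $k\ge1$. Thus $G$ is encoded uniquely by three pieces of data: the unordered partition $\{B_1,\dots,B_k\}$, a connected graph $G_a$ on each block $B_a$, and a \emph{nonempty} set of edges joining $1$ to each $B_a$. Summing $\prod_{E(G)}\zeta$ over this data, and using that $\zeta\in\{0,-1\}$ for the hard-core interaction, I get
\[
\sum_{G\in\mathcal{G}^c_n}\prod_{(i,j)\in E(G)}\zeta(\gamma_i,\gamma_j)=\sum_{\{B_a\}}\prod_{a=1}^k\Bigl[\Bigl(\prod_{i\in B_a}\delta(\gamma_1,\gamma_i)-1\Bigr)\sum_{G_a\in\mathcal{G}^c_{B_a}}\prod_{E(G_a)}\zeta\Bigr].
\]
No cross-block edges appear, because blocks are by definition disconnected from each other once $1$ is removed. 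The prefactor for each block has absolute value at most $\sum_{i\in B_a}\mathbb{1}[\gamma_i\nsim\gamma_1]$. The inner connected-graph sum equals $\abs{B_a}!\,\varphi(\gamma_{B_a})$ by \eqref{eq:ursell}.

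\emph{Step 2 (absolute values and resummation).} Applying the triangle inequality over partitions gives
\[
n\,\abs{\varphi(\gamma_1,\dots,\gamma_n)}\le\frac{1}{(n-1)!}\sum_{\{B_a\}}\prod_a\Bigl(\sum_{i\in B_a}\mathbb{1}[\gamma_i\nsim\gamma]\Bigr)\abs{B_a}!\,\abs{\varphi(\gamma_{B_a})}.
\]
I then multiply by $\prod_i w(\gamma_i)$ and sum over $\gamma_2,\dots,\gamma_n$. Next I convert the sum over unordered partitions of $\{2,\dots,n\}$ into block sizes $m_1,\dots,m_k$ with $\sum_a m_a=n-1$. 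This uses the multinomial count $(n-1)!/(k!\prod_a m_a!)$, which cancels the $1/(n-1)!$ and the factors $m_a!$ and leaves $1/k!$. Each block then contributes an independent factor. In that factor, the distinguished index $i\in B_a$ with $\gamma_i\nsim\gamma$ is relabelled as the root of the block, so the label sum yields a factor $m_a$ and the block becomes
\[
\sum_{\gamma'\nsim\gamma}\;\sum_{m\le N-1} m\sum_{\gamma'_2,\dots,\gamma'_m}\abs{\varphi(\gamma',\gamma'_2,\dots)}\prod w=\sum_{\gamma'\nsim\gamma}g_{N-1}(\gamma').
\]
Summing over $k$ with the $1/k!$ gives the exponential. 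The $k=0$ term corresponds to $n=1$ and produces the leading $w(\gamma)$.

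\emph{Main obstacle.} The delicate point is the bookkeeping in Step 2. I must check that relabelling the distinguished vertex as the root uses the symmetry of $\abs{\varphi}$ under permutations of its arguments, and that this exactly reproduces the factor $n$ in the definition of $g$ with the correct multiplicity. I must also confirm that relaxing the constraint $\sum_a m_a\le N-1$ to $m_a\le N-1$ for each block only increases the right-hand side, which holds since every term is nonnegative. Everything else follows from the definitions.
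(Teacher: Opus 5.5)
Your proof is correct, and it follows a genuinely different route from the paper's.

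\textbf{The paper's route.} The paper first invokes the Penrose tree-graph identity, built from Kruskal's algorithm. This writes $\abs{\varphi}$ as an exact, sign-coherent sum over spanning trees weighted by $\prod_{E(T)}\abs{\zeta}\prod_{\mathcal{E}_{\text{extra}}(T)}\delta$. It then roots each tree at vertex $1$ and splits it into subtrees. The inequality $\prod_{\mathcal{E}_{\text{extra}}(T)}\delta\le\prod_r\prod_{\mathcal{E}_{\text{extra}}(T_r)}\delta$ lets each subtree sum reassemble into $\abs{\varphi}$ on its vertex set. The subtree roots play the role of your distinguished incompatible indices.

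\textbf{Your route.} You work directly from the connected-graph definition. Decomposing $G$ by the components of $G\setminus\{1\}$ gives an exact recursion for $n!\,\varphi$ in terms of the quantities $\abs{B_a}!\,\varphi(\gamma_{B_a})$. A single triangle inequality plus $\abs{\prod_{i\in B_a}\delta(\gamma,\gamma_i)-1}\le\sum_{i\in B_a}\mathbb{1}[\gamma_i\nsim\gamma]$ then finishes the argument. Your bookkeeping is right: the multinomial count, the relabelling via permutation symmetry of $\varphi$, and relaxing $\sum_a m_a\le N-1$ to independent $m_a\le N-1$. The binomial identity in Step 1 holds for arbitrary $\zeta$; the hard-core property is only needed for the subsequent bound.

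\textbf{What your route buys.}
\begin{itemize}
\item You need neither the tree-graph identity nor the restriction property $\mathcal{E}_{\text{extra}}(T_r)\subseteq\mathcal{E}_{\text{extra}}(T)$. The paper states that property quickly, and it only holds if each subtree uses the edge order induced from $\{1,\dots,n\}$ rather than the order after relabelling its root as $1$. The argument also needs the tree-graph identity to hold for any edge order. Your blocks are literally the same connected-graph sums, so none of this arises.
\item Convergence is handled explicitly rather than deferred to the literature.
\item The argument extends verbatim to soft interactions with $\abs{\delta}\le 1$: replace the indicator by $\abs{\zeta}$ and use $\abs{\prod_i\delta_i-1}\le\sum_i\abs{\delta_i-1}$. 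The sign-coherent tree representation, by contrast, needs $0\le\delta\le 1$.
\item Your truncated inequality is $g_N\le\mathcal{J}[g_{N-1}]$ in the notation of Lemma~\ref{cor:absrootedbound}, with $g_1=w=h_1$. This is exactly the inductive step used there, so truncation in $n$ can replace the paper's depth-truncated $g_d$.
\end{itemize}

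\textbf{What the paper's route buys.} It gives an exact nonnegative tree representation of $\abs{\varphi}$, the standard starting point for sharper Fern\'andez--Procacci-type bounds that keep the $\delta$ factors on extra edges. For the present lemma, the loss from your triangle inequality is immaterial.
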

\begin{proof}
    We refer the reader to~\cite{KoteckyPreiss1986,FernandezProcacci2007,Ueltschi2003} for detailed discussions (including regarding convergence), leaving only the essential derivation here. Using Eq.~\eqref{eq:ursell_tree_graph_abs}, we have 
    \begin{equation}
        g(\gamma_1) = \sum_{n \geq 1} \frac{1}{(n-1)!} \sum_{\gamma_2,\dots,\gamma_n} \sum_{T \in \mathcal{T}_n} \term{\prod_{(i,j) \in E(T)}\abs{\zeta(\gamma_i,\gamma_j)} } \term{ \prod_{(i,j) \in \mathcal{E}_{\text{extra}}(T)} \delta(\gamma_i,\gamma_j) }  \prod_{i = 1}^n w(\gamma_i) .
        \label{eqn:gtree1}
    \end{equation}

    To derive Eq.~\eqref{eq:rooted_tree_recursion}, we regard each spanning tree $T \in \mathcal{T}_n$ as a \emph{rooted} tree, with vertex 1 as the root. If vertex 1 is connected to $k$ other vertices $c_1, \ldots , c_k$ by edges in $E(T)$, then removing vertex 1 splits the rooted tree into $k$ disjoint rooted subtrees $T_1, \ldots , T_k$ with $T_r$ rooted at $c_r$. Letting $V_r$ denote the vertex set of $T_r$, we note that $V_1, \ldots , V_k$ forms a partition of $\{ 2, \ldots , n \}$.

    We now simplify $g(\gamma_1)$ as follows. First, we can write the product over edges $(i, j) \in E(T)$ as
    \begin{equation}
        \prod_{(i, j) \in E(T)} \abs{\zeta(\gamma_i, \gamma_j)} = \prod_{r = 1}^k \qty[ \abs{\zeta(\gamma_{1}, \gamma_{c_r})} \prod_{(i, j) \in E(T_r)} \abs{\zeta(\gamma_i, \gamma_j)} ] .
    \end{equation}
    Similarly, the product over weights trivially factorizes:
    \begin{equation}
        \prod_{i = 1}^n w(\gamma_i) = w(\gamma_1) \prod_{r = 1}^k \prod_{i \in V_r} w(\gamma_i) .
    \end{equation}
    To handle the product over $\mathcal{E}_{\text{extra}}(T)$, we will use the inequality
    \begin{equation}
        \prod_{(i, j) \in \mathcal{E}_{\text{extra}}(T)} \delta(\gamma_i, \gamma_j) \leq \prod_{r = 1}^k \prod_{(i, j) \in \mathcal{E}_{\text{extra}}(T_r)} \delta(\gamma_i, \gamma_j) .
    \end{equation}
    This inequality follows from noting that $\mathcal{E}_{\text{extra}}(T_r) \subseteq \mathcal{E}_{\text{extra}}(T)$ for each subtree $T_r$, so the former product contains extra delta functions which can cause the left-hand side to vanish. Finally, we organize the sum over trees $T \in \mathcal{T}_n$ as a sum over all possible subtrees $T_1 \ldots T_k$: for some function $F(T)$ which only depends on the graph structure of $T$ (i.e., $F(T) = F(T')$ for isomorphic trees $T, T'$), we have
    \begin{equation}
        \sum_{T \in \mathcal{T}_n} F(T) = \sum_{k \geq 0} \frac{1}{k!} \sum_{n_1 + \ldots + n_k = n-1} \frac{(n-1)!}{n_1! \ldots n_k!} n_1 \ldots n_k \sum_{T_1 \in \mathcal{T}_{n_1}} \ldots \sum_{T_k \in \mathcal{T}_{n_k}} F(T = (T_1 \ldots T_k)) ,
    \end{equation}
    where $k$ sums over the number of possible subtrees, and $n_1 \ldots n_k$ sums over the number of vertices in each subtree. The multinomial coefficient $(n-1)! / n_1! \ldots n_k!$ counts the number of ways of sorting vertices $\{2, \ldots , n\}$ into the $k$ subtrees, and the factor $1/k!$ accounts for the fact that the ordering of the $k$ subtrees is immaterial. The product $n_1 \ldots n_k$ accounts for the $n_r$ choices of root vertex in each subtree. Finally, the notation $T = (T_1 \ldots T_k)$ indicates that the rooted tree $T \in \mathcal{T}_n$ is constructed by joining the roots of the subtrees $T_r$ as children to a new root. 
    
    Putting all of these facts together, we arrive at the inequality
    \begin{equation}
        \begin{split}
            g(\gamma_1) &\leq w(\gamma_1) \sum_{k \geq 0} \frac{1}{k!} \prod_{r = 1}^k \qty[ \sum_{n_r \geq 1} \frac{1}{(n_r - 1)!} \sum_{\gamma_1' \ldots \gamma_{n_r}'} \sum_{T_r \in \mathcal{T}_{n_r}} \abs{\zeta(\gamma_1, \gamma_1')} \qty( \prod_{(i, j) \in E(T_r)} \abs{\zeta(\gamma_i', \gamma_j')} ) \qty( \prod_{(i, j) \in \mathcal{E}_{\text{extra}}(T_r)} \delta(\gamma_i', \gamma_j')  ) \prod_{i = 1}^{n_r} w(\gamma_i') ] \\
            &= w(\gamma_1) \sum_{k \geq 0} \frac{1}{k!} \prod_{r = 1}^k \qty[ \sum_{\gamma_1'} \abs{\zeta(\gamma_1, \gamma_1')} g(\gamma_1') ] \\
            &= w(\gamma_1) \exp \qty{ \sum_{\gamma_1'} \abs{\zeta(\gamma_1, \gamma_1')} g(\gamma_1') }.
        \end{split}
    \end{equation}
    Finally noting that $\abs{\zeta(\gamma_1, \gamma_1')}$ is unity when $\gamma_1 \nsim \gamma_1'$ and zero otherwise, we arrive at Eq.~\eqref{eq:rooted_tree_recursion}.

\end{proof}

\begin{lemma}[Absolute rooted cluster bound~\cite{FernandezProcacci2007}]
\label{cor:absrootedbound}
    The absolute rooted cluster weight is bounded above as 
    \begin{equation}
    \label{eq:absolute_rooted_bound}
        g(\gamma) \leq w(\gamma) e^{|\gamma|} .
    \end{equation}
\end{lemma}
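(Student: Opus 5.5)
The plan is to combine the recursive inequality of Lemma~\ref{lem:rooted} with the Kotecky-Preiss bound of Lemma~\ref{lemma:KP} through a truncation-and-induction argument. Because $g$ is defined by an infinite series, we cannot just substitute the ansatz $g(\gamma')\le w(\gamma')e^{|\gamma'|}$ into the right side of Eq.~\eqref{eq:rooted_tree_recursion}. Instead, we work with truncated quantities. For $N\geq 1$, let $g_N(\gamma)$ be the same series as $g(\gamma)$, restricted to clusters of length $n\le N$. Every term is nonnegative, so $g_N(\gamma)\uparrow g(\gamma)$ monotonically. It therefore suffices to prove $g_N(\gamma)\le w(\gamma)e^{|\gamma|}$ for all $N$ and all $\gamma$, uniformly in $N$.

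First we check that the proof of Lemma~\ref{lem:rooted} gives a truncated recursion
\begin{equation*}
g_N(\gamma)\le w(\gamma)\exp\Big\{\sum_{\gamma'\nsim\gamma} g_{N-1}(\gamma')\Big\},
\end{equation*}
with $g_0\equiv 0$. Removing the root of a tree on $n\le N$ vertices leaves subtrees with $n_r\le N-1$ vertices. In the inequality chain of that proof, the total-vertex constraint $\sum_r n_r=n-1\le N-1$ can be relaxed to the individual constraints $n_r\le N-1$. This only enlarges a sum of nonnegative terms, so each bracketed factor is bounded by $\sum_{\gamma'}|\zeta(\gamma,\gamma')|g_{N-1}(\gamma')$, and the same exponential resummation applies. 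The base case is $g_1(\gamma)=w(\gamma)\le w(\gamma)e^{|\gamma|}$.

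For the inductive step, assume $g_{N-1}(\gamma')\le w(\gamma')e^{|\gamma'|}$ for every $\gamma'$. Lemma~\ref{lemma:KP} then gives $\sum_{\gamma'\nsim\gamma}g_{N-1}(\gamma')\le \sum_{\gamma'\nsim\gamma}w(\gamma')e^{|\gamma'|}\le|\gamma|$ whenever $\beta\ge\beta^*$. Substituting into the truncated recursion yields $g_N(\gamma)\le w(\gamma)e^{|\gamma|}$, which closes the induction. Letting $N\to\infty$ by monotone convergence proves Eq.~\eqref{eq:absolute_rooted_bound}, and in particular shows that $g(\gamma)$ is finite. On a finite lattice all sums over polymers are finite, so no further convergence issues arise.

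The main obstacle is verifying the truncated recursion honestly. One must confirm that the tree decomposition in Lemma~\ref{lem:rooted}, including the bound on the $\mathcal{E}_{\mathrm{extra}}$ factors and the combinatorial factor $\frac{1}{(n-1)!}\cdot\frac{(n-1)!}{\prod n_r!}\prod n_r$, holds term by term with the length cutoffs in place. Only then is the induction genuinely over $N$ rather than a circular use of the untruncated $g$. Once that is settled, the inductive step is immediate from Lemma~\ref{lemma:KP}.
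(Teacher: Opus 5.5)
Your proposal is correct and is essentially the paper's argument: the paper also closes a truncated version of the recursion of Lemma~\ref{lem:rooted} by induction, using the Kotecky--Preiss barrier $b(\gamma)=w(\gamma)e^{|\gamma|}$. The only difference is the truncation and bookkeeping: the paper truncates by tree depth $d$ and proves $g_d\le h_d\le b$, where $h_0=0$ and $h_d=\mathcal{J}[h_{d-1}]$ are iterates of the monotone map $\mathcal{J}[h](\gamma)=w(\gamma)e^{\sum_{\gamma'\nsim\gamma}h(\gamma')}$, whereas you truncate by the number of polymers and induct directly on $g_N\le b$. Your size cutoff has the small advantage that each $g_N$ is a finite sum on a finite lattice, so the limit $g_N\uparrow g$ by monotone convergence needs no a priori convergence assumption.
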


\begin{proof}
    We begin by defining, for any function $h : \Gamma \to \mathbb{R}$ on the polymers, a new function $\mathcal{J}[h] : \Gamma \to \mathbb{R}$ as follows:
    \begin{equation}
        \mathcal{J}[h](\gamma) \equiv w(\gamma) e^{\sum_{\gamma' \nsim \gamma} h(\gamma)} .
    \end{equation}
    We also define a partial ordering on the space of such functions: for any two real-valued functions $h, h'$ on the polymers, we write $h \leq h'$ if $h(\gamma) \leq h'(\gamma)$ for all $\gamma \in \Gamma$.
    
    Note that the functional map $\mathcal{J}$ has the following two properties:
    \begin{enumerate}
        \item $\mathcal{J}$ is \emph{monotone}: if $h \leq h'$, then $\mathcal{J}[h] \leq \mathcal{J}[h']$.
        \item the function $b(\gamma) \equiv w(\gamma) e^{|\gamma|}$ is an \emph{upper-barrier}: for any $h \leq b$, we have
        \begin{equation}
            \mathcal{J}[h](\gamma) \leq \mathcal{J}[b](\gamma) = w(\gamma) \exp \qty{ \sum_{\gamma' \nsim \gamma} w(\gamma) e^{|\gamma|} } \leq w(\gamma) e^{|\gamma|} = b(\gamma) ,
        \end{equation}
        where we have used the result of Lemma~\ref{lemma:KP}. In other words, whenever $h \leq b$, we also have $\mathcal{J}[h] \leq b$.
    \end{enumerate}
To prove Eq.~\eqref{eq:absolute_rooted_bound}, let us recursively define the sequence of functions $h_n : \Gamma \to \mathbb{R}$ by $h_0(\gamma) \equiv 0$ and $h_d \equiv \mathcal{J}[h_{d-1}]$ for $d \geq 1$; from the above two properties of $\mathcal{J}$, we immediately have $h_0 \leq h_1 \leq h_2 \leq \ldots \leq b$. We also define the functions
\begin{equation}
    g_d(\gamma_1) \equiv \sum_{n \geq 1} \frac{1}{(n-1)!} \sum_{\gamma_2,\dots,\gamma_n} \sum_{\substack{T \in \mathcal{T}_n, \\ \text{depth}(T) \leq d}} \term{\prod_{(i,j) \in E(T)}\abs{\zeta(\gamma_i,\gamma_j)} } \term{ \prod_{(i,j) \in \mathcal{E}_{\text{extra}}(T)} \delta(\gamma_i,\gamma_j) }  \prod_{i = 1}^n w(\gamma_i) ,
\end{equation}
which are identical to the absolute rooted cluster weight $g(\gamma)$ [see Eq.~\eqref{eqn:gtree1}], save for the restriction to trees $T$ of depth less than or equal to $d$. So long as Eq.~\eqref{eqn:gtree1} converges, we have $\lim_{d \to \infty} g_d(\gamma) = g(\gamma)$. 

We now make the following observations:
\begin{enumerate}
    \item For $d = 1$, we have $h_1(\gamma) = g_1(\gamma) = w(\gamma)$.
    \item For $d = 2$, we have $h_2(\gamma) = w(\gamma) e^{\sum_{\gamma' \nsim \gamma} w(\gamma')}$. On the other hand, by dropping the $\mathcal{E}_{\text{extra}}$ product in the definition of $g_2$, we obtain the inequality
    \begin{equation}
        g_2(\gamma_1) \leq \sum_{n \geq 1} \frac{1}{(n-1)!} \sum_{\gamma_2, \ldots , \gamma_n \nsim \gamma_1} \prod_{j = 1}^n w(\gamma_j) = w(\gamma_1) e^{\sum_{\gamma' \nsim \gamma_1} w(\gamma')} = h_2(\gamma_1) ,
    \end{equation}
    i.e., $g_2 \leq h_2$.
    \item More generally, suppose $g_{n - 1} \leq h_{n-1}$. Then, by following an identical strategy to the proof of Lemma~\ref{lem:rooted}, we obtain the inequality
    \begin{equation}
        g_d(\gamma) \leq w(\gamma) e^{\sum_{\gamma' \nsim \gamma} g_{d-1}(\gamma')} \leq w(\gamma) e^{\sum_{\gamma' \nsim \gamma} h_{d-1}(\gamma')} = h_d(\gamma) ,
    \end{equation}
    where the first inequality follows from noting that the subtrees $(T_1 , \ldots , T_k)$ of a depth-$d$ tree $T$ each have depth $d-1$. By induction, we therefore learn that $g_d \leq h_d$ for all $d$. In the limit $d \to \infty$, we obtain the result
    \begin{equation}
        g(\gamma) \leq \lim_{d \to \infty} h_d(\gamma) \leq b(\gamma) = w(\gamma) e^{| \gamma | } ,
    \end{equation}
    as desired.
\end{enumerate}
\end{proof}

\begin{lemma}\label{lem:Cbound}
    Let $e$ be an arbitrary edge on the lattice $\Lambda$. When $\beta \geq \beta^\ast = 2 + \log 3$, there exists $\Cclus >0 $ such that 
    \begin{equation}
        \sum_{\CC \in \mathcal{S}_e} \abs{f(\CC)} \leq \Cclus e^{-4\beta} ,
    \end{equation}
    where $C_{\text{clus}}$ is a constant defined below.
\end{lemma}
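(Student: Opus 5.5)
The plan is to reduce the sum over clusters in $\mathcal{S}_e$ to a sum over rooted clusters and then invoke Lemma~\ref{cor:absrootedbound}.

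\textbf{Step 1: reduce to rooted sums.} If $\mathcal{C} = (\gamma_1,\dots,\gamma_n) \in \mathcal{S}_e$, then $X_{\mathcal{C}}$ anticommutes with $Z_e$. Hence $e$ lies in the support of $X_{\mathcal{C}}$, so at least one polymer $\gamma_i$ contains $e$. By the permutation symmetry of the Ursell function, summing over the position $i$ of such a polymer gives
\begin{equation*}
\sum_{\mathcal{C}\in\mathcal{S}_e}\abs{f(\mathcal{C})} \le \sum_{\gamma \ni e}\,\sum_{n\ge 1} n \sum_{\gamma_2,\dots,\gamma_n}\abs{\varphi(\gamma,\gamma_2,\dots,\gamma_n)}\prod_{i=1}^n w(\gamma_i) = \sum_{\gamma\ni e} g(\gamma).
\end{equation*}
Overcounting when several polymers contain $e$ only loosens the bound. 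In words: we drop the parity constraint and keep only the condition that some polymer covers $e$.

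\textbf{Step 2: apply the rooted bound.} Since $\beta\ge\beta^*$, Lemma~\ref{cor:absrootedbound} gives $g(\gamma)\le e^{-(\beta-1)\abs{\gamma}}$.

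\textbf{Step 3: count polymers through $e$.} A polymer of length $s$ containing $e$ has a fixed starting vertex, so by the counting used in Lemma~\ref{lemma:KP} there are at most $3^s$ of them. Polymers have even length $s\ge 4$, so
\begin{equation*}
\sum_{\gamma\ni e} g(\gamma) \le \sum_{s\ge4,\ \mathrm{even}} 3^s e^{-(\beta-1)s} = \frac{(3e)^4 e^{-4\beta}}{1-(3e^{1-\beta})^2}.
\end{equation*}

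\textbf{Step 4: fix the constant.} Because $3e^{1-\beta}\le 3e^{1-\beta^*} = e^{-1}$, the denominator is at least $1-e^{-2}$. We may therefore take
\begin{equation*}
\Cclus = \frac{(3e)^4}{1-e^{-2}} \approx 5.1\times 10^{3},
\end{equation*}
which matches the stated value.

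\textbf{Expected obstacles.} The main subtlety is Step 1. We must be careful that the factor $n$ in $g$, together with the $1/n!$ in $\varphi$, correctly accounts for the choice of root polymer, and that enlarging the sum from ``anticommuting with $Z_e$'' to ``some polymer contains $e$'' is valid. The latter holds because all terms are nonnegative. A secondary point is convergence of $g$, which is ensured by the monotone iteration in the proof of Lemma~\ref{cor:absrootedbound}.
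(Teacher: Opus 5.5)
Your proposal is correct and follows the paper's proof essentially line for line. You restrict to clusters with some polymer containing $e$, use permutation symmetry to pass to $\sum_{\gamma\ni e} g(\gamma)$, apply Lemma~\ref{cor:absrootedbound}, and bound the number of polymers through $e$ by $3^s$ to sum the geometric series. In Step 4 you also correctly treat the last step as an inequality, using $3e^{1-\beta}\le e^{-1}$ to bound the denominator below by $1-e^{-2}$; the paper writes this step loosely as an equality.
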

\begin{proof} 
Using Lemma~\ref{cor:absrootedbound}, 
\begin{subequations}
\begin{align}
    \sum_{\CC \in \mathcal{S}_e} \abs{f(\CC)} &\leq \sum_{n\geq 1} \sum_{\substack{\gamma_1,\dots,\gamma_n\\ \exists \gamma_k \ni e}} \abs{\varphi(\gamma_1,\dots,\gamma_n)} \prod_{j = 1}^n w(\gamma_j) \\
    &\leq  \sum_{n\geq 1} n \sum_{\gamma_1 \ni e}\sum_{\gamma_2,\dots,\gamma_n} \abs{\varphi(\gamma_1,\dots,\gamma_n)} \prod_{j = 1}^n w(\gamma_j) \\
    &= \sum_{\gamma_1 \ni e} g(\gamma_1)\\
    &\leq  \sum_{\gamma \ni e} w(\gamma) e^{|\gamma|} \\
    &\leq \sum_{s \geq 4, \mathrm{even}} 3^s e^{-\beta s} e^s \\
    &= \frac{(3e^{1-\beta})^4}{1 - (3e^{1-\beta})^2} \\
    &= \Cclus e^{-4\beta}
\end{align}
\end{subequations}
where we've noted that the number of polymers of length $s$ containing the edge $e$ is upper-bounded by $3^s$. In the last line we identify the constant $\Cclus \equiv { (3e)^4 } / (1-e^{-2}) \approx 5.115 \times 10^3$.
\end{proof}

\section{The 00 sector: deriving the parent Hamiltonian with aspect-ratio dependent locality bounds}
\label{sec:no_go}
Having proven that the deformed toric code state $\ket{\psi(\beta)} \equiv \ket{\psi_{++}(\beta)}$ admits a gapped parent Hamiltonian, we now return to the ``no-go'' state $\ket{\psi_{00}(\beta)}$; see Eq.~\eqref{eq:psi_00}. As noted in the main text, the slightly simpler state $\ket{\psi_{++}(\beta)}$ does not strictly satisfy the assumptions of Ref.~\onlinecite{Sahay2025FunkyIsing}'s Theorem 1, which requires an exact 1-form symmetry about both contractible and non-contractible cycles of the torus. In contrast, $\ket{\psi_{00}(\beta)}$ does exhibit an exact 1-form symmetry about all cycles, and therefore satisfies Ref.~\onlinecite{Sahay2025FunkyIsing}'s assumptions.

We begin by comparing the no-go state $\ket{\psi_{00}(\beta)}$ to the state $\ket{\psi_{++}(\beta)}$ we have studied thus far. From Eqs.~\eqref{eq:psi_++} and \eqref{eq:psi_00}, it is clear that these two states only differ by the inclusion of non-contractible loops in the latter state $\ket{\psi_{++}(\beta)}$. Deep in the strongly deformed phase, large loops are rare, and the relative amplitude of these non-contractible loops is suppressed exponentially in linear system size $L$. Indeed, a simple Peierls argument \cite{Friedli_Velenik_2017} allows for the overlap between the two states to be lower-bounded for large $\beta$ as
\begin{align}
    \frac{\abs{\braket{\psi_{00}(\beta)}{\psi_{++}(\beta)}}^2}{\braket{\psi_{00}(\beta)}{\psi_{00}(\beta)}\braket{\psi_{++}(\beta)}{\psi_{++}(\beta)}}  &= \frac{\mathcal{Z}_\text{e-e} }{\mathcal{Z}_\text{e-e} + \mathcal{Z}_\text{e-o} + \mathcal{Z}_\text{o-e} + \mathcal{Z}_\text{o-o}} \geq 1 - e^{-\mathcal{O}(L)} ,
\end{align}
where $\mathcal{Z}_{\text{e-e}}$ is the partition function of the Ising ferromagnet at inverse temperature $\beta$ with periodic (even) boundary conditions (BC), $\mathcal{Z}_{\text{e-o}}$ has periodic (even) BC along $x$ and anti-periodic (odd) BC along $y$, and so on. We bound the middle expression by noting that every contribution to the odd sectors contains a system-spanning polymer, and consequently the odd-sector partition functions are all exponentially suppressed relative to $\mathcal{Z}_{\text{e-e}}$ at large $\beta$.

As a direct consquence, the states $\ket{\psi_{00}(\beta)}$ and $\ket{\psi_{++}(\beta)}$ are essentially indistinguishable from each other in the thermodynamic limit: the difference in expectation values of any observable is upper-bounded by $e^{-\mathcal{O}(L)}$. In particular, the original state $\ket{\psi_{++}(\beta)}$ is 1-form symmetric about non-contractible loops up to exponentially small corrections, and our parent Hamiltonian $H(\beta)$ is an excellent \emph{approximate} parent Hamiltonian for the no-go state $\ket{\psi_{00}(\beta)}$. 

Nevertheless, it is interesting to construct an \emph{exact} parent Hamiltonian for $\ket{\psi_{00}(\beta)}$ to see how our approach relates to the rigorous result of Ref.~\onlinecite{Sahay2025FunkyIsing}. Towards this end, we write $\ket{\psi_{00}(\beta)}$ in terms of a deformation operator $\hat{\mathcal{Z}}_{00}(\beta)$ analogously to Eq.~\eqref{eq:psi_deformed_app}: 
\begin{equation} \label{eqn:def00}
    \ket{\psi_{00}(\beta)} \propto \hat{\mathcal{Z}}_{00}(\beta) \ket{0}, \qquad \hat{\mathcal{Z}}_{00}(\beta) = \sum_{\LL = \partial \mathcal{R}} e^{-\beta\abs{\LL}} X_{\LL},
\end{equation}
where the sum is restricted to contractible closed loops $\mathcal{L}$, which can always be represented as the boundary of a collection of plaquettes $\mathcal{R}$. As in the previous case, $\hat{\mathcal{Z}}_{00}(\beta)$ is a positive-definite operator\footnote{To see that $\hat{\mathcal{Z}}_{00}(\beta)$ is positive-definite, we take a similar strategy as for $\hat{\mathcal{Z}}(\beta)$ in Eq.~\eqref{eq:Vdef1}. Specifically, we can express $\hat{\mathcal{Z}}_{00}(\beta)$ as the high-temperature expansion of a bond-disordered Ising model summed over periodic and antiperiodic boundary conditions:
\begin{equation}
    \hat{\mathcal{Z}}_{00}(\beta) \propto \sum_{\qty{s_v = \pm 1}} \sum_{\eta_x, \eta_y = \pm 1} \exp \qty{ K \sum_{e = \expval{v v'}} U_e^{(\eta_x, \eta_y)} X_{e} s_v s_{v'} } ,
\end{equation}
where $U_e^{(\eta_x, \eta_y)}$ equals $+1$ everywhere besides along two non-contractible loops $\tilde{\mathcal{L}}_x, \tilde{\mathcal{L}}_y$ in the dual lattice; along $\tilde{\mathcal{L}}_x$ ($\tilde{\mathcal{L}}_y$), $U_e$ instead equals $\eta_x$ ($\eta_y$). The sum over $\eta_x, \eta_y$ effectively imposes a sum over periodic and antiperiodic boundary conditions, which eliminates non-contractible cycles from the high-temperature expansion.}, and its logarithm is therefore well-defined. Our goal is to use the Mayer cluster expansion to explicitly take the logarithm of $\hat{\mathcal{Z}}_{00}(\beta)$, thereby deriving a parent Hamiltonian by the same strategy as in Appendix~\ref{app:locality_proof}. Explicitly, we will show similar to the `++' case that the logarithm admits an expansion of the form
\begin{equation}
    \hat{W}_{00}(\beta) \equiv \log \hat{\mathcal{Z}}_{00}(\beta) = \sum_{\mathcal{C}} f(\mathcal{C}) X_{\mathcal{C}}, \quad f(\mathcal{C}) \equiv \varphi(\gamma_1 , \ldots , \gamma_n) \prod_{i = 1}^n w(\gamma_i) ,
\end{equation}
in perfect analogy to Eq.~\eqref{eq:cluster_expansion_op}. The only difference is that the collection of allowed polymers $\gamma \in \Gamma$ and the form of their interactions $\delta(\gamma, \gamma')$ needs to be slightly modified relative to the `++' case, so as to disallow non-contractible loop configurations. Once this modification is made, we obtain a parent Hamiltonian for $\ket{\psi_{00}(\beta)}$ of an identical form to that of $\ket{\psi_{++}(\beta)}$ [see Eq.~\eqref{eq:sm_parent_ham}]:
\begin{equation}
\label{eq:H00_app}
    H_{00}(\beta) \equiv \sum_{e} \qty( e^{-2 \sum_{e \in \mathcal{S}_e} f(\mathcal{C}) X_{\mathcal{C}}} - Z_e ) .
\end{equation}
Finally, we will use similar techniques as in Appendices~\ref{subsec:locality_proof} and \ref{sec:proof_of_eq13} to obtain a bound on the cluster weights $f(\mathcal{C})$, which will in turn allow us to demonstrate the locality of $H_{00}(\beta)$ for any \emph{fixed} aspect ratio $a_{xy} \equiv L_x / L_y$.

\subsection{Cluster Expansion}
To employ the cluster expansion for $\hat{\mathcal{Z}}_{00}(\beta)$, our first goal is to express it as a polymer model. We immediately see that we cannot use the same set of polymers $\Gamma$ as for the $++$ case, which allow for homologically nontrivial polymers in the partition function. Instead, we define the set of polymers $\Gamma$ to contain:
\begin{enumerate}
    \item The set of connected, closed (possibly self-intersecting) loops with trivial homology, and
    \item \emph{Pairs} of connected, closed (possibly self-intersecting) loops with \emph{non-trivial} homology, such that they are disconnected from each other. 
\end{enumerate}
For brevity, we will call these ``type-1'' and ``type-2'' polymers respectively. The disconnected condition in the second case implies that both loops must be in the same homology class (i.e., they are either both contractible or both non-contractible), and therefore are together the boundary of some region $\mathcal{R}$ (note that if they were connected, then the two loops are regarded as a single type-1 polymer). The associated weights are defined the same way as before,
\begin{equation}
    \hat{w}(\gamma) = e^{-\beta \abs{\gamma}} X_{\gamma}, \qquad X_\gamma \equiv \prod_{e \in \gamma} X_e,
\end{equation}
but the interaction is modified. We define 
\begin{equation}\label{eqn:deltadef00}
    \delta(\gamma, \gamma') = \begin{cases}
        0, & \gamma, \gamma' \text{ intersect,} \\
        0, & \gamma \text{ and } \gamma' \text{ are type-2, and are disconnected and \emph{unlinked}}, \\
        1, & \text{otherwise.}
    \end{cases}
\end{equation}
We say two disconnected type-2 polymers $\gamma, \gamma'$ are \emph{linked} if, for any $\mathcal{R},\mathcal{R}'$ such that $\gamma = \partial \mathcal{R}$ and$\gamma' = \partial\mathcal{R}'$, then $\mathcal{R}$ and $\mathcal{R'}$ have overlapping plaquettes. They are \emph{unlinked} otherwise. 

To justify this choice of polymers and interactions, we must show that writing out the polymer partition function~\eqref{eqn:PolymerPartitionFunction} produces the correct deformation operator
\begin{equation}
\label{eq:app_Z00_polymer}
    \hat{\mathcal{Z}}_{00}(\beta) = \sum_{\mathcal{L} = \partial \mathcal{R}} e^{-\beta \LL} X_{\LL} = \sum_{\Gamma'\subseteq \Gamma} \qty[ \prod_{\gamma \in \Gamma'} \hat{w}(\gamma) ] \qty[ \prod_{\curly{\gamma,\gamma'} \subseteq \Gamma'} \delta(\gamma, \gamma') ].
\end{equation}
It is clear from the preceding discussion that the right-hand side produces only contractible loop configurations. Therefore, we only need to ensure that for any $\LL$ on the left-hand side, the right-hand side produces the same term with the correct coefficient (i.e., without overcounting). To this end, we make the following observations:
\begin{figure}
    \centering
    \includegraphics[width=0.5\linewidth]{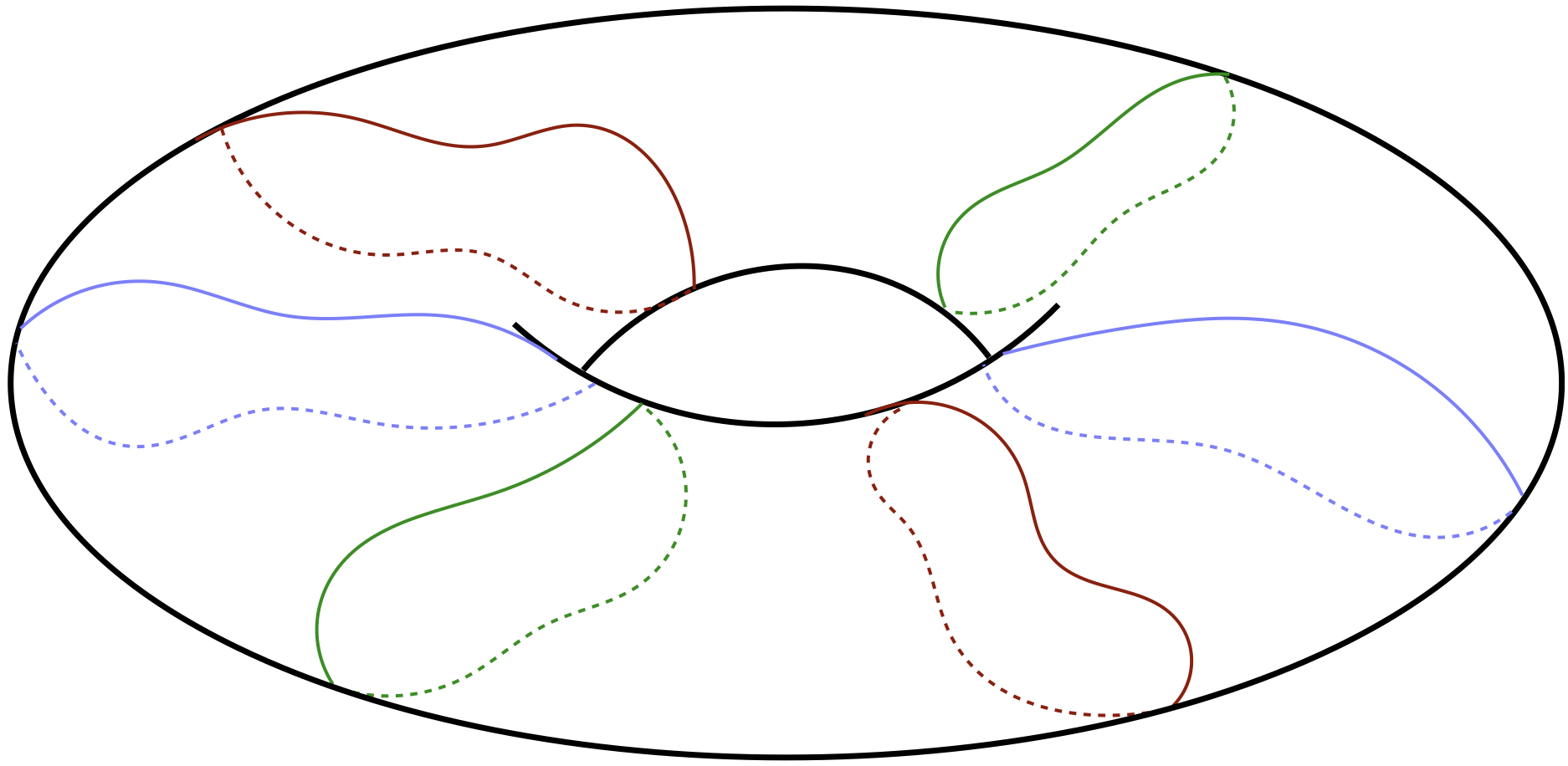}
    \caption{A loop configuration with three type-2 polymers, in maroon, blue, and green respectively. These type-2 polymers are each pairwise compatible, since they satisfy the linking criterion [see Eq.~(\ref{eqn:deltadef00})] with $\delta(\gamma, \gamma') = 1$.}
    \label{fig:Type2Polymer}
\end{figure}
\begin{enumerate}
    \item Given a loop configuration $\mathcal{L}$, there is a unique decomposition into connected loops such that all connected loops are mutually disconnected.
    \item For a given $\mathcal{L}$, if there are no non-contractible connected loops in this decomposition, then $\mathcal{L}$ is reproduced exactly by a subset $\Gamma' \subset \Gamma$ on the right-hand side containing only type-1 polymers, where each polymer corresponds to a connected loop in $\mathcal{L}$. 
    \item If the decomposition of $\mathcal{L}$ contains two non-contractible connected loops wrapping the same cycle of the torus (a single non-contractible loop is disallowed), $\mathcal{L}$ is reproduced by a subset $\Gamma'$ containing a single type-two polymer.
    \item The nontrivial case is when $\mathcal{L}$ is decomposed into four or more non-contractible connected loops, each wrapping the same cycle of the torus (Fig.~\ref{fig:Type2Polymer}). Naively, if $\mathcal{L}$ contains $2n$ such loops, there are $(2n-1)!!$ ways to pair these loops into type-2 polymers. The middle condition in the interaction~\eqref{eqn:deltadef00} ensures that only one of these pairings gives a nonzero contribution to $\hat{\mathcal{Z}}_{00}$. Specifically, a subset $\Gamma'$ containing multiple type-2 polymers contributes to $\hat{\mathcal{Z}}_{00}$ only if all type-2 polymers in $\Gamma'$ are \emph{pairwise linked}. The only way to achieve this pairwise linking is to pair the loops \emph{antipodally}; that is, if the non-contractible connected loops are labeled $1, \ldots , 2n$ in order, only the pairing
    \begin{equation}
        \qty{ \qty{1, n+1}, \qty{2, n+2}, \ldots , \qty{n, 2n} }
    \end{equation}
    into type-2 polymers leads to a nonvanishing contribution to $\hat{\mathcal{Z}}_{00}(\beta)$ (see Fig.~\ref{fig:Type2Polymer}).
\end{enumerate}
In this way, we see that the preceding definition of polymers together with the interaction~\eqref{eqn:deltadef00} avoids overcounting in the polymer partition function on the right-hand side of Eq.~\eqref{eq:app_Z00_polymer}, reproducing $\hat{\mathcal{Z}}_{00}(\beta)$ exactly.

We briefly note that the polymer model for the no-go state differs from that of the $++$ state only by the properties of non-contractible loops; the weights and interactions for type-1 polymers are identical in both partition functions. In the large-$\beta$ regime, we therefore expect that the differences between the two polymer models will be largely inconsequential in the thermodynamic limit, consistent with the observations of the preceding section and with our final conclusions. However, for purposes of constructing an exact parent Hamiltonian for any finite system size, we will exactly keep track of these exponentially small contributions from non-contractible loops.

\subsection{Modified Cluster Expansion Bound}
Since the definition of our polymer model has changed slightly, the technical details required to prove the cluster expansion bound in Eq.~\eqref{eqn:fCbound} of the main text are slightly modified. The most important qualitative modification is that, whereas Eq.~\eqref{eqn:fCbound} held at sufficiently large $\beta$ for any system size, the same bound holds in the present setting only for sufficiently large $L_x, L_y$, and at a \emph{fixed} aspect ratio $a_{xy} \equiv L_x / L_y$. While the aspect ratio can be chosen arbitrarily, it is crucial that it is held fixed as $L_x, L_y$ are taken large; as we will see, our parent Hamiltonian construction can fail if we take (for example) $L_x$ to scale as $e^{\mathcal{O}(L_y)}$, consistent with Theorem 1 of Ref.~\onlinecite{Sahay2025FunkyIsing}.

For notational simplicity, we will assume throughout $L_x \geq L_y$ without loss of generality.

\begin{lemma}[Kotecky-Preiss inequality for the polymer model $\hat{\mathcal{Z}}_{00}$]
\label{lemma:KP00}
When $\beta \geq \beta^\ast = 2 +\log 3$, and for sufficiently large system sizes $L_x, L_y$ with any fixed aspect ratio $a_{xy}$, the weight $w(\gamma) = e^{-\beta \abs{\gamma}}$ satisfies
\begin{equation}
    \sum_{\gamma' \nsim \gamma} w(\gamma') e^{|\gamma'|} \leq |\gamma|
\end{equation}
where $\gamma' \nsim \gamma$ means the two polymers are incompatible ($\delta(\gamma, \gamma') = 0$) according to the interaction~\eqref{eqn:deltadef00}. 
\end{lemma}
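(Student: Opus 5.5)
We will mirror the proof of Lemma~\ref{lemma:KP} and split the sum over incompatible polymers $\gamma'$ according to the two clauses of the interaction~\eqref{eqn:deltadef00}:
\begin{equation}
    \sum_{\gamma' \nsim \gamma} w(\gamma') e^{|\gamma'|} \leq \Sigma_{\mathrm{int}}(\gamma) + \Sigma_{\mathrm{link}}(\gamma) .
\end{equation}
Here $\Sigma_{\mathrm{int}}$ collects the $\gamma'$ (of either type) that intersect $\gamma$. The term $\Sigma_{\mathrm{link}}$, present only when $\gamma$ is type-2, collects the type-2 polymers $\gamma'$ that are disjoint from $\gamma$ but unlinked with it. We will show that each piece is at most $|\gamma|/2$, or show that the first is bounded by $c|\gamma|$ with $c<1$ and the second is exponentially small in $L_y$ and absorbed into the slack.

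\textbf{Intersecting piece.} For type-1 $\gamma'$ we reuse the counting of Lemma~\ref{lemma:KP}: there are at most $|\gamma|\,3^s$ polymers of length $s$ touching $\gamma$. This gives the same bound $\frac{e^{-4}}{1-e^{-2}}e^{-4(\beta-\beta^*)}|\gamma|$.

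For type-2 $\gamma'=\gamma'_a\cup\gamma'_b$ we have $|\gamma'|\geq 2L_y$, since each component is non-contractible. One component must touch $\gamma$, giving at most $|\gamma|\,3^{s_a}$ choices for it. The other component is an arbitrary non-contractible loop of length $s_b\geq L_y$. It can be rooted at any of $\leq 2L_xL_y$ edges, giving at most $2L_xL_y\,3^{s_b}$ choices. The extra contribution is therefore at most
\begin{equation}
|\gamma|\,2L_xL_y\Big(\sum_{s\geq L_y}(3e^{1-\beta})^s\Big)^2 \lesssim |\gamma|\,a_{xy}L_y^2\,(3e^{1-\beta})^{2L_y}.
\end{equation}
At fixed aspect ratio, the prefactor $L_xL_y=a_{xy}L_y^2$ is only polynomial, while $(3e^{1-\beta})^{2L_y}\leq e^{-2L_y}$ for $\beta\geq\beta^*$. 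Hence this term is $o(1)\cdot|\gamma|$ as $L_y\to\infty$. This is precisely where the fixed-$a_{xy}$ assumption enters: if $L_x\sim e^{\mathcal{O}(L_y)}$, the free translation of the far component would defeat the suppression.

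\textbf{Unlinked piece.} Let $\gamma$ be type-2. Any unlinked disjoint type-2 $\gamma'$ still has $|\gamma'|\geq 2L_y$, but now neither component needs to touch $\gamma$. Crudely, we count both components freely, which gives at most $(2L_xL_y)^2 \big(\sum_{s\ge L_y} 3^s e^{(1-\beta)s}\big)^2 \lesssim a_{xy}^2 L_y^4 e^{-2L_y}$. Since $|\gamma|\geq 2L_y\geq 1$, this is again at most a vanishing multiple of $|\gamma|$ for large $L_y$ at fixed $a_{xy}$.

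\textbf{Conclusion and main obstacle.} Adding the three contributions gives
\begin{equation}
\Big[\tfrac{e^{-4}}{1-e^{-2}}+o(1)\Big]|\gamma| \leq |\gamma|
\end{equation}
for $L_x,L_y$ large, using $\frac{e^{-4}}{1-e^{-2}}<1$. The main obstacle is the counting of type-2 polymers, in particular justifying the $3^s$ bound on rooted non-contractible loops and the polynomial number of placements. We expect both to follow directly from the self-avoiding-walk-type count already used in Lemma~\ref{lemma:KP}, with the placement factor bounded by the number of edges $2L_xL_y$.
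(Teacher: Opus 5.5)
Your proof is correct and is essentially the paper's argument: you bound the incompatible type-1 polymers exactly as in Lemma~\ref{lemma:KP}, and your type-2 contributions are global counts that are exponentially small in $L_y$ at fixed $a_{xy}$ and fit into the slack $\bigl(1-\tfrac{e^{-4}}{1-e^{-2}}\bigr)|\gamma|$. The one difference is that the paper does not separate intersecting from unlinked type-2 polymers; it extends the type-2 sum to all type-2 polymers, counted via $(L_x+L_y)^2\,3^{s_1+s_2}$, which makes your separate intersecting-type-2 estimate unnecessary.
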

Note that the statement of the bound is identical to that of Lemma~\ref{lemma:KP}; the only difference lies in the collection of allowed type-2 polymers and their interactions.

\begin{proof}
    Our strategy is to divide the incompatible polymers $\gamma' \nsim \gamma$ into type-1 and type-2 polymers, respectively:
    \begin{equation}
        \begin{split}
            \sum_{\gamma' \nsim \gamma} w(\gamma') e^{|\gamma'|} &= \sum_{\substack{\gamma' \nsim \gamma \\ \gamma' \text{ type-1}}} w(\gamma') e^{|\gamma'|} + \sum_{\substack{\gamma' \nsim \gamma \\ \gamma' \text{ type-2}}} w(\gamma') e^{|\gamma'|} \\
            &= \sum_{\substack{\gamma' \nsim \gamma \\ \gamma' \text{ type-1}}} w(\gamma') e^{|\gamma'|} + \sum_{\gamma' \text{ type-2}} w(\gamma') e^{|\gamma'|} ,
        \end{split}
    \end{equation}
    where in the second line, we have extended the second sum to run over all type-2 polymers (including those compatible with $\gamma$). The sum over incompatible type-1 polymers is bounded by an identical method to that of Lemma~\ref{lemma:KP}: the number of such polymers $\gamma'$ of fixed length $|\gamma'| = s$ is upper-bounded by $|\gamma| 3^s$. 
    
    To bound the sum over type-2 polymers $\gamma'$, let $s_1$ and $s_2$ denote the respective lengths of the two connected components of $\gamma'$. Each of $s_1$ and $s_2$ can be no smaller than $L_y$, since these connected components must form non-contractible loops. For a fixed pair of lengths $(s_1, s_2)$, The total number of type-2 polymers is upper-bounded by $(L_x + L_y)^2 3^{s_1 + s_2}$; the two factors of $(L_x + L_y)$ come from observing that each of the two non-contractible loops must cross either the $x$-axis or the $y$-axis, and there are $L_x$ ($L_y$) locations at which the loop can cross the $x$-axis ($y$-axis). We therefore obtain the bound
    \begin{subequations}
    \label{eq:type2_bound}
        \begin{align}
            \sum_{\gamma' \text{ type-2}} w(\gamma') e^{|\gamma'|} &\leq (L_x + L_y)^2 \sum_{s_1, s_2 \geq L_y} 3^{s_1 + s_2} e^{-\beta(s_1 + s_2)} e^{s_1 + s_2} \\
            &\leq 4L_x^2 \qty( \frac{(3e^{1-\beta})^{L_y}}{1 - 3e^{1-\beta}} )^2 \\
            &\leq \frac{4}{(1 - e^{-1})^2} (L_x e^{-L_y})^2 e^{-2(\beta - \beta^*) L_y} ,
        \end{align}
    \end{subequations}
    valid whenever $\beta > \beta^* - 1$. Combined with the bound on type-1 polymers obtained in the proof of Lemma~\ref{lemma:KP}, we obtain the result
    \begin{subequations}
    \begin{align}
        \sum_{\gamma' \nsim \gamma} w(\gamma') e^{|\gamma'|} &\leq \frac{e^{-4}}{1 - e^{-2}} e^{-4(\beta - \beta^*)} | \gamma| + \frac{4}{(1 - e^{-1})^2} (L_x e^{-L_y})^2 e^{-2(\beta - \beta^*) L_y} \\
        &\leq |\gamma| ,
    \end{align}
    \end{subequations}
    where the last inequality holds for sufficiently large $L_y$, with $L_x = a_{xy} L_y$.
\end{proof}

Using the result of Lemma~\ref{lemma:KP00}, the analogs of Lemmas~\ref{lem:rooted} and \ref{cor:absrootedbound} are obtained by identical proofs. The analog of Lemma~\ref{lem:Cbound} is obtained similar to Lemma~\ref{lemma:KP00}, and results in a bound over cluster weights $f(\mathcal{C})$:

\begin{lemma}\label{lem:Cbound00}
    Let $e$ be an arbitrary edge on the lattice $\Lambda$. When $\beta \geq \beta^\ast = 2 + \log 3$, and for sufficiently large system sizes $L_x, L_y$ with any fixed aspect ratio $a_{xy}$, there exists $\Cclus >0 $ such that 
    \begin{equation}
        \sum_{\CC \in \mathcal{S}_e} \abs{f(\CC)} \leq \Cclus (1 + \varepsilon) e^{-4\beta} ,
    \end{equation}
    where $\varepsilon$ can be chosen arbitrarily small.
\end{lemma}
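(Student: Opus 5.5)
The plan is to follow the proof of Lemma~\ref{lem:Cbound} line by line, now using the modified polymer model. First I will use the analogs of Lemmas~\ref{lem:rooted} and \ref{cor:absrootedbound}. Their proofs used only the tree-graph identity and the Kotecky-Preiss condition, and the latter now holds by Lemma~\ref{lemma:KP00} for $\beta\ge\beta^*$ and $L_y$ large at fixed $a_{xy}$. These give the absolute rooted bound $g(\gamma)\le w(\gamma)e^{|\gamma|}$ for every polymer $\gamma$, type-1 or type-2. The chain of inequalities from Lemma~\ref{lem:Cbound} then applies unchanged. A cluster in $\mathcal{S}_e$ must contain at least one polymer $\gamma_k$ whose support contains $e$. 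Re-rooting at that polymer costs the factor $n$, which is absorbed into the definition of $g$. This yields
\begin{equation}
\sum_{\CC\in\mathcal{S}_e}\abs{f(\CC)}\le\sum_{\gamma\ni e} g(\gamma)\le \sum_{\substack{\gamma\ni e\\ \text{type-1}}} w(\gamma)e^{|\gamma|}+\sum_{\substack{\gamma\ni e\\ \text{type-2}}} w(\gamma)e^{|\gamma|}.
\end{equation}

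The type-1 sum is bounded exactly as in Lemma~\ref{lem:Cbound}. There are at most $3^s$ connected loops of length $s\ge4$ through $e$, so this sum is at most $\Cclus e^{-4\beta}$.

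For the type-2 sum, I will repeat the counting of Eq.~\eqref{eq:type2_bound}. Because $e$ is fixed, one of the two components passes through $e$, so I could save a factor $L_x+L_y$; this saving is not needed. It suffices to drop the constraint $\gamma\ni e$ entirely and reuse the bound
\begin{equation}
\sum_{\gamma \text{ type-2}} w(\gamma)e^{|\gamma|}\le \frac{4}{(1-e^{-1})^2}\,(L_x e^{-L_y})^2 e^{-2(\beta-\beta^*)L_y}.
\end{equation}
I then write this as $\varepsilon\,\Cclus e^{-4\beta}$, with
\begin{equation}
\varepsilon \equiv \frac{4}{(1-e^{-1})^2\Cclus}\,a_{xy}^2L_y^2e^{-2L_y}\,e^{4\beta-2(\beta-\beta^*)L_y}.
\end{equation}

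The only genuine point, and the step I expect to be the main obstacle, is checking that $\varepsilon$ can be made arbitrarily small uniformly in $\beta\ge\beta^*$. The exponent of $e^\beta$ is $4-2L_y$, which is negative once $L_y\ge 3$. So the supremum over $\beta$ is attained at $\beta=\beta^*$, giving
\begin{equation}
\varepsilon\le \mathrm{const}\times a_{xy}^2L_y^2e^{-2L_y}e^{4\beta^*}.
\end{equation}
This tends to $0$ as $L_y\to\infty$ at fixed $a_{xy}$, so for sufficiently large systems $\varepsilon$ is below any prescribed value, which proves the lemma. The same bookkeeping makes the dependence on the aspect ratio explicit: if $L_x$ grew like $e^{L_y}$, then $\varepsilon$ would not vanish. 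This matches the aspect-ratio caveat in the statement of the lemma.
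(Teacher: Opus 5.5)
Your proposal is correct and follows essentially the same approach as the paper. It uses the same rooted-weight bound $\sum_{\gamma\ni e} g(\gamma)\le\sum_{\gamma\ni e}w(\gamma)e^{|\gamma|}$ via Lemma~\ref{lemma:KP00}, the same type-1/type-2 split, and the same reuse of Eq.~\eqref{eq:type2_bound} with the constraint $\gamma\ni e$ dropped; your added check that $\varepsilon$ is small uniformly in $\beta\ge\beta^*$ (the exponent of $e^{\beta}$ being $4-2L_y<0$) is a correct detail the paper leaves implicit.
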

\begin{proof} 
By an identical strategy as in Lemma~\ref{lem:Cbound}, we have
\begin{subequations}
\begin{align}
    \sum_{\CC \in \mathcal{S}_e} \abs{f(\CC)} &\leq \sum_{n\geq 1} \sum_{\substack{\gamma_1,\dots,\gamma_n\\ \exists \gamma_k \ni e}} \abs{\varphi(\gamma_1,\dots,\gamma_n)} \prod_{i} w(\gamma_i) \\
    &\leq \sum_{n\geq 1} n \sum_{\gamma_1 \ni e}\sum_{\gamma_2,\dots,\gamma_n} \abs{\varphi(\gamma_1,\dots,\gamma_n)} \prod_{i} w(\gamma_i) \\
    &= \sum_{\gamma_1 \ni e} g(\gamma_1)\\
    &\leq \sum_{\gamma \ni e} w(\gamma) e^{|\gamma|} \\
    &\leq \term{ \sum_{s \geq 4, \mathrm{even}} 3^s e^{-\beta s} e^{s} + \sum_{\gamma \text{ type-2}} w(\gamma) e^{|\gamma|}}\\
    &\leq \frac{(3e^{1-\beta})^4}{1 - (3e^{1-\beta})^2} + \frac{4}{(1 - e^{-1})^2} (L_x e^{-L_y})^2 e^{-2(\beta - \beta^*) L_y} \\
    &\leq \Cclus (1 + \varepsilon) e^{-4\beta} ,
\end{align}
\end{subequations}
where we have once again broken the sum over polymers into type-1 and type-2 polymers, using the bound in Eq.~\eqref{eq:type2_bound}, and we have identified $\Cclus = [(3e)^4 / (1 - e^{-2})] \approx 5.115 \times 10^3$. The arbitrarily small parameter $\varepsilon$ in the last expression is used to bound the second term in the previous line; for any fixed $\varepsilon > 0$, $L_y$ and $L_x = a_{xy} L_y$ can be taken sufficiently large that the inequality is satisfied.

\end{proof}

\subsection{Proving that \texorpdfstring{$H_{00}(\beta)$}{H00beta} is local and gapped}
\label{subsec:H00_locality}
Finally, we use the result of Lemma~\ref{lem:Cbound00} to prove that the parent Hamiltonian $H_{00}(\beta)$ is local in the sense of Eqs.~\eqref{eq:hastsings_1}, \eqref{eq:hastings_2}, and \eqref{eq:hastings_3}  at sufficiently large $\beta$, from which Theorem 2 implies that $H_{00}(\beta)$ exhibits a nonzero spectral gap. Our approach mirrors that of the `++' case in Appendix~\ref{app:locality_proof}: once again, we Taylor expand the exponential in Eq.~\eqref{eq:H00_app} (justified by Lemma~\ref{lem:Cbound00}) to write
\begin{equation}
\label{eq:H00_app_2}
    H_{00}(\beta) = H_0 + \sum_e \sum_{n \geq 1} \frac{(-2)^n}{n!} \sum_{\mathcal{C}_1 \ldots \mathcal{C}_n \in \mathcal{S}_e} \prod_{j = 1}^n \qty[ f(\mathcal{C}_j) X_{\mathcal{C}_j} ] .
\end{equation}
To divide the sum into local terms, it is convenient to distinguish between clusters $\mathcal{C}$ containing entirely type-1 polymers and clusters which contain some type-2 polymers. We call a cluster $\mathcal{C} = (\gamma_1, \ldots , \gamma_n)$ type-1 if all of its constituent polymers are all type-1; otherwise, we call $\mathcal{C}$ type-2. We then group terms in Eq.~\eqref{eq:H00_app_2} based on whether they contain type-2 clusters or not. Specifically, we first define
\begin{equation}
    V_{r, v} \equiv \sum_{e = (v, v + \hat{x}), (v, v + \hat{y})} \sum_{n \geq 1} \frac{(-2)^n}{n!} \sum_{\substack{\mathcal{C}_1 \ldots \mathcal{C}_n \in \mathcal{S}_e \\ \mathcal{C}_j \text{ type 1 } \forall j \\ \max_j \norm{\mathcal{C}_j} = r + 1}} \prod_{j = 1}^n [f(\mathcal{C}_j) X_{\mathcal{C}_j}] ,
\end{equation}
for all vertices $v$ and all odd $r \geq 3$, in perfect analogy to Eq.~\eqref{eq:V_rv_app}, save for the restriction to exclusively type-1 clusters. Recall that the restriction to clusters with maximum length $\max_j \norm{\CC_j} = r+1$ implies that $V_{r,v}$ is contained within the $r \times r$ square $A_v(r) \in S(r)$ centered on vertex $v$. To capture terms with type-2 clusters, we also define the (highly nonlocal) operators
\begin{equation}
    V_v' \equiv \sum_{e = (v, v + \hat{x}), (v, v + \hat{y})} \sum_{n \geq 1} \frac{(-2)^n}{n!} \sum_{\substack{\mathcal{C}_1 \ldots \mathcal{C}_n \in \mathcal{S}_e \\ \exists \text{ type-2 } \mathcal{C}_j}} \prod_{j = 1}^n [f(\mathcal{C}_j) X_{\mathcal{C}_j}] .
\end{equation}
Combining these two terms, we obtain a partition of $H_{00}(\beta)$ into terms as follows:
\begin{equation}
    H_{00}(\beta) = \sum_{v} \sum_{r \geq 3, \text{ odd}} V_{r, v} + \sum_v V'_v .
\end{equation}

By an identical calculation to that of Eqs.~\eqref{eq:vbound_app} and \eqref{eq:vbound_app_2}, we immediately obtain the same bound on the spectral norm of $V_{r,v}$ in the present setting:
\begin{equation}
    \norm{V_{r,v}} \leq J(\beta) e^{-r}, \quad J(\beta) \equiv (1 + \varepsilon) \tilde{J} e^{-4\beta} ,
\end{equation}
where $\tilde{J}$ is given explicitly in Eq.~\eqref{eq:vbound_app_final}, and the parameter $\varepsilon$ can be chosen to be arbitrarily small (see Lemma~\ref{lem:Cbound00}). Therefore, by Theorem~\ref{theorem:BHM}, the \emph{modified} Hamiltonian
\begin{equation}
    \widetilde{H}_{00}(\beta) \equiv \sum_{v} \sum_{r \geq 3, \text{ odd}} V_{r, v}
\end{equation}
is gapped and nondegenerate at sufficiently large $\beta$ and sufficiently large system sizes. On the other hand, we will now show that the remainder term $\sum_v V'_v$ exhibits an exponentially small spectral norm, and therefore the exact parent Hamiltonian $H_{00}(\beta)$ is also gapped and nondegnerate in this regime. Explicitly, by a calculation similar to Eq.~\eqref{eq:vbound_app}, we have
\begin{subequations}
    \begin{align}
        \norm{V_v'} &\leq \sum_{e = (v, v + \hat{x}), (v, v + \hat{y})} \sum_{n \geq 1} \frac{2^n}{n!} \sum_{\substack{\CC_1 \ldots \CC_n \in \mathcal{S}_e \\ \exists \text{ type-2 } \CC_j}} \prod_{j = 1}^n \abs{f(\CC_j)} \\
        &\leq \sum_{e = (v, v + \hat{x}), (v, v + \hat{y})} \sum_{n \geq 1} \frac{2^n}{(n-1)!} \qty( \sum_{\substack{\CC \in \mathcal{S}_e \\ \CC \text{ type-2}}} \abs{f(\CC)} ) \qty( \sum_{\CC \in \mathcal{S}_e} \abs{f(\CC)} )^{n-1}
    \end{align}
\end{subequations}
Analogously to Eq.~\eqref{eq:vbound_app_2}, we bound the last sum over clusters using the result of Lemma~\ref{lem:Cbound00}, and we bound the sum over type-2 clusters by noting that $f(\CC) \leq e^{-\beta L_y} f_{\beta / 2}(\CC)$, since each type-2 cluster $\CC$ has cluster length $\norm{\CC} \geq 2 L_y$. Following the same steps as in Eq.~\eqref{eq:vbound_app_2}, we obtain the bound
\begin{equation}
    \norm{V'_v} \leq C_{\text{global}} e^{-\beta L_y}, \quad C_{\text{global}} \equiv 4 \Cclus e^{-2 \beta^*} \exp \qty[ 2 \Cclus e^{-4 \beta^*} ].
\end{equation}
Correspondingly, $\norm{\sum_v V_v'} \leq L_x L_y C_{\text{global}} e^{-\beta L_y}$ is exponentially small for sufficiently large $L_x, L_y$, so long as the aspect ratio $a_{xy}$ is held fixed. By Weyl's inequality, the spectral gap of $H_{00}(\beta)$ cannot differ from that of $\widetilde{H}_{00}(\beta)$ by more than twice the spectral norm of $\sum_v V_v'$, and therefore $H_{00}(\beta)$ also exhibits a nonzero spectral gap in the thermodynamic limit for sufficiently large $\beta$.

\section{Equivalent Definitions of locality}
\label{app:locality_equivalence}
In this Appendix, we show that the exponential-in-diameter notion of locality [as defined in Eq.~(\ref{eq:locality_bound}) of the main text] and the notion of locality used to prove the stability of gap theorem in Ref.~\onlinecite{BravyiHastingsMichalakis2010} [Eq.~(\ref{eq:hastings_3})] are equivalent. For completeness, we recap the definitions here. Note that the \emph{exponential-in-volume} definition of locality [Eq.~(\ref{eq:expvol_locality})] is a strictly stronger criterion than the definitions discussed in this section.

\begin{defn}[Exponential-in-diameter locality]
    A Hamiltonian $H = \sum_{\mathcal{R} \subseteq \Lambda} h_{\mathcal{R}}$, where each local term $h_{\mathcal{R}}$ is supported on subregion $\mathcal{R}$, is considered \emph{exponential-in-diameter local} if there exists finite constants $\mu,s> 0$ such that 
    \begin{equation}
        \sup_{v \in \Lambda} \sum_{\mathcal{R} \ni v} \norm{h_{\mathcal{R}}} \abs{\mathcal{R}} e^{\mu \diam(\mathcal{R})} \leq s.
    \end{equation}
\end{defn}

\begin{defn}[Bravyi-Hastings-Michalakis (BHM) locality \cite{BravyiHastingsMichalakis2010}]
    We write the Hamiltonian as $H = \sum_{r \geq 1} \sum_{A \in S(r)} V_{r,A}$, where $S(r)$ is the set of all $r\times r$ squares on the lattice, and $V_{r,A}$ has support on the $r\times r$ square $A$. Such a Hamiltonian is \emph{BHM-local} if there exists constants $J,\mu' > 0 $ such that
\begin{equation}
    \sup_{A\in S(r)} \norm{V_{r,A}} \leq J e^{-\mu' r} .
\end{equation}
We will call $J$ the strength of the Hamiltonian and $\mu'$ the decay exponent. 
\end{defn}

\subsection{Definition 1 \texorpdfstring{$\implies$}{implies} Definition 2}
Given an exponential-in-diameter local Hamiltonian $H = \sum_{\mathcal{R} \subseteq \Lambda} h_{\mathcal{R}}$, let us define the local terms
\begin{equation}
    V_{r, v} \equiv \sum_{\substack{\mathcal{R} \ni v \\ \diam(\mathcal{R}) = r}} \frac{1}{\abs{\mathcal{R}}} h_{\mathcal{R}} .
\end{equation}
Note that each $h_{\mathcal{R}}$ contributes to $\abs{\mathcal{R}}$ distinct terms $V_{r,v}$, and therefore $H = \sum_{v \in \Lambda} \sum_{r \geq 1} V_{r,v}$. Moreover, $V_{r,v}$ is supported on the $(2r-1) \times (2r-1)$ square centerd at $v$. Finally, its norm is upper-bounded as
\begin{subequations}
    \begin{align}
        \norm{V_{r,v}} &\leq \sum_{\substack{\mathcal{R} \ni v \\ \diam(\mathcal{R}) = r}} \frac{1}{|\mathcal{R}|} \norm{h_{\mathcal{R}}} \\
        &= e^{-\mu r} \sum_{\substack{\mathcal{R} \ni v \\ \diam(\mathcal{R}) = r}} \frac{|\mathcal{R}|}{|\mathcal{R}|^2} \norm{h_{\mathcal{R}}} e^{\mu \diam(\mathcal{R})} \\
        &\leq s e^{-\mu r} \\
        &= s e^{-\mu / 2} e^{- \frac{\mu}{2}(2r-1)} .
    \end{align}
\end{subequations}
We therefore find that exponential-in-diameter locality implies BHM locality, with strength $J = s e^{-\mu/2}$ and decay exponent $\mu' = \mu/2$.

\subsection{Definition 2 \texorpdfstring{$\implies$}{implies} Definition 1}
Given a BHM-local Hamiltonian $H = \sum_{r \geq 1} \sum_{A \in S(r)} V_{r,A}$, we simply identify $h_{\mathcal{R}} = V_{r, A}$ with $\mathcal{R} = A$. Then, letting $\mu$ be an unspecified constant for now, we have
\begin{subequations}
\begin{align}
    \sum_{\mathcal{R} \ni v} \norm{h_{\mathcal{R}}} \abs{\mathcal{R}} e^{\mu \diam(\mathcal{R})} &= \sum_{r\geq 1} \sum_{\substack{A \in S(r) \\ A \ni v}} \norm{V_{r,A}} \abs{A} e^{\mu \diam(A)} \\
    &= \sum_{r\geq 1} \sum_{\substack{A \in S(r) \\ A \ni v}} \norm{V_{r,A}} r^2 e^{\mu\sqrt{2} r} \\
    &\leq \sum_{r\geq 1} r^2 J e^{-\mu' r} r^2 e^{\mu\sqrt{2} r} ,
\end{align}
\end{subequations}
where in the final line, we've noted that there are $r^2$ squares $A \in S(r)$ containing a given vertex $v$. The final sum converges as long as $\mu < \mu' / \sqrt{2}$. We therefore find that BHM locality implies exponential-in-diameter locality, for some decay constant $\mu < \mu' / \sqrt{2}$ and strength $s \propto J$ given explicitly by the above sum.

\section{Exponential-in-volume locality bounds}
\label{sec:exp_in_vol}
In this Appendix, we show that our parent Hamiltonian $H(\beta)$ not only satisfies the exponential-in-diameter and BHM locality conditions described in Appendix~\ref{app:locality_equivalence}, but also satisfies a stronger \emph{exponential-in-volume} locality condition:
\begin{defn}[Exponential-in-volume locality]
    A Hamiltonian $H = \sum_{\mathcal{R} \subseteq \Lambda} h_{\mathcal{R}}$, where each local term $h_{\mathcal{R}}$ is supported on the \emph{connected} subregion $\mathcal{R}$, is considered \emph{exponential-in-volume} local if there exists finite constants $\mu, s > 0$ such that
    \begin{equation}
    \label{eq:expvol_locality}
        \sup_v \sum_{\mathcal{R} \ni v} \norm{h_{\mathcal{R}}} |\mathcal{R}| e^{\mu |\mathcal{R}|} \leq s .
    \end{equation}
    Note that this condition effectively requires $\norm{h_{\mathcal{R}}}$ to decay exponentially with the \emph{volume} $|\mathcal{R}|$ of the subregion $\mathcal{R}$. This is a stronger condition than the exponential-in-diameter decay required of the preceding section.
\end{defn}
From the explicit form of our parent Hamiltonian $H(\beta)$ [see Eq.~\eqref{eq:sm_parent_ham}], it is quite natural that $H(\beta)$ satisfies such a bound: indeed, the nontrivial component of $H(\beta)$ consists of a sum of loop operators exponentially suppressed in their length $\mathcal{L}$, which is generally proportional to both the loop's diameter and the volume of its support. Furthermore, using the results of Ref.~\onlinecite{Yin_Lucas_low-density_2025, DeRoeck_Khemani_low-density_2025}, the existence of a gapped exponential-in-volume local parent Hamiltonian $H(\beta)$ for $\ket{\psi(\beta)}$ implies that $\ket{\psi(\beta)}$ can be prepared (up to an error exponentially small in the system size) by finite-time unitary evolution using an exponential-in-volume local quasiadiabatic generator. 

To prove that $H(\beta) = H_0 + V$ satisfies exponential-in-volume locality, it is sufficient to focus on the nontrivial component $V$, which we once again expand in Taylor series:
\begin{equation}
    V \equiv \sum_e \qty( e^{-2 \sum_{\CC \in \mathcal{S}_e} f(\CC) X_{\CC}} - 1 ) = \sum_e \sum_{n \geq 1} \frac{(-2)^n}{n!} \sum_{\CC_1 \ldots \CC_n \in \mathcal{S}_e} \prod_{j = 1}^n [f(\CC_j) X_{\CC_j}] .
\end{equation}
Our first goal is to group $V$ into local terms. Towards this end, we associate a connected region $\mathcal{R}(\CC_1 , \ldots , \CC_n)$ to each collection of connected clusters $(\CC_1, \ldots , \CC_n)$ with $\CC_j \equiv (\gamma^{(j)}_1 , \ldots , \gamma^{(j)}_{m_j})$ as follows:
\begin{equation}
    \mathcal{R}(\CC_1 , \ldots , \CC_n) \equiv \bigcup_{j} \qty(  \gamma^{(j)}_1 \cup \ldots \cup \gamma^{(j)}_{m_j} ) .
\end{equation}
Such regions $\mathcal{R}$ appearing in $V$ are guaranteed to be connected, since all of the clusters $\CC_j$ comprising a given region are required to contain the same edge $e$. Note that the same region will generally appear several times within $V$; by treating them as separate in computing the sum~\eqref{eq:expvol_locality}, we are strictly overestimating the left-hand side, since we are ignoring potential sign cancellations. We therefore have the inequality
\begin{equation}
    \begin{split}
        \sup_e \sum_{\mathcal{R} \ni e} \norm{V_{\mathcal{R}}} |\mathcal{R}| e^{\mu |\mathcal{R}|} &\leq \sup_e \sum_{e'} \sum_{n \geq 1} \sum_{\substack{\mathcal{C}_1 \ldots \mathcal{C}_n \in \mathcal{S}_{e'} \\ \mathcal{R}(\mathcal{C}_1 , \ldots , \mathcal{C}_n) \ni e}} \frac{2^n}{n!} \abs{f(\CC_1)} \ldots \abs{f(\CC_n)} \abs{\mathcal{R}(\mathcal{C}_1 , \ldots , \mathcal{C}_n)} e^{\mu \abs{\mathcal{R}(\mathcal{C}_1 , \ldots , \mathcal{C}_n)}} 
    \end{split}
\end{equation}
Due to translation invariance, the supremum over $e$ can be discarded. In fact, we can simplify the sum by \emph{averaging} over all $e$ as follows:
\begin{subequations}
    \begin{align}
        \sum_{e'} \sum_{n \geq 1} \sum_{\substack{\mathcal{C}_1 \ldots \mathcal{C}_n \in \mathcal{S}_{e'} \\ \mathcal{R}(\mathcal{C}_1 , \ldots , \mathcal{C}_n) \ni e}} \qty(\cdots) &= \frac{1}{N_e} \sum_{e,e'} \sum_{n \geq 1} \sum_{\substack{\mathcal{C}_1 \ldots \mathcal{C}_n \in \mathcal{S}_{e'} \\ \mathcal{R}(\mathcal{C}_1 , \ldots , \mathcal{C}_n) \ni e}} (\cdots) \\
        &= \frac{1}{N_e} \sum_{e'} \sum_{n \geq 1} \sum_{\mathcal{C}_1 \ldots \mathcal{C}_n \in \mathcal{S}_{e'}} \abs{\mathcal{R}(\CC_1 , \ldots , \CC_n)} (\cdots) \\
        &= \sum_{n \geq 1} \sum_{\mathcal{C}_1 \ldots \mathcal{C}_n \in \mathcal{S}_{e}} \abs{\mathcal{R}(\CC_1 , \ldots , \CC_n)} (\cdots) ,
    \end{align}
\end{subequations}
where in the second line we've performed the sum over $e$ by noting that there are precisely $\abs{\mathcal{R}(\CC_1 , \ldots , \CC_n)}$ nonvanishing terms in the sum, each with the same value; and in the third line we've noted that every term in the sum over $e'$ yields the same value, allowing us to replace the sum over $e'$ with a fixed edge $e$. We therefore have the inequality
\begin{subequations}
    \begin{align}
        \sum_{\mathcal{R} \ni e} \norm{V_{\mathcal{R}}} |\mathcal{R}| e^{\mu |\mathcal{R}|} &\leq \sum_{n \geq 1} \sum_{\CC_1 \ldots \CC_n \in \mathcal{S}_e} \frac{2^n}{n!} \abs{f(\CC_1)} \ldots \abs{f(\CC_n)} \abs{ \mathcal{R}(\CC_1 , \ldots , \CC_n) }^2 e^{\mu \abs{\mathcal{R}(\CC_1 , \ldots , \CC_n)}} \\
        &\leq \sum_{n \geq 1} \sum_{\CC_1 \ldots \CC_n \in \mathcal{S}_e} \frac{2^n}{n!} \abs{f(\CC_1)} \ldots \abs{f(\CC_n)}  \Big[ \norm{\CC_1} + \ldots + \norm{\CC_n} \Big]^2 e^{\mu \big[\norm{\CC_1} + \ldots + \norm{\CC_n} \big] } \\
        &\leq \sum_{n \geq 1} \sum_{\CC_1 \ldots \CC_n \in \mathcal{S}_e} \frac{2^n}{n!} \abs{f(\CC_1)} \ldots \abs{f(\CC_n)} e^{\big[ \norm{\CC_1} + \ldots + \norm{\CC_n} \big]} e^{\mu \big[ \norm{\CC_1} + \ldots + \norm{\CC_n} \big]} \\
        &= \exp \qty{ 2 \sum_{\CC \in \mathcal{S}_e} \abs{f(\CC)} e^{(\mu + 1) \norm{\CC}} } - 1,
    \end{align}
\end{subequations}
recalling that $\norm{\CC} = \sum_{\gamma \in \CC} \abs{\gamma}$ is the total length of all polymers in the cluster $\CC$. In the third line we have used the inequality $x^2 \leq e^{x}$ for all $x > 0$, applied to $x = \norm{\CC_1} + \ldots + \norm{\CC_n}$.

As in the proofs of Sec.~\ref{subsec:locality_proof}, we bound the sum over clusters by writing $f(\CC) = e^{-\beta \norm{\CC} / 2} f_{\beta / 2}(\CC)$, where $f_{\beta / 2}(\CC)$ is the same function evaluated at deformation strength $\beta / 2$. As long as $\beta > 2 (\mu + 1)$, we then have
\begin{equation}
    \begin{split}
        \sum_{\mathcal{R} \ni e} \norm{V_{\mathcal{R}}} |\mathcal{R}| e^{\mu |\mathcal{R}|} &\leq \exp \qty{ 2 \sum_{\CC \in \mathcal{S}_e} \abs{f_{\beta / 2}(\CC)} e^{(\mu + 1 - \beta/2) \norm{\CC}} } - 1 \\
        &\leq \exp \qty{ 2 \sum_{\CC \in \mathcal{S}_e} \abs{f_{\beta / 2}(\CC)} } - 1 \\
        &\leq \exp \qty{ 2 \Cclus e^{-2\beta} } - 1 ,
    \end{split}
\end{equation}
where in the final line, we have used the bound in Eq.~\eqref{eqn:fCbound} of the main text (proven in Lemma~\ref{lem:Cbound}), valid for $\beta > 2\beta^* = 2(2 + \log 3)$. Note that the final expression can be made arbitrarily small at large $\beta$.

Altogether, we find that for \emph{any} choice of the constants $\mu, s$, the perturbation $V$ satisfies the exponential-in-volume locality criterion in Eq.~\eqref{eq:expvol_locality} at sufficiently large values of $\beta$. A similar locality bound can be immediately obtained for the full parent Hamiltonian $H(\beta) = H_0 + V$ by adding $e^{\mu}$ to the value of $s$, to account for the contribution from the strictly local term in $H_0$.

\section{The Strongly Pauli-\texorpdfstring{$Z$}{Z} Decohered Toric Code is Short-Range Entangled}
In this section, we outline a rigorous proof that the 2D toric code under sufficiently large Pauli-$Z$ decoherence strength is separable as a convex sum of trivial pure states. To do so, we exploit the decomposition of the decohered toric code into pure states given in Refs.~\onlinecite{ChenGrover_separability_2024,WangSongMengGrover_analog_2025}, and utilize our construction to construct gapped parent Hamiltonians for these states. 

\subsection{Review: Separability Transition of the Decohered Toric Code}
For completeness, we first review the main result of Ref.~\onlinecite{ChenGrover_separability_2024}, which constructs a particular pure-state decomposition for the toric code subjected to a single species of Pauli errors. Readers who are already familiar with the result of Ref.~\onlinecite{ChenGrover_separability_2024} can skip to the next section.

We consider the toric code subjected to incoherent Pauli-$Z$ decoherence, such that the local error channel on edge $e$ is given by 
\begin{equation}
    \mathcal{E}_e(\rho) = (1-p) \rho + p Z_e \rho Z_e.
\end{equation}
Applying the same channel independently to all edges, we have 
\begin{equation}
    \begin{split}
        \rho_p \equiv \mathcal{E}(\dyad{\TC}) &= \sum_E (1-p)^{N_e - \abs{E}} p^{\abs{E}} Z_E \dyad{\TC} Z_E \\
        &= (1-p)^{N_e} \sum_E \term{\frac{p}{1-p}}^{\abs{E}}  Z_E \dyad{\TC} Z_E,
    \end{split}
\end{equation}
where $\mathcal{E} \equiv \prod_e \mathcal{E}_e$, and $E$ is the `error configuration', i.e., the collection of edges which were acted on nontrivially by the error, so that $Z_E \equiv \prod_{e\in E} Z_e$ is the corresponding error.

To simplify the decohered state, first note that $Z_E \ket{\TC} = Z_{E'} \ket{\TC}$ whenever $Z_E$ and $Z_{E'}$ differ by vertex stabilizers $A_v = \prod_{e \ni v} Z_e$. We can therefore sort errors into equivalence classes, denoted $[E]$ with representative $E$, and reorganize the sum over errors as a sum over equivalence classes:
\begin{equation}
    \begin{split}
        \rho_p &\propto \sum_{[E]} \qty[ \sum_{E' \sim E} \qty( \frac{p}{1-p} )^{|E'|}  ] Z_E \dyad{\TC} Z_E \\
        &= \sum_{[E]} \mathcal{Z}_p(E) \, Z_E \dyad{\TC} Z_E ,
    \end{split}
\end{equation}
where $E' \sim E$ means that $Z_E$ and $Z_{E'}$ differ by products of $A_v$ stabilizers, and the relative probabilities $\mathcal{Z}_p(E)$ of each equivalence class are proportional to the partition functions of a bond-disordered Ising model \cite{dennis_topological_2002}:
\begin{equation}
    \begin{split}
        \mathcal{Z}_p(E) &\equiv \sum_{E' \sim E} \qty( \frac{p}{1-p} )^{|E'|} \\
        &\propto \mathcal{Z}^{\text{Ising}}_K(\qty{x_e}) \equiv \sum_{\qty{s_v = \pm 1}} \exp \qty{ K \sum_{e = \expval{v v'}} x_e s_v s_{v'} } ,
    \end{split}
\end{equation}
with $K = - \frac{1}{2} \log[p / (1 - p)]$, and $x_e = -1$ ($x_e = +1$) whenever $e \in E$ ($e \not\in E$). Note that $Z_p(E)$ is independent of the choice of representative $E$; in the language of the Ising model, the partition function depends only on the locations of frustrated plaquettes and the homology of the disorder lines connecting them through the dual lattice. It can be shown from the above representation of $\rho_p$ that there is a sharp phase transition at $p = p_c \approx 0.109$ in the behavior of an optimal decoder which attempts to recover $\ket{\TC}$ from $\rho_p$, whose critical phenomena is described by the random-bond Ising model on the Nishimori line~\cite{dennis_topological_2002,fan_diagnostics_2024}. 

In Ref.~\onlinecite{ChenGrover_separability_2024}, the authors argued that the phase transition in optimal decoding coincides with a \emph{separability} transition: for $p > p_c$, it becomes possible to express $\rho_p$ as a convex sum of short-range entangled (SRE) states. To demonstrate this result, let us first  consider the initial state $\ket{\TC} = \ket{\TC_{++}}$. In this case, $Z_E \ket{\TC_{++}}$ is orthogonal to $Z_{E'} \ket{\TC_{++}}$ whenever $E \nsim E'$. This allows us to trivially take the square root of $\rho_p$:
\begin{equation}
    \sqrt{\rho_p} \propto \sum_{[E]} \sqrt{\mathcal{Z}_p(E)} \, Z_E \dyad{\TC_{++}} Z_E .
\end{equation}
Following Ref.~\onlinecite{ChenGrover_separability_2024}, we write $\rho_p = \sqrt{\rho_p} \sqrt{\rho_p}$ and insert a resolution of identity in the Pauli-$Z$ basis:
\begin{equation}
    \begin{split}
        \rho_p &\propto \sum_{\qty{z_e = \pm 1}} \sqrt{\rho_p} \dyad{\qty{z_e}} \sqrt{\rho_p} \\
        &= \sum_{\mathcal{L}} X_{\mathcal{L}} \qty(\sqrt{\rho_p} \dyad{0} \sqrt{\rho_p} ) X_{\mathcal{L}} ,
    \end{split}
\end{equation}
Where we've noted that the only Pauli-$Z$ basis states having nonzero overlap with $\rho_p$ are those satisfying $A_v = +1$ (i.e., closed loop states), and therefore can be written as $X_{\mathcal{L}} \ket{0}$ for some loop operator $X_{\mathcal{L}}$; additionally, note that $X_{\mathcal{L}}$ commutes with $\rho_p$ and $\sqrt{\rho_p}$. We can further simplify the pure state in the parenthesis as follows:
\begin{subequations}
    \begin{align}
        \sqrt{\rho_p} \ket{0} &= \sum_{[E]} \sqrt{\mathcal{Z}_p(E)} \, Z_E \dyad{\TC_{++}} Z_E \ket{0} \\
        &\propto \sum_{[E]} \sqrt{\mathcal{Z}_p(E)} \, Z_E \qty( \prod_v (1 + A_v) \prod_p(1 + B_p) \prod_{\mu = 1, 2}( 1 + \overline{X}_{\mu} ) ) Z_E \ket{0} \\
        &= \sum_{[E]} \sqrt{\mathcal{Z}_p(E)} \, Z_E \qty( \prod_p(1 + B_p) \prod_{\mu = 1, 2}( 1 + \overline{X}_{\mu} ) ) Z_E \ket{0} \\
        &\propto \sum_{[E]} \sqrt{\mathcal{Z}_p(E)} \, Z_E \qty( \prod_p(1 + B_p) \prod_{\mu = 1, 2}( 1 + \overline{X}_{\mu} ) ) Z_E \sum_{E'} Z_{E'} \ket{+} ,
    \end{align}
\end{subequations}
where in the second line, $\overline{X}_{\mu} \equiv X_{\mathcal{L}_{\mu}}$ are a basis of two logical-$X$ operators, given by Wilson loop operators about two representative non-contractible loops $\mathcal{L}_{\mu}$ of the torus. In the final line, $\ket{+}$ is the simultaneous +1 eigenstate of each $X_e$, and the projector $Z_E \qty( \cdots ) Z_E$ annihilates the state to the right unless $E$ and $E'$ are homologous. Finally using that $\mathcal{Z}_p(E)$ is independent of the chosen representative, we obtain the simple expression
\begin{equation}
    \begin{split}
        \sqrt{\rho_p} \ket{0} &\propto \sum_{E} \sqrt{\mathcal{Z}_p(E)} \, Z_E \ket{+} \\
        &\propto \sum_{\qty{x_e = \pm 1}} \sqrt{\mathcal{Z}_K(\qty{x_e})} \ket{\qty{x_e}}_x ,
    \end{split}
\end{equation}
where $\ket{\qty{x_e}}_x$ are the simultaneous Pauli-$X$ eigenstates. In summary, we find that $\rho_p$ can be written as a convex sum of pure states as follows:
\begin{equation}
\label{eq:phi_p}
    \rho_p \propto \sum_{\mathcal{L}} X_{\mathcal{L}} \dyad{\phi(p)} X_{\mathcal{L}}, \quad \ket{\phi(p)} \equiv \sqrt{\rho_p} \ket{0} \propto \sum_{\qty{x_e = \pm 1}} \sqrt{\mathcal{Z}_K(\qty{x_e})} \ket{\qty{x_e}}_x. 
\end{equation}

The authors of Ref.~\onlinecite{ChenGrover_separability_2024} argued that the state $\ket{\phi(p)}$ (and thereby each state $X_{\mathcal{L}} \ket{\phi(p)}$, which differs from $\ket{\phi(p)}$ by the finite-depth unitary $X_{\mathcal{L}}$) is short-range entangled from the following properties:
\begin{enumerate}
    \item One-form symmetry: for \emph{contractible} loops $\tilde{\mathcal{L}}$ in the dual lattice, $Z_{\tilde{\mathcal{L}}} \ket{\phi(p)} = \ket{\phi(p)}$. This follows by noting that $Z_{\tilde{\mathcal{L}}}$ commutes with $\sqrt{\rho_p}$, so that $Z_{\tilde{\mathcal{L}}} \sqrt{\rho_p} \ket{0} = \sqrt{\rho_p} \ket{0}$. Alternatively, $Z_{\tilde{\mathcal{L}}}$ ads a contractible closed loop to the disorder realization $\qty{x_e}$, leaving the partition functions $\mathcal{Z}_K(\qty{x_e})$ invariant.

    \item Condensed $m$ anyons: if we instead consider an open string $\tilde{P}$ through the dual lattice, we find
    \begin{equation}
        \frac{\bra{\phi(p)} Z_{\tilde{P}} \ket{\phi(p)}}{\bra{\phi(p)} \ket{\phi(p)}} = \sum_E (1-p)^{N_e - |E|} p^{|E|} \sqrt{\frac{Z_p(E \oplus \tilde{P})}{Z_p(E)}} .
    \end{equation}
    This quantity is a two-point disorder parameter correlator in the Nishimori random-bond Ising model; it decays exponentially for $p < p_c$ and is long-range ordered for $p > p_c$. We therefore find that $\ket{\phi(p)}$ exhibits condensed $m$ anyons for $p > p_c$. 
\end{enumerate}
Subsequently, Ref.~\onlinecite{WangSongMengGrover_analog_2025} demonstrated numerically that $\ket{\phi(p)}$ exhibits zero topological entanglement entropy. It follows from these observations that $\ket{\phi(p)}$ realizes a topologically trivial confined phase. 

However, it is also interesting to note that $\ket{\phi(p)}$ exhibits \emph{perimeter-law} Wilson loop correlation functions; indeed, since $X_{\mathcal{L}}$ commutes with $\sqrt{\rho_p}$, we have
\begin{equation}
    \frac{\bra{\phi(p)} X_{\mathcal{L}} \ket{\phi(p)}}{\bra{\phi(p)} \ket{\phi(p)}} = \frac{\bra{0} X_{\mathcal{L}} \rho_p \ket{0}}{\bra{0} \rho_p \ket{0}} = \sum_E (1 -p)^{N_e - |E|} p^{|E|} \frac{\bra{0} X_{\mathcal{L}} Z_E \ket{\TC} \bra{\TC} \ket{0}}{\bra{0} \ket{\TC} \bra{\TC} \ket{0}} = (1 - 2p)^{|\mathcal{L}|} .
\end{equation}
In other words, the state $\ket{\phi(p)}$ is qualitatively similar to the deformed toric code state studied in the main text. Given the results of Ref.~\onlinecite{Sahay2025FunkyIsing}, one may naturally wonder whether these states are indeed short-range entangled (SRE) or not. In the following section, we will demonstrate explicitly that $\ket{\phi(p)}$ is SRE at sufficiently large $p$ by constructing a local gapped parent Hamiltonian for which $\ket{\phi(p)}$ is the unique ground state.

\subsection{Constructing a Parent Hamiltonian for \texorpdfstring{$\ket{\phi(p)}$}{phi_p}}
To construct a parent Hamiltonian for $\ket{\phi(p)}$, we note from Eq.~\eqref{eq:phi_p} that it can be expressed in terms of the deformation operator $\hat{\mathcal{Z}}(\beta)$ defined for the deformed toric code [see Eq.~\eqref{eq:operator_partition_fns}, as well as Eq.~\eqref{eq:Vdef1} of the main text]:
\begin{equation}
    \ket{\phi(p)} \propto \sum_{\qty{x_e = \pm 1}} \sqrt{ \hat{\mathcal{Z}}(\beta) } \ket{\qty{x_e}}_x \propto \sqrt{\hat{\mathcal{Z}}(\beta)} \ket{0} = \exp{ \frac{1}{2} \hat{W}(\beta) } \ket{0} = \exp \qty{ \frac{1}{2} \sum_{\mathcal{C}} f(\mathcal{C}) X_{\mathcal{C}} } \ket{0} ,
\end{equation}
where $\beta \equiv - \log \tanh K = 2 \artanh[p/(1-p)]$. We see that $\ket{\phi(p)}$ is nearly identical to the deformed toric code state $\ket{\psi(\beta)}$, except that the deformation operator is $e^{\frac{1}{2} \hat{W}}$ instead of $e^{\hat{W}}$. By an identical construction as for $\ket{\psi(\beta)}$, we obtain the following parent Hamiltonian for $\ket{\phi(p)}$:
\begin{equation}
    H_{\text{dec}}(p) = \sum_e \qty( e^{- \sum_{\mathcal{C} \in \mathcal{S}_e} f(\mathcal{C}) X_{\mathcal{C}}} - Z_e ) ,
\end{equation}
which differs from the parent Hamiltonian $H(\beta)$ only by a factor of two inside the exponential term.

Although we will not explicitly prove that $H_{\text{dec}}(p)$ is gapped for sufficiently large $p$, it is clear that an essentially identical proof to the one of Appendices~\ref{app:locality_proof} and ~\ref{sec:proof_of_eq13} will lead to an analogous result: namely, $H_{\text{dec}}(p)$ can be shown to be local in the sense of Eqs.~\eqref{eq:hastsings_1}, \eqref{eq:hastings_2}, and \eqref{eq:hastings_3}, and therefore Theorem~\ref{theorem:BHM} proves that $H_{\text{dec}}(p)$ is gapped and exhibits the unique ground state $\ket{\phi(p)}$ for sufficiently large $p$.

\subsection{More General Decohered Toric Code States}
Thus far, we have studied the ++ state $\ket{\TC_{++}}$ under strong Pauli-$Z$ decoherence. It is interesting to consider more general initial toric code states, which can be written in the form
\begin{equation}
    \ket{\TC_{\Xi}} = \sum_{\ell} \Xi_{\ell} \overline{Z}_{\ell} \ket{\TC_{++}}, 
\end{equation}
where $\overline{Z}_{\ell} \equiv Z_{\mathcal{L}_{\ell}}$ are a collection of four logical-$Z$ operators (for example: $\overline{Z}_0 = 1$, $\overline{Z}_1 = Z_{\tilde{\mathcal{L}_x}}$ for a non-contractible $x$-loop $\tilde{\mathcal{L}}_x$, etc), and $\Xi_{\ell}$ are four complex numbers. Since each $\overline{Z}_{\ell}$ commutes with the Pauli-$Z$ decoherence channel, we can immediately write down the corresponding decohered state:
\begin{equation}
\label{eq:decohered_tc_generic}
    \begin{split}
        \rho_p \equiv \sum_{\ell \ell'} \Xi_{\ell} \Xi_{\ell'}^* \, \overline{Z}_{\ell} \, \mathcal{E}(\dyad{\TC_{++}}) \overline{Z}_{\ell'} &\propto \sum_{\ell \ell'} \Xi_{\ell} \Xi_{\ell'}^* \, \overline{Z}_{\ell} \, \qty( \sum_{\mathcal{L}} X_{\mathcal{L}} \dyad{\phi(p)} X_{\mathcal{L}} ) \overline{Z}_{\ell'} \\
        &= \sum_{\mathcal{L}} \dyad{\Xi_{\mathcal{L}}(p)} ,
    \end{split}
\end{equation}
where
\begin{equation}
    \ket{\Xi_{\mathcal{L}}(p)} \equiv \sum_{\ell} \Xi_{\ell} \overline{Z}_{\ell} X_{\mathcal{L}} \ket{\phi(p)} .
\end{equation}
For sufficiently large $p$ and generic complex coefficients $\Xi_{\ell}$, each state $\ket{\Xi_{\mathcal{L}}(p)}$ is once again SRE up to exponentially small corrections. To see this, recall that $\ket{\phi(p)} = e^{\frac{1}{2} \hat{W}} \ket{0}$ is a sum over $Z_e = -1$ loop configurations such that large loops are suppressed exponentially in their system size. Consequently, $\ket{\phi(p)}$ is approximately 1-form symmetric under non-contractible loops---that is, for each nontrivial logical operator $\overline{Z}_{\ell}$,
\begin{equation}
    \overline{Z}_{\ell} \ket{\phi(p)} = \ket{\phi(p)} + e^{-\mathcal{O}(L)} ,
\end{equation}
where the correction term is a vector of exponentially small norm (relative to that of $\ket{\phi(p)}$). Therefore, for generic $\Xi_{\ell}$, we have
\begin{equation}
    \ket{\Xi_{\mathcal{L}}(p)} \propto X_{\mathcal{L}} \ket{\phi(p)} + e^{-\mathcal{O}(L)} .
\end{equation}
As a result, $\ket{\Xi_{\mathcal{L}}(p)}$ is a SRE state, we find that once again that Eq.~\eqref{eq:decohered_tc_generic} constitutes an exact decomposition into SRE states for generic $\Xi_{\ell}$.

On the other hand, for certain special choices of the wavefunction amplitudes $\Sigma_{\ell}$, it is possible to obtain long-range entangled (LRE) states $\ket{\Xi_{\mathcal{L}}(p)}$. For example, consider the state with $(\Xi_0, \Xi_1, \Xi_2, \Xi_3) = (1,-1,0,0)$ and $\mathcal{L} = \emptyset$, given explicitly by
\begin{equation}
\label{eq:perverse_example}
    \ket{(1,-1,0,0)_{\emptyset}(p)} \equiv (1 - \overline{Z}_{\mathcal{L}_x}) \ket{\phi(p)} .
\end{equation}
This state is a sum over all loop configurations with a non-contractible $Z_e = -1$ loop about the $y$ cycle of the torus; it can be regarded as a ``loop-wave'' state, in the terminology of Ref.~\onlinecite{Sahay2025FunkyIsing}. Such states are expected to be LRE, and so Eq.~\eqref{eq:decohered_tc_generic} does not constitute an exact decomposition into SRE states. Nevertheless, since non-contractible loops comprise an $e^{-\mathcal{O}(L)}$ component of $\ket{\phi(p)}$, the norm of states such as Eq.~\eqref{eq:perverse_example} is $e^{-\mathcal{O}(L)}$. In terms of normalized SRE states $\ket{\psi_i^{\text{SRE}}}$ and normalized LRE states $\ket{\psi^{\text{LRE}}_j}$, the full density matrix $\rho_p$ therefore has the general structure
\begin{equation}
    \rho_p = \sum_i c_i \dyad{\psi_i^{\text{SRE}}} + \sum_j d_j \dyad{\psi_i^{\text{LRE}}}, \quad \sum_j d_j = e^{-\mathcal{O}(L)} .
\end{equation}

In other words, although $\rho_p$ is not exactly decomposed into a convex sum of SRE states (at least in this particular decomposition), the total probability of all LRE states to in this decomposition of $\rho_p$ is exponentially small; consequently, in the thermodynamic limit, $\rho_p$ cannot be distinguished from a genuine SRE state by any observable. In this specific sense, we can once again regard $\rho_p$ as an SRE mixed state.

\section{Verifying entanglement bootstrap axioms for the strongly deformed toric code}
In this Appendix, we provide an informal physicists' argument showing that the strongly deformed toric code satisfies the A0 and A1 axioms of entanglement bootstrap \cite{Shi_fusion_2020,Shi_Kim_2021,Kim_Lin_Ranard_Shi_2024} approximately, such that the axioms become exact as we scale up the partition size. 

We quickly recap the axioms through \figref{fig:entanglement_bootstrap_axioms}. In this section, we show that both $\Delta(B,C)$ of A0 and $\Delta(B,C,D)$ of A1 scale as $\sim r_C \exp(-2 \beta \ell)$, where $\ell$ is the width of the annulus and $r_C$ is the radius of the inner circle. To show this, we use the result of Ref.~\onlinecite{CastelnovoChamon2007DefToricCode} (see \cite{FradkinMoore2006} for a field-theoretic version) which states that, for any region $A$, we can write the Von Neumann entanglement entropy of the reduced density matrix $\rho_A = \Tr_{\overline{A}} \comm{\dyad{\psi(\beta)}}/\braket{\psi(\beta)}{\psi(\beta)}$ as
\begin{equation}
    S_A = - \sum_{\LL} \frac{e^{-\beta\abs{\LL}}}{\mathcal{Z}} \log\term{\frac{\mathcal{Z}^{\partial A}_{\LL}}{\mathcal{Z}}}, \qquad \mathcal{Z} \equiv \sum_{\LL} e^{-\beta \abs{\LL}}, \qquad \mathcal{Z}^{\partial A}_{\LL} \equiv \sum_{\substack{\LL': \\ \LL' = \LL \text{ on } \partial A}} e^{-\beta \abs{\LL'}}.
\end{equation}
The quantity $\mathcal{Z}^{\partial A}_{\LL}$ can be regarded as the partition function of an Ising model with its configuration on $\partial A$ constrained to agree with that of $\LL$. This relation maps the von Neumann entanglement entropy of the deformed toric code on a subsystem $A$ to the entropy of the subsystem spin configuration a classical Ising model\footnote{Technically the Ising$^\star$ model, i.e., an Ising model summed over both periodic and antiperiodic boundary conditions, so that domain walls may exhibit non-contractible loops.} at inverse temperature $\beta$, where the latter subsystem lives on the boundary $\partial A$. We refer the reader to \cite{CastelnovoChamon2007DefToricCode} for a derivation of the above result using the replica trick.

\begin{figure}
    \centering
    \includegraphics[width=0.8\linewidth]{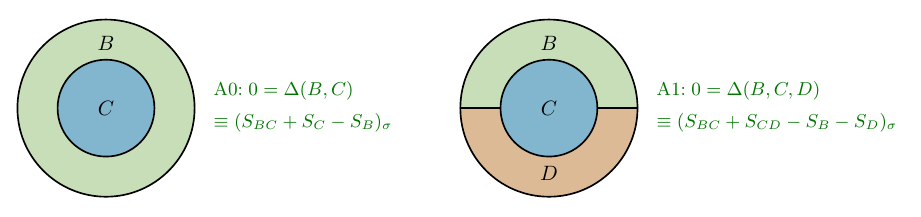}
    \caption{The two axioms of the entanglement bootstrap. Calling $A$ the region outside the annulus in either case, both axioms reduce to having vanishing conditional mutual information $I(A:C \vert B) = 0$. Away from fixed point gapped states, these axioms will only be approximately satisfied, i.e. $I(A:C\vert B) \sim  \exp(-\ell/ \xi)$, where $\ell$ is the width of the annulus and $\xi$ is a partition-independent length scale called the Markov length. In the (infrared) limit where the annulus is scaled up such that $\ell \gg \xi$, the axioms become exact. }
    \label{fig:entanglement_bootstrap_axioms}
\end{figure}

Using $p(\LL) = e^{-\beta \abs{\LL}}/\mathcal{Z}$ as a probability measure over the space of closed loops, we can write
\begin{subequations}
\begin{align}
    \Delta(B,C) &= S_{BC} + S_C - S_B \\
    &= -\mathbb{E}_{\LL}\comm{\log\term{\frac{\mathcal{Z}^{\partial(BC)}_\LL\mathcal{Z}^{\partial C}_\LL}{\mathcal{Z}^{\partial B}_\LL \mathcal{Z}}} }
\end{align}
\end{subequations}
We analyze the argument of the log within a large-$\beta$ (low-temperature in the Ising language) expansion. The highest weight configuration when we take the expectation value $\mathbb{E}_\LL$ over the probability measure $p(\LL)$ is the empty loop configuration $\LL = \{\}$. Correspondingly, the constrained partition functions $\mathcal{Z}^{\partial \mathcal{R}}_\LL$ sum over configurations with no loops cutting across the boundary $\partial \mathcal{R}$. In this sector, we see that all small loop configurations occur an equal number of times in the two `product' partition functions ${\mathcal{Z}^{\partial(BC)}_\LL\mathcal{Z}^{\partial C}_\LL}$ and ${\mathcal{Z}^{\partial B}_\LL \mathcal{Z}}$. The lowest order loop `diagram' that distinguishes the two is a thin connected loop that cuts across both boundaries of region $B$, as such a diagram cannot occur in ${\mathcal{Z}^{\partial(BC)}_\LL, \mathcal{Z}^{\partial C}_\LL}$, or ${\mathcal{Z}^{\partial B}_\LL}$ but can appear in ${ \mathcal{Z}}$. Within a large-$\beta$ (low temperature) expansion, such diagrams contribute as $\sim r_C e^{-2\beta \ell}$ to $\log\term{{\mathcal{Z}^{\partial(BC)}_\LL\mathcal{Z}^{\partial C}_\LL}/{\mathcal{Z}^{\partial B}_\LL \mathcal{Z}}}$, where $\ell$ is the width of the annulus $B$ and $r_C$ is the radius of the inner disk $C$.

For loop configurations $\LL$ consisting of sparse small connected loops, we anticipate that a similar conclusion holds, that  $\log\term{{\mathcal{Z}^{\partial(BC)}_\LL\mathcal{Z}^{\partial C}_\LL}/{\mathcal{Z}^{\partial B}_\LL \mathcal{Z}}} \sim r_C e^{-2\beta \ell}$. Since the low temperature expansion for $\mathbb{E}_\LL$ is dominated by such loop configurations, we argue that for sufficiently large $\beta$, we the deformed toric code satisfies the A0 axiom approximately, given by 
\begin{equation}
    \Delta(B,C) \sim r_C e^{-\ell / \xi},
\end{equation}
where $\xi \approx 1/2\beta$ may be renormalized from the value predicted by this low-temperature expansion by entropic effects that we ignore. Notably, as we uniformly scale up the partition $BC$, $\Delta(B,C)$ vanishes exponentially quickly in linear partition size. 

For the axiom A1, a very similar argument holds. Using the replica trick, we can write 
\begin{subequations}
\begin{align}
    \Delta(B,C,D) &= S_{BC} + S_{CD} - S_B - S_D \\
    &= -\mathbb{E}_{\LL}\comm{\log\term{\frac{\mathcal{Z}^{\partial(BC)}_\LL\mathcal{Z}^{\partial (CD)}_\LL}{\mathcal{Z}^{\partial B}_\LL \mathcal{Z}^{\partial D}_\LL}} }. \label{eq:A1replicaexpression}
\end{align}
\end{subequations}
One can argue that a very similar diagram (a thin connected loop) that cuts across both the inner and outer boundaries of $B$ ($D$) is the smallest ``distinguishing diagram'' that appears in no partition function in (\ref{eq:A1replicaexpression}) except $\mathcal{Z}^{\partial D}_{\LL}$ ($\mathcal{Z}^{\partial B}_{\LL}$) for typical loop configurations $\LL$. Therefore, we expect that, once again 
\begin{equation}
    \Delta(B,C,D) \sim r_C e^{-\ell / \xi}, \qquad \xi \approx 1/2\beta.
\end{equation}
As we increase the partition size compared to $\xi$, $A1$ becomes approximately satisfied.